\DeclareMathOperator*{\E}{\mathbb{E}}
\DeclareMathOperator*{\Haf}{\text{Haf}}
\newcommand{\cor}[1]{{\color{red}{#1}}}
\theoremstyle{plain}
\newtheorem{theorem}{Theorem}
\newtheorem{corollary}{Corollary}
\newtheorem{lemma}{Lemma}
\newtheorem{conjecture}{Conjecture}
\theoremstyle{definition}
\newtheorem{problem}{Problem}
\begin{document}
	
\definecolor{red}{RGB}{255,0,0}
\preprint{APS/123-QED}

\title{Sufficient conditions for hardness of lossy Gaussian boson sampling}

\author{Byeongseon Go}
\affiliation{NextQuantum Innovation Research Center, Department of Physics and Astronomy, Seoul National University, Seoul 08826, Republic of Korea}
\author{Changhun Oh}
\email{changhun0218@gmail.com}
\affiliation{Department of Physics, Korea Advanced Institute of Science and Technology, Daejeon 34141, Republic of Korea}
\author{Hyunseok Jeong}
\email{h.jeong37@gmail.com}
\affiliation{NextQuantum Innovation Research Center, Department of Physics and Astronomy, Seoul National University, Seoul 08826, Republic of Korea}

\begin{abstract}

Gaussian boson sampling (GBS) is a prominent candidate for the experimental demonstration of quantum advantage. 
However, while the current implementations of GBS are unavoidably subject to noise, the robustness of the classical intractability of GBS against noise remains largely unexplored.
In this work, we establish the complexity-theoretic foundations for the classical intractability of noisy GBS under photon loss, which is a dominant source of imperfection in current implementations. 
We identify the loss threshold below which lossy GBS maintains the same complexity-theoretic level as ideal GBS, and show that this holds when at most a logarithmic fraction of photons is lost.
We additionally derive an intractability criterion for the loss rate through a direct quantification of the statistical distance between ideal and lossy GBS.
This work presents the first rigorous characterization of classically intractable regimes of lossy GBS, thereby serving as a crucial step toward demonstrating quantum advantage with near-term implementations.

\end{abstract}

\maketitle

\emph{Introduction.---}
Gaussian boson sampling (GBS) is the task of sampling from the output distribution obtained by photon-number measurements on Gaussian states~\cite{hamilton2017gaussian, kruse2019detailed, deshpande2022quantum, grier2022complexity, lund2014boson}. 
Owing to its strong guarantees of classical intractability and experimental feasibility, GBS has become a prominent candidate for the near-term demonstration of quantum advantage, motivating substantial experimental progress to date~\cite{zhong2020quantum, zhong2021phase, madsen2022quantum, deng2023gaussian, liu2025robust}.
However, the presence of imperfections in experiments still remains a major challenge.
Physical imperfections can significantly reduce the computational complexity of boson sampling~\cite{qi2020regimes, villalonga2021efficient, bulmer2022boundary, liu2023simulating, oh2024classical, umanskii2024classical, shi2022effect, bulmer2024simulating, rahimi2016sufficient, renema2018efficient,  moylett2019classically, shchesnovich2019noise,  garcia2019simulating, oszmaniec2018classical,  brod2020classical,  oh2021classical, oh2023classical, oh2025classical, renema2018classical, renema2020simulability, van2024efficient}, and numerous classical algorithms have been developed to simulate GBS under such imperfections~\cite{qi2020regimes, villalonga2021efficient, bulmer2022boundary, liu2023simulating, oh2024classical, umanskii2024classical, shi2022effect, bulmer2024simulating, rahimi2016sufficient}. 
These results imply that increasing the noise rate reduces the complexity of GBS, eventually rendering it classically simulable beyond a certain threshold.
Namely, this threshold identifies the regime one must \textit{avoid} in order to achieve quantum advantage with GBS experiments, thereby providing a necessary but not sufficient condition for such demonstrations.

Taking it a step further, to provide a sufficient condition, it is crucial to identify the classically \textit{intractable} noise regime of GBS.
Specifically, as the noise rate in GBS decreases, one expects a noise threshold below which noisy GBS retains the hardness of ideal GBS.
However, in contrast to the numerous analyses on classical simulability of noisy GBS~\cite{qi2020regimes, villalonga2021efficient, bulmer2022boundary, liu2023simulating, oh2024classical, umanskii2024classical, shi2022effect, bulmer2024simulating, rahimi2016sufficient}, this opposite direction remains largely unknown.
Namely, we have limited knowledge of the boundary of the noise regime one must \textit{attain} in order to achieve quantum advantage.
In particular, establishing hardness result under \textit{photon loss} is especially crucial, since it is a leading cause of error in current GBS experiments~\cite{zhong2020quantum, zhong2021phase, madsen2022quantum, deng2023gaussian, liu2025robust}, and also a key ingredient enabling efficient classical simulation of noisy GBS~\cite{qi2020regimes, villalonga2021efficient, bulmer2022boundary, liu2023simulating, oh2024classical, umanskii2024classical}, thereby constituting the main obstacle to demonstrating quantum advantage experimentally.

In this work, we establish complexity-theoretic foundations on the classical intractability of GBS under photon loss, hereafter referred to as \textit{lossy GBS}. 
We characterize the loss threshold below which lossy GBS retains the same complexity-theoretic level as ideal GBS, showing that lossy GBS maintains the hardness of ideal GBS when the average number of lost photons is at most \textit{logarithmic} to the input photons.
Additionally, we present another loss threshold derived from information-theoretic bounds, offering an alternative approach to derive the hardness of lossy GBS. 
To the best of our knowledge, this work is the first to establish complexity-theoretic foundations that rigorously characterize the classically intractable regimes of lossy GBS.
Since recent quantum advantage experiments are largely GBS-based~\cite{zhong2020quantum, zhong2021phase, madsen2022quantum, deng2023gaussian, liu2025robust} and are primarily limited by photon loss~\cite{qi2020regimes, villalonga2021efficient, bulmer2022boundary, liu2023simulating, oh2024classical, umanskii2024classical}, we expect that our findings provide a crucial theoretical foundation for realizing quantum advantage under realistic experimental conditions.

\emph{Our settings.---}
We first introduce our lossy GBS setup, which is illustrated in Fig.~\ref{fig: schematics}(a).
We consider $M$ squeezed vacuum states with identical squeezing parameter $r$ as input, with $M$ being even for simplicity. 
We inject them into an $M$-mode random linear optical circuit, characterized by an $M$ by $M$ Haar-random unitary matrix $U$.
To describe photon loss, we adopt the beam splitter loss model~\cite{oszmaniec2018classical, garcia2019simulating, qi2020regimes, oh2024classical}, where the degree of loss is characterized by a uniform \textit{transmission rate} $\eta \in [0,1]$ (see Fig.~\ref{fig: schematics}(b)).
At the end, all modes are measured with photon-number detectors, where we consider \textit{post-selecting} $N$-photon outcomes with even $N$ for convenience.
The $N$-photon outcome can be represented as an $N$-dimensional vector $S = (s_1,\dots,s_N)$ with $1\leq s_i\leq M$ and $s_1\le s_2\le\cdots\le s_N$, where $s_i$ denotes the mode in which the $i$th photon is detected.

\begin{figure}[t]
\includegraphics[width=\linewidth]{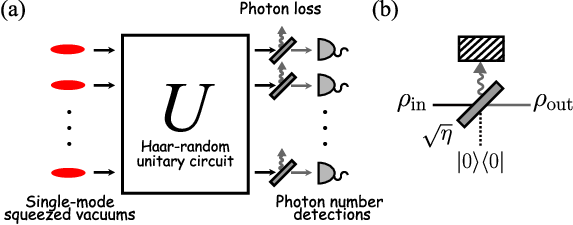}
\caption{(a) Schematic of our lossy GBS setup, composed of input $M$ squeezed vacuum states, an $M$-mode random linear optical circuit, beam splitter loss described in (b), and measurement of $N$ output photons.
(b) The beamsplitter loss model~\cite{oszmaniec2018classical, garcia2019simulating, qi2020regimes, oh2024classical, barnett1998quantum, demkowicz2015quantum}.
Each input mode interacts with an ancillary vacuum mode via a beamsplitter of transmittance $\sqrt{\eta}$, after which the ancillary mode is traced out, as indicated by the hatched square in the figure.
}
\label{fig: schematics}
\end{figure}

In this setting, we find that the output probability $p_{S}(\eta, U)$ of lossy GBS can be expressed as 
\begin{align}\label{noisypostselectedoutputprobability}
\begin{split}
p_{S}(\eta, U) &= \frac{1}{\mu(S)}\binom{\frac{M}{2} + \frac{N}{2} -1 }{\frac{N}{2}}^{-1}Q(\eta)^{-1} \\
&\times \Haf{\begin{pmatrix}
(UU^{T})_{S} & (1-\eta)\tanh{r}\mathbb{I}_{N} \\
(1-\eta)\tanh{r}\mathbb{I}_{N} & (U^{*}U^{\dag})_{S} 
\end{pmatrix}}    ,
\end{split}
\end{align}
where $\mu(S)$ denotes the product of the multiplicities of all entries in $S$ (e.g., $\mu(S) = 1$ for a collision-free $S$), and $(UU^T)_{S}$ is the $N$ by $N$ matrix obtained by taking rows and columns of $UU^T$ according to $S$  (similarly for $(U^*U^{\dag})_S$). 
Also, $\mathbb{I}_{N}$ is the $N$ by $N$ identity matrix, and the multiplicative factor $Q(\eta)$ in the right-hand side of Eq.~\eqref{noisypostselectedoutputprobability} is defined as 
\begin{align}
\begin{split}
Q(\eta) &\coloneqq (1-(1-\eta)^2\tanh^2{r})^{\frac{M}{2}+N} \\
&\times {}_2F_1\left(\frac{M + N}{2}, \frac{N + 1}{2}; \frac{1}{2};(1-\eta)^2\tanh^2r\right),
\end{split}
\end{align}
where ${}_2F_1(a,b;c;z)$ denotes the hypergeometric function~\cite{whittaker1920course}. 
We remark that when $\eta = 1$, implying that no loss has occurred, $p_{S}(\eta, U)$ in Eq.~\eqref{noisypostselectedoutputprobability} is identical to the output probability of (post-selected) ideal GBS~\cite{hamilton2017gaussian, kruse2019detailed, deshpande2022quantum}. 
A detailed derivation of $p_{S}(\eta, U)$ in Eq.~\eqref{noisypostselectedoutputprobability} is provided in Sec.~\ref{Appendix:section:outputprobabilityoflossyGBS} of the Supplemental Material.

To further specify our settings, let $M \propto N^{\gamma}$ with large $\gamma > 2$ to ensure the hiding property of GBS~\cite{jiang2009entries, shou2025proof} and the dominance of collision-free outcomes (by bosonic birthday paradox~\cite{aaronson2011computational, arkhipov2012bosonic, qi2020regimes, deshpande2022quantum}), both of which are crucial for the hardness analysis.
Also, following prior analyses of ideal GBS~\cite{hamilton2017gaussian, kruse2019detailed}, where the input squeezing is adjusted to set $N$ as the mean number of output photons, we adjust the input squeezing $r$ to satisfy $N = \eta M \sinh^2 r$ for given $N$, $M$, and $\eta$.
This ensures that the post-selection probability of observing $N$ photons satisfies $\Pr[N] \geq \Omega(\frac{1}{\sqrt{N}})$ (for details, see Sec.~\ref{Appendix:section:postselectionprobabilityoflossyGBS} in the Supplemental Material). 
Hence, as argued in~\cite{aaronson2016bosonsampling, hamilton2017gaussian, kruse2019detailed}, our hardness result on post-selected lossy GBS can carry over to the hardness of \textit{general} lossy GBS without post-selection.

\emph{Classical hardness of lossy GBS.---}
We now present our hardness result for lossy GBS, which is outlined in Fig.~\ref{fig: summary}.
To summarize our result, we show that when the transmission rate $\eta^*$ of lossy GBS satisfies $\eta^* \geq \eta_{\rm{th}}$ for a threshold $\eta_{\rm{th}}$, such lossy GBS is classically hard under certain complexity-theoretic assumptions.

\begin{figure}[b]
\includegraphics[width=0.85\linewidth]{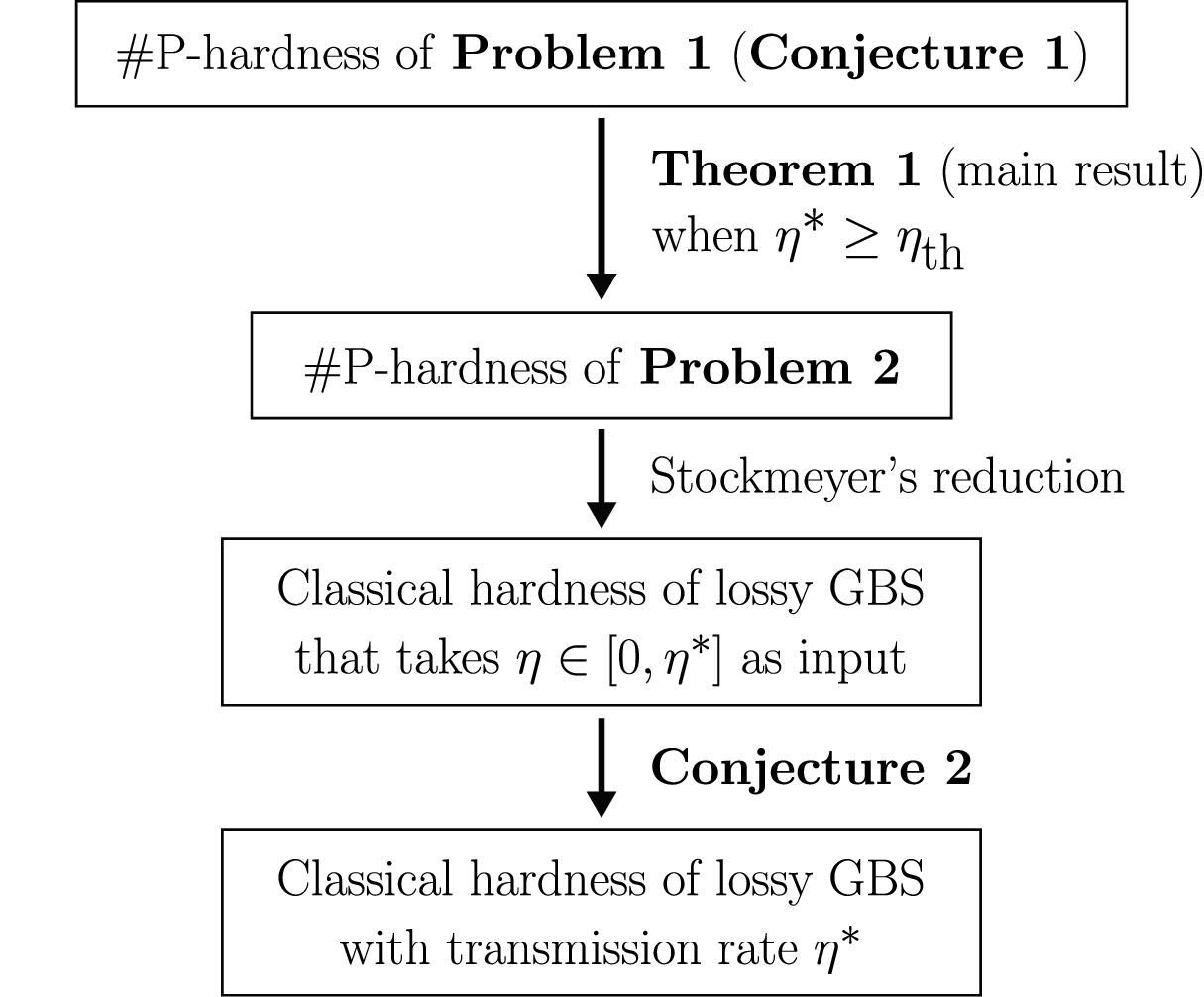}
\caption{Outline of our hardness analysis on lossy GBS.
To summarize, we show that when the transmission rate $\eta^*$ of lossy GBS exceeds a threshold $\eta_{\rm{th}}$ given in Theorem~\ref{thm: main result informal}, such lossy GBS is classically hard to simulate under certain complexity-theoretic assumptions. 
}
\label{fig: summary}
\end{figure}

We first sketch how to establish the classical hardness of sampling problems, i.e., approximate sampling to within total variation distance error.
The state-of-the-art proof technique is built upon the (conjectured) \#P-hardness of estimating average-case output probabilities~\cite{aaronson2011computational, hamilton2017gaussian, kruse2019detailed, bouland2019complexity, grier2022complexity, deshpande2022quantum, bremner2016average, movassagh2023hardness, kondo2022quantum, bouland2022noise, bouland2023complexity, bouland2024average, go2024exploring, go2024computational}. 
Specifically, given a polynomial-time classical algorithm for the approximate sampling, Stockmeyer's algorithm~\cite{stockmeyer1985approximation} enables one to \textit{well-estimate} average-case output probabilities over randomly chosen circuits in complexity $\rm{BPP}^{\rm NP}$.
Hence, by Stockmeyer's reduction, if this probability estimation task is proven to be \#P-hard, it follows that approximate sampling must be classically hard, unless the polynomial hierarchy collapses to $\rm{BPP}^{\rm NP}$.

For the classical hardness of ideal GBS, let us denote $p_{S}(U) \coloneqq p_{S}(1, U)$ as the ideal output probability. 
Then, the probability estimation task underlying the hardness of ideal GBS can be formalized as~\cite{aaronson2011computational, hamilton2017gaussian, kruse2019detailed, deshpande2022quantum, bouland2022noise, bouland2023complexity, bouland2024average}

\begin{problem}\label{problem: average-case for ideal GBS}
    On input a Haar-random unitary matrix $U$ and error bounds $\epsilon_0,\, \delta_0 > 0$, outputs an estimate of $p_{S}(U)$ for a collision-free $S$, to within $\pm \epsilon_0 \binom{M}{N}^{-1}$ over $1-\delta_0$ fraction of $U$.
\end{problem}

In line with previous hardness arguments on ideal GBS~\cite{hamilton2017gaussian, kruse2019detailed}, we conjecture that Problem~\ref{problem: average-case for ideal GBS} is \#P-hard.

\begin{conjecture}[\cite{hamilton2017gaussian, kruse2019detailed}]\label{conjecture: sharp-p-hard}
    Problem~\ref{problem: average-case for ideal GBS} is \#$\rm{P}$-hard.
\end{conjecture}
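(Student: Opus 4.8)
The plan is to follow the standard worst-case-to-average-case program pioneered for the permanent by Aaronson--Arkhipov~\cite{aaronson2011computational}, adapted here to the hafnian. First I would reduce Problem~\ref{problem: average-case for ideal GBS} to estimating a squared hafnian. Setting $\eta = 1$ in Eq.~\eqref{noisypostselectedoutputprobability} kills the off-diagonal blocks, so the $2N \times 2N$ matrix becomes block diagonal and, for even $N$, its hafnian factorizes as $\Haf((UU^{T})_{S})\,\Haf((U^{*}U^{\dag})_{S}) = |\Haf((UU^{T})_{S})|^2$, since $(U^{*}U^{\dag})_{S} = ((UU^{T})_{S})^{*}$ and the hafnian is a polynomial with real coefficients in the entries. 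Hence $p_{S}(U)$ is proportional to $|\Haf(A_U)|^2$ with $A_U = (UU^{T})_{S}$ an $N \times N$ submatrix, and the target precision $\epsilon_0\binom{M}{N}^{-1}$ translates into an additive-error requirement on $|\Haf(A_U)|^2$.

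Next I would invoke the hiding property~\cite{jiang2009entries, shou2025proof}: because $M \propto N^{\gamma}$ with $\gamma > 2$, the fixed-size submatrix $A_U$ of $UU^{T}$ for Haar-random $U$ is close in total variation to an $N \times N$ matrix $G$ of i.i.d.\ complex Gaussians. This recasts Problem~\ref{problem: average-case for ideal GBS} as the task of estimating $|\Haf(G)|^2$ to additive error for a $1-\delta_0$ fraction of Gaussian $G$. For the worst-case anchor I would appeal to the fact that exact evaluation of $\Haf(X)$ for an arbitrary matrix $X$ is \#P-hard. The heart of the argument is then a worst-case-to-average-case reduction by polynomial interpolation: given a worst-case $X$, form $X(t) = (1-t)X + tG$ so that $\Haf(X(t))$ is a degree-$N/2$ polynomial in $t$; evaluating the average-case estimator at many random $t$ for which $X(t)$ is distributed like $G$ would, by interpolation, recover $\Haf(X) = \Haf(X(0))$, showing the average-case task is \#P-hard unless the polynomial hierarchy collapses.

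The main obstacle---and precisely the reason this remains a \emph{conjecture} rather than a theorem---is that the interpolation argument is only known to succeed for \emph{exact} evaluations. With merely additive-error estimates of the \emph{modulus squared} $|\Haf(G)|^2$ over \emph{most} instances, two ingredients are still missing: a robust, noise-tolerant interpolation that controls error amplification across the degree-$N/2$ reconstruction, and an anti-concentration bound for $|\Haf(G)|^2$ ensuring that the additive scale $\binom{M}{N}^{-1}$ is comparable to typical output probabilities, so that additive precision carries genuine information. These are the hafnian analogues of the Permanent-of-Gaussians and anti-concentration conjectures, both of which remain open. Establishing them would upgrade Conjecture~\ref{conjecture: sharp-p-hard} to a theorem, but absent such progress I would present it, as the authors do, as a plausible hardness assumption supported by the exact-case reduction and the known worst-case \#P-hardness of the hafnian.
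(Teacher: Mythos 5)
This statement is a conjecture, and the paper offers no proof of it: it is adopted as a hardness assumption citing prior work, exactly as you conclude. Your supporting sketch (block-diagonalization at $\eta=1$ giving $|\Haf((UU^T)_S)|^2$, hiding into a Gaussian ensemble, and the missing robust-interpolation and anti-concentration ingredients) matches the paper's framing; the only slight inaccuracy is that the paper's formalization ($|\mathrm{GHE}|_{\pm}^{2}$ in the Supplemental Material) concerns $\Haf(XX^{T})$ for an $N\times M$ i.i.d.\ Gaussian $X$ (a Wishart-type product), not the hafnian of an $N\times N$ matrix of i.i.d.\ Gaussians, but this does not affect your correct assessment that the statement remains an unproven assumption.
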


Under Conjecture~\ref{conjecture: sharp-p-hard}, one can establish the hardness of simulating ideal GBS; further details are provided in Sec.~\ref{supplement: hardness proof of GBS} of the Supplemental Material.







%
For the hardness of lossy GBS, we now extend this argument to lossy GBS.
We similarly define the computational task of estimating the output probability $p_{S}(\eta, U)$ for lossy GBS.

\begin{problem}\label{problem: average-case for lossy GBS}
    On input a Haar-random unitary matrix $U$, a transmission rate $\eta \in [0, \eta^*]$ with a fixed $\eta^*$, and error bounds $\epsilon,\,\delta > 0$, outputs an estimate of $p_{S}(\eta, U)$ for a collision-free $S$, to within $\pm \epsilon \binom{M}{N}^{-1}$ over $1-\delta$ fraction of $U$.
\end{problem}

Notice that $\eta$ is now treated as an input parameter to Problem~\ref{problem: average-case for lossy GBS}, while $\eta^*$ denotes a fixed transmission rate characterizing the noise level of lossy GBS for which we establish the hardness result.

Then, our main result states that Problem~\ref{problem: average-case for lossy GBS} is \textit{as hard as} Problem~\ref{problem: average-case for ideal GBS} whenever $\eta^*$ exceeds a certain threshold.

\begin{theorem}\label{thm: main result informal}
    There exists a threshold transmission rate $\eta_{\rm{th}}$ satisfying
    \begin{align}\label{eq: noise threshold for hardness}
        (1-\eta_{\rm{th}})N = O(\log N),
    \end{align}
    such that when $\eta^* \geq \eta_{\rm{th}}$ and $\epsilon, \delta = {\rm{poly}}(N, \epsilon_0^{-1}, \delta_0^{-1})^{-1}$, Problem~\ref{problem: average-case for lossy GBS} is at least as hard as Problem~\ref{problem: average-case for ideal GBS}.
\end{theorem}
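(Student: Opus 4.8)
The plan is to reduce Problem~\ref{problem: average-case for ideal GBS} to Problem~\ref{problem: average-case for lossy GBS} by \emph{polynomial extrapolation} in the loss parameter. The starting point is that the ideal probability is the $\eta\to1$ limit of \eqref{noisypostselectedoutputprobability}: since $Q(1)=1$ and the off-diagonal blocks vanish, the Hafnian factorizes and returns $p_S(U)$. Writing $x\coloneqq(1-\eta)\tanh r$ and $\kappa\coloneqq\mu(S)\binom{\frac{M}{2}+\frac{N}{2}-1}{\frac{N}{2}}$, and stripping the explicitly computable $\eta$-dependent factor $Q(\eta)$ from \eqref{noisypostselectedoutputprobability}, I set
\begin{equation}
H(x)\coloneqq \kappa\,Q(\eta)\,p_S(\eta,U)=\Haf\!\begin{pmatrix}(UU^{T})_{S} & x\mathbb{I}_{N}\\ x\mathbb{I}_{N} & (U^{*}U^{\dagger})_{S}\end{pmatrix},
\end{equation}
so that $p_S(U)=p_S(1,U)=\kappa^{-1}H(0)$. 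Given an oracle for Problem~\ref{problem: average-case for lossy GBS}, for each queried $\eta_j$ I multiply its estimate of $p_S(\eta_j,U)$ by the computable factor $\kappa\,Q(\eta_j)$ to obtain an estimate of $H(x_{\eta_j})$, and the task becomes reconstructing $H(0)$ from these values.

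The decisive structural input is a \emph{parity} argument. Expanding the Hafnian over perfect matchings of the $2N$ indices, each edge joining the two diagonal blocks carries one factor of $x$ (the off-diagonal block being $x\mathbb{I}_N$), and a matching with $m$ such cross edges leaves $N-m$ indices to be paired within each block; since $N$ is even this forces $m$ to be even. Hence only even powers survive,
\begin{equation}
H(x)=\sum_{k=0}^{N/2}c_{2k}\,x^{2k},\qquad c_{0}=\bigl|\Haf((UU^{T})_{S})\bigr|^{2},
\end{equation}
so $H$ is a polynomial of degree $N/2$ in the single variable $y\coloneqq x^{2}=(1-\eta)^{2}\tanh^{2}r$. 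I would therefore query the oracle at $N/2+1$ values $\eta_1,\dots,\eta_{N/2+1}\in[0,\eta^{*}]$ chosen so that the corresponding $y_j=(1-\eta_j)^{2}\tanh^{2}r$ are Chebyshev nodes on $[(1-\eta^{*})^{2}\tanh^{2}r,\ \tanh^{2}r]$, and recover $H(0)$ by Lagrange extrapolation of this degree-$N/2$ polynomial in $y$ to $y=0$.

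The error analysis splits into two parts. First, since all queries share the same $U$, a union bound makes the reduction correct on a $1-(N/2+1)\delta$ fraction of $U$, so $\delta=\delta_0/\mathrm{poly}(N)$ suffices. Second, I would bound the amplification of the per-query error under extrapolation. By the extremal property of Chebyshev polynomials, a degree-$d$ polynomial bounded by $\nu$ on the sampling interval is bounded at an external point $z$ by $\nu\,|T_d(z')|$, with $z'$ the image of $z$ under the affine map onto $[-1,1]$; interpolation at Chebyshev nodes contributes only an extra $O(\log N)$ Lebesgue factor. Here $d=N/2$ and $y=0$ maps to $z'=-(1+\beta)$ with $\beta=\tfrac{2(1-\eta^{*})^{2}}{1-(1-\eta^{*})^{2}}$. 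Using $T_d(1+\beta)\approx\tfrac12\exp(d\sqrt{2\beta})$ and $\sqrt{2\beta}\approx2(1-\eta^{*})$ for $\eta^{*}$ near $1$ yields an amplification factor $\exp[(N/2)\cdot2(1-\eta^{*})]=\exp[N(1-\eta^{*})]$, which is polynomial in $N$ precisely when $N(1-\eta^{*})=O(\log N)$; this fixes the threshold $\eta_{\rm th}$ of \eqref{eq: noise threshold for hardness}. Taking $\epsilon$ smaller than $\epsilon_0$ divided by this amplification times $\max_j Q(\eta_j)$ and a polynomial overhead then makes $\hat p_S(U)=\kappa^{-1}\hat H(0)$ accurate to $\pm\epsilon_0\binom{M}{N}^{-1}$, establishing the reduction with $\epsilon,\delta=\mathrm{poly}(N,\epsilon_0^{-1},\delta_0^{-1})^{-1}$.

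The main obstacle is securing the \emph{logarithmic} threshold rather than a far more restrictive one, and this is exactly where the even-power structure earns its keep. Extrapolating naively in $\eta$ (or in $x$) would place the target at affine distance $\Theta(1-\eta^{*})$ from the sampling interval and involve degree $N$, giving amplification $\exp[\Theta(N\sqrt{1-\eta^{*}})]$ and hence only the vanishing tolerance $N^{2}(1-\eta^{*})=O(\log^{2}N)$. Passing to the squared variable $y=x^{2}$ simultaneously halves the degree and \emph{quadratically} shrinks the extrapolation gap to $\Theta((1-\eta^{*})^{2})$, upgrading the tolerable loss to the logarithmic regime. Making this rigorous requires (i) the sharp Chebyshev extrapolation estimate above with explicit constants, and (ii) polynomial control of the ancillary factors $Q(\eta_j)$ and $\kappa$ over the queried range, which I expect from their closed forms but must track carefully since they multiply the per-query error before amplification.
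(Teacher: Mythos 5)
Your high-level strategy coincides with the paper's: strip the computable factor $Q(\eta)$, observe that what remains is a polynomial containing only even powers of $(1-\eta)$ whose constant term is $\bigl|\Haf((UU^T)_S)\bigr|^2$, and recover the ideal value by Lagrange extrapolation to $\eta=1$. However, there is a genuine gap at the step you defer to item (ii): the claim that $Q(\eta_j)$ is polynomially controlled ``over the queried range.'' Your Chebyshev nodes in $y=(1-\eta)^2\tanh^2 r$ span the full interval up to $\tanh^2 r$, i.e.\ they include $\eta_j$ bounded away from $1$ (indeed near $0$). There $Q(\eta)$ is \emph{exponentially large in $N$}: with $\tanh^2 r=\tfrac{N}{\eta M+N}$ the hypergeometric factor alone satisfies ${}_2F_1\geq\cosh\bigl(\sqrt{MN}(1-\eta)\tanh r\bigr)=\Omega\bigl(e^{(1-\eta)N/\sqrt{\eta}}\bigr)$, while the prefactor contributes only $e^{-(1-\eta)^2N/(2\eta)}$, so e.g.\ $Q(1/2)\geq e^{\Omega(N)}$. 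Since the oracle's additive error is multiplied by $Q(\eta_j)$ before it ever reaches the extrapolation step, those nodes force $\epsilon=e^{-\Theta(N)}$, which destroys the polynomial reduction. If instead you restrict the nodes to the regime where $Q$ is polynomially bounded, namely $1-\eta=O(\log N/N)$, the sampling interval in $y$ shrinks to within a constant factor of the extrapolation gap, so $T_{N/2}(1+\Theta(1))=e^{\Theta(N)}$ and the blowup returns; optimizing the trade-off between the two effects yields at best $e^{\Theta(N\sqrt{1-\eta^*})}$, i.e.\ a threshold far weaker than Eq.~\eqref{eq: noise threshold for hardness}. Full-degree extrapolation, even in the squared variable, cannot reach the logarithmic regime.

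The ingredient you are missing is the paper's \emph{truncation} step. The paper shows (Lemma~\ref{lemma:distancebytrunctionissmall}, via the hiding reduction to Gaussian $X$, explicit moments of $\E_X\bigl[|\Haf((XX^T)_J)|^2\bigr]$, and Markov's inequality) that the coefficients of $(1-\eta)^{2n}$ beyond order $l$ contribute at most $\sim N k_{\rm max}^{l}/(2\delta\, l!)$ relative to the natural normalization, with high probability over $X$, whenever $\eta\geq\eta_{\rm min}$ with $(1-\eta_{\rm min})N=k_{\rm max}$. One may therefore replace $R_X(\eta)$ by a degree-$l$ polynomial with $l=O\bigl(\log\mathrm{poly}(N,\epsilon_0^{-1},\delta_0^{-1})\bigr)\gg k_{\rm max}$ that still has the correct value at $\eta=1$, interpolate it from $l+1$ nodes confined to the narrow window $[\eta_{\rm min},\eta^*]$ where $Q_{\rm max}\leq O(\sqrt{N}e^{k_{\rm max}})=\mathrm{poly}$, and pay only an $e^{O(l)}=\mathrm{poly}$ interpolation blowup. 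This is what simultaneously tames $Q$ and the extrapolation constant, and it is an average-case statement over the Gaussian ensemble rather than a pointwise polynomial identity — your argument as written has no analogue of it.
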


To prove Theorem~\ref{thm: main result informal}, we establish a complexity-theoretic reduction from Problem~\ref{problem: average-case for ideal GBS} to Problem~\ref{problem: average-case for lossy GBS}, similarly to the hardness proof of noisy Fock state boson sampling described in~\cite{aaronson2016bosonsampling, go2025quantum}.
Informally, we show that one can well-estimate the ideal value $p_{S}(U) = p_{S}(1, U)$ if one can well-estimate $p_{S}(\eta, U)$ for an arbitrarily given $\eta \in [0,\eta^*]$. 
Again, we note that the transmission rate $\eta \in [0, \eta^*]$ is treated as an input variable to Problem~\ref{problem: average-case for lossy GBS}.
This later plays a crucial role in proving Theorem~\ref{thm: main result informal}, as our reduction process involves inferring $p_{S}(U)$ from multiple values of $p_{S}(\eta, U)$, each corresponding to a different transmission rate $\eta$.

Before outlining the proof of Theorem~\ref{thm: main result informal}, we give a step-by-step explanation of how the theorem yields the classical hardness of lossy GBS, when its transmission rate $\eta^*$ exceeds $\eta_{\rm{th}}$ in Eq.~\eqref{eq: noise threshold for hardness}.
First, under Conjecture~\ref{conjecture: sharp-p-hard}, Theorem~\ref{thm: main result informal} establishes the \#P-hardness of Problem~\ref{problem: average-case for lossy GBS}.
Then, by directly following the previous hardness proofs based on Stockmeyer's reduction~\cite{aaronson2011computational, hamilton2017gaussian, kruse2019detailed, deshpande2022quantum, bouland2022noise, bouland2023complexity, bouland2024average}, this further implies the classical hardness of simulating lossy GBS that can choose \textit{any} transmission rate $\eta \in [0, \eta^*]$.
In other words, when $\eta^* \geq \eta_{\rm{th}}$, there exists at least one $\eta \in [0, \eta^*]$ such that simulating lossy GBS with transmission rate $\eta$ is classically hard.

However, this alone does not suffice to conclude that lossy GBS with the specific transmission rate $\eta^*$ is classically hard; an additional assumption is required.
Based on numerous observations~\cite{renema2018efficient,  moylett2019classically, shchesnovich2019noise, garcia2019simulating, oszmaniec2018classical,  brod2020classical,  oh2021classical, oh2023classical, oh2025classical, noh2020efficient, aharonov2023polynomial, gao2018efficient, bremner2017achieving, rajakumar2025polynomial, aharonov1996limitations, lee2025classical, muller2016relative, nelson2025limitations, oh2025recent}, increasing noise tends to make quantum systems more classically simulable, and lossy GBS likewise becomes easier to simulate as the transmission rate $\eta$ decreases~\cite{qi2020regimes, villalonga2021efficient, bulmer2022boundary, liu2023simulating, oh2024classical, umanskii2024classical}.
These observations naturally motivate the following physically plausible conjecture.

\begin{conjecture}[Loss reduces complexity of GBS]\label{conjecture: complexity decrease}
    Let the loss rate in lossy GBS be characterized by the uniform transmission rate $\eta \in [0,1]$. 
    Then, the computational complexity of simulating lossy GBS to within a fixed total variation distance decreases monotonically as $\eta$ decreases.
\end{conjecture}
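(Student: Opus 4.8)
The plan is to prove the conjecture by a \emph{complexity-theoretic reduction} showing that lossy GBS with more loss is no harder to simulate than lossy GBS with less loss, so that the complexity can only decrease as $\eta$ decreases. Concretely, fix two transmission rates $\eta_1 < \eta_2$ and write $\eta_1 = \eta' \eta_2$ with $\eta' = \eta_1/\eta_2 \in [0,1]$. The argument rests on three structural facts about the beam-splitter loss channel $\mathcal{L}_\eta$. First, pure loss forms a one-parameter semigroup, $\mathcal{L}_{\eta'} \circ \mathcal{L}_{\eta_2} = \mathcal{L}_{\eta_1}$, so the $\eta_1$-loss channel factors through the $\eta_2$-loss channel followed by extra loss $\eta'$. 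Second, \emph{uniform} loss on all modes commutes with the passive interferometer $U$, since the ancillary vacuum modes remain in vacuum under any unitary rearrangement; hence all the extra loss $\eta'$ may be pushed to act immediately before the photon-number detectors. Third, applying loss $\eta'$ and then counting photons is identical to counting photons first and then independently discarding each detected photon with probability $1-\eta'$ — that is, the lossy channel acts on photon-counting statistics as a \emph{binomial-thinning} map $\Phi_{\eta'}$, applied independently in each mode.

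Combining these, I would show that the full output distribution of lossy GBS obeys $P_{\eta_1}(U) = \Phi_{\eta'}\big[P_{\eta_2}(U)\big]$, where $\Phi_{\eta'}$ is a fixed stochastic map that is \emph{independent of $U$} and computable in polynomial time (sampling a binomial for each occupied mode). Given any classical algorithm $\mathcal{A}$ that samples from a distribution $\widehat{P}_{\eta_2}(U)$ with $\|\widehat{P}_{\eta_2}(U) - P_{\eta_2}(U)\|_{\mathrm{TV}} \le \epsilon$ in time $T$, post-composing $\mathcal{A}$ with the thinning step yields a sampler for $\Phi_{\eta'}[\widehat{P}_{\eta_2}(U)]$ in time $T + \mathrm{poly}(M,N)$. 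The data-processing inequality for total variation distance, $\|\Phi_{\eta'}[\mu] - \Phi_{\eta'}[\nu]\|_{\mathrm{TV}} \le \|\mu - \nu\|_{\mathrm{TV}}$, together with $\Phi_{\eta'}[P_{\eta_2}(U)] = P_{\eta_1}(U)$, then gives $\|\Phi_{\eta'}[\widehat{P}_{\eta_2}(U)] - P_{\eta_1}(U)\|_{\mathrm{TV}} \le \epsilon$. Hence any simulator for $\eta_2$-GBS within fixed TVD $\epsilon$ is promoted, at only polynomial overhead, to a simulator for $\eta_1$-GBS within the same TVD, establishing that the complexity is monotone non-increasing as $\eta$ decreases.

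The main obstacle is to make the statement ``computational complexity $\dots$ decreases monotonically'' precise enough for a rigorous claim. The thinning reduction cleanly yields \emph{weak} monotonicity up to an additive polynomial overhead, which is exactly what is needed to transport the hardness of some $\eta \in [0,\eta^*]$ up to $\eta^*$ in the complexity-theoretic (polynomial-versus-superpolynomial) sense; it does not by itself certify \emph{strict} monotonicity of any fine-grained runtime measure. A second, more technical subtlety is that the clean post-processing argument operates on the \emph{full} output distribution, whereas the hardness results in this work concern \emph{post-selected} GBS at a fixed photon number $N$: binomial thinning lowers the photon count, so transferring the monotonicity to the post-selected distribution (rather than the unconditioned one) requires an additional argument controlling how the post-selection probability $\Pr[N]$ behaves under the extra loss. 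I expect these formalization issues — not the physical mechanism, which the thinning identity makes transparent — to be the delicate part, which is why the monotonicity is stated as a plausible conjecture rather than a theorem.
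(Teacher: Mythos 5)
The paper does not prove this statement: it is introduced explicitly as an unproven, physically motivated assumption (Conjecture~\ref{conjecture: complexity decrease}), justified only by citing the general observation that noise tends to aid classical simulation. So there is no paper proof to compare against; what you have written is an attempt to upgrade the assumption to a theorem. Your three structural ingredients are individually correct: pure loss is a semigroup under composition of transmissivities, uniform loss commutes with the passive interferometer (the paper itself uses this in Sec.~\ref{Appendix:section:outputprobabilityoflossyGBS}), and loss followed by photon counting equals photon counting followed by independent binomial thinning. For a \emph{fixed input state}, the resulting data-processing argument does show that adding loss cannot increase the cost of sampling to within a given total variation distance, up to polynomial overhead.

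The genuine gap is that the paper's $\eta$-indexed family of lossy GBS instances is not a fixed-input-state family. Throughout the paper the squeezing is tied to the transmission rate by $N = \eta M \sinh^2 r$ (and the reduction in Sec.~\ref{supplement: main result} sets $\sinh^2 r = N/(\eta M)$ for each queried $\eta$), so decreasing $\eta$ simultaneously increases the input squeezing so as to keep the mean detected photon number equal to $N$. Your thinning map sends the $(\eta^*, r(\eta^*))$ distribution to the distribution with loss $\eta < \eta^*$ but squeezing still $r(\eta^*)$, whose mean photon number is $(\eta/\eta^*)N < N$; this is \emph{not} the paper's ``lossy GBS with transmission rate $\eta$,'' which has squeezing $r(\eta) > r(\eta^*)$. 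Closing that gap would require a separate monotonicity-in-squeezing statement, and squeezing acts on the input rather than the output, so it cannot be absorbed into an output-side stochastic post-processing map --- there is no data-processing inequality to invoke. This, rather than the post-selection subtlety you flag (which largely disappears if one works with the unconditioned distribution, as the paper ultimately does via its $\Pr[N] \geq \Omega(1/\sqrt{N})$ bound), is the substantive reason the statement remains a conjecture: your argument proves a true and useful neighbouring statement, but not the one the paper's corollary needs.
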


Then, under Conjecture~\ref{conjecture: complexity decrease}, if there exists $\eta \in [0, \eta^*]$ for which simulating lossy GBS is classically hard, it follows that simulating lossy GBS with the specific transmission rate $\eta^*$ is also classically hard.


By collecting all the preceding arguments, we finally arrive at the following corollary.

\begin{corollary}[Hardness of lossy GBS]
    Lossy GBS with transmission rate $\eta^*$ exceeding the threshold $\eta_{\rm{th}}$ in Eq.~\eqref{eq: noise threshold for hardness} is classically hard under Conjecture~\ref{conjecture: sharp-p-hard} and Conjecture~\ref{conjecture: complexity decrease}.
    In other words, such a lossy GBS preserves the classical hardness of ideal GBS under Conjecture~\ref{conjecture: complexity decrease},
\end{corollary}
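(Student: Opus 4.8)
The plan is to assemble the corollary from three ingredients already in place --- Theorem~\ref{thm: main result informal}, Conjecture~\ref{conjecture: sharp-p-hard}, and Conjecture~\ref{conjecture: complexity decrease} --- bridged by Stockmeyer's approximate counting. First I would invoke the hypothesis $\eta^*\ge\eta_{\rm th}$ with $(1-\eta_{\rm th})N=O(\log N)$ to activate Theorem~\ref{thm: main result informal}, which supplies a reduction showing Problem~\ref{problem: average-case for lossy GBS} is at least as hard as Problem~\ref{problem: average-case for ideal GBS} while keeping the error parameters $\epsilon,\delta$ inverse-polynomial. Combined with Conjecture~\ref{conjecture: sharp-p-hard}, asserting that Problem~\ref{problem: average-case for ideal GBS} is \#P-hard, this immediately upgrades to the \#P-hardness of Problem~\ref{problem: average-case for lossy GBS}.

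Next I would run the standard Stockmeyer argument in contrapositive. Suppose a polynomial-time classical algorithm approximately samples lossy GBS to within a fixed total variation distance while accepting the transmission rate $\eta\in[0,\eta^*]$ as an input. Then Stockmeyer's algorithm~\cite{stockmeyer1985approximation} estimates the output probabilities $p_S(\eta,U)$ over a $1-\delta$ fraction of $U$ to within $\pm\epsilon\binom{M}{N}^{-1}$ in $\rm{BPP}^{\rm NP}$, which would solve the \#P-hard Problem~\ref{problem: average-case for lossy GBS} and collapse the polynomial hierarchy to $\rm{BPP}^{\rm NP}$. Hence no such algorithm exists, so lossy GBS cannot be classically simulable uniformly over all of $[0,\eta^*]$; equivalently, there is at least one $\eta_0\in[0,\eta^*]$ at which approximate sampling is classically hard. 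The hiding property and collision-free dominance secured by $M\propto N^{\gamma}$ with $\gamma>2$, together with the post-selection bound $\Pr[N]\ge\Omega(1/\sqrt N)$, are what license carrying the post-selected hardness over to sampling without post-selection at this step.

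The remaining and conceptually most delicate step is to pass from ``hard at some unspecified $\eta_0\le\eta^*$'' to ``hard at the specific $\eta^*$.'' Here I would invoke Conjecture~\ref{conjecture: complexity decrease}: since the complexity of simulating lossy GBS to within a fixed total variation distance is monotone non-decreasing in $\eta$, and since $\eta^*\ge\eta_0$, classical hardness at $\eta_0$ propagates upward to classical hardness at $\eta^*$. This yields the corollary, and because Theorem~\ref{thm: main result informal} equates the hardness of Problem~\ref{problem: average-case for lossy GBS} with that of Problem~\ref{problem: average-case for ideal GBS}, lossy GBS at $\eta^*$ inherits exactly the classical hardness of ideal GBS.

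I expect the main obstacle to be precisely this existential-to-specific bridge, which is why Conjecture~\ref{conjecture: complexity decrease} is indispensable: the Stockmeyer route only certifies hardness somewhere in $[0,\eta^*]$, and absent a monotonicity principle the located hard instance could in principle sit at an $\eta_0$ strictly below $\eta^*$. I would also take care that the reduction of Theorem~\ref{thm: main result informal} treats $\eta$ as a genuine input, so that a single purported sampler covering all of $[0,\eta^*]$ is exactly what Stockmeyer contradicts. The heavier technical engine --- the interpolation that recovers the $\eta=1$ value of the Hafnian polynomial from its values at $\eta\le\eta^*$, whose error amplification grows on the order of $e^{\Theta((1-\eta^*)N)}$ and stays polynomial exactly when $(1-\eta^*)N=O(\log N)$ --- lives inside Theorem~\ref{thm: main result informal}, which I am permitted to assume here.
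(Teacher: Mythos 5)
Your proposal is correct and follows essentially the same route as the paper: Theorem~\ref{thm: main result informal} plus Conjecture~\ref{conjecture: sharp-p-hard} gives \#P-hardness of Problem~\ref{problem: average-case for lossy GBS}, Stockmeyer's reduction then certifies hardness at some $\eta\in[0,\eta^*]$, and Conjecture~\ref{conjecture: complexity decrease} bridges from that existential statement to the specific $\eta^*$, exactly as the paper argues. Your identification of the existential-to-specific step as the precise role of Conjecture~\ref{conjecture: complexity decrease} matches the paper's own emphasis.
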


We lastly highlight the implication of our hardness criterion in Theorem~\ref{thm: main result informal}.
Specifically, consider a general lossy GBS (i.e., without post-selection) with its transmission rate $\eta^* \geq \eta_{\rm{th}}$ and mean output photon number $N = \eta^* M \sinh^2 r$, for which our result ultimately implies the classical hardness.
Then, the mean number of lost photons $(\frac{1}{\eta^*} - 1)N$ is at most logarithmically related to the mean number of input photons $\frac{1}{\eta^*}N$. 
Therefore, our result implies the classical hardness of lossy GBS when at most a \textit{logarithmic} number of photons are lost on average.
This scaling of tolerable noisy photons for the hardness matches the scaling obtained for the hardness of Fock state boson sampling under partial distinguishability noise~\cite{go2025quantum}.

\begin{proof}[Proof Sketch of Theorem~\ref{thm: main result informal}]

We now describe how the ideal output probability $p_{S}(U) \coloneqq p_{S}(1, U)$ in Problem~\ref{problem: average-case for ideal GBS} can be inferred using the estimated values of lossy output probability $p_{S}(\eta, U)$ in Problem~\ref{problem: average-case for lossy GBS} for $\eta \in [0, \eta^*]$.

First, by the hiding property of GBS~\cite{jiang2009entries, shou2025proof}, for $M$-dimensional Haar-random unitary matrix $U$ and a collision-free outcome $S$, we have $(UU^{T})_{S} \simeq \frac{1}{M}XX^{T}$ for $N$ by $M$ i.i.d. Gaussian matrix $X \sim \mathcal{N}(0,1)_{\mathbb{C}}^{N \times M}$. 
Hence, by rescaling both $p_{S}(U)$ and $p_{S}(\eta, U)$ with the same multiplicative factors, the problem reduces to estimating the ideal value
\begin{align}\label{eq: rescaled ideal probability}
   P(X)  = \Haf{\begin{pmatrix}
XX^T & 0 \\
0 & X^*X^{\dag} 
\end{pmatrix}}
\end{align}
to within additive imprecision $\epsilon_0 \binom{M}{N}^{-1} M^N     \binom{\frac{M}{2} + \frac{N}{2} - 1}{\frac{N}{2}} \sim \epsilon_0\binom{\frac{M}{2} + \frac{N}{2} - 1}{\frac{N}{2}}N!$ over $1-\delta_0$ fraction of $X$, using the estimates
\begin{align}
&P(\eta, X) \nonumber \\ 
&=  Q(\eta)^{-1}  \Haf{\begin{pmatrix}
XX^T & (1-\eta)M\tanh{r}\mathbb{I}_{N} \\
(1-\eta)M\tanh{r}\mathbb{I}_{N} & X^*X^{\dag}  
\end{pmatrix}} \label{qqqe} \\
&= Q(\eta)^{-1}R_{X}(\eta) \label{qqe}
\end{align}
within additive imprecision $\epsilon\binom{\frac{M}{2} + \frac{N}{2} - 1}{\frac{N}{2}}N!$ over $1-\delta$ fraction of $X$, for $\eta \in [0, \eta^*]$.
Here, $R_{X}(\eta)$ denotes the hafnian term in the right-hand side of Eq.~\eqref{qqqe}, which gives the desired value $R_{X}(\eta) = P(X)$ at $\eta = 1$.

To proceed, we find that the multiplicative factor $Q(\eta)$ in the right-hand side of Eq.~\eqref{qqe} is efficiently computable to within a desired accuracy.
This allows one to estimate $R_{X}(\eta)$ for any $\eta \in [0, \eta^*]$ and over $1 - \delta$ fraction of $X$, by multiplying the computed $Q(\eta)$ to the estimated values of $P(\eta, X)$.
Moreover, we also find that $R_{X}(\eta)$ can be expressed as an $N$-degree polynomial in $\eta$ such that
\begin{align}\label{eq: series expansion}
     R_{X}(\eta) = c_0(1-\eta)^0 + c_2(1-\eta)^2 + \dots + c_N (1 - \eta)^{N} ,
\end{align}
where the zeroth-order term $c_0$ equals the ideal value $P(X)$.


Based on this understanding, our key idea is to use polynomial interpolation to infer the value at $\eta = 1$.
Here, since directly using the $N$-degree polynomial $R_{X}(\eta)$ for the polynomial interpolation would induce an exponential imprecision blowup of $e^{O(N)}$, we instead employ a \textit{low-degree} polynomial $R_{X}^{(l)}(\eta)$ with its degree $l$, obtained by truncating high-order terms beyond $(1-\eta)^{l}$ in $R_{X}(\eta)$.
Importantly, provided that $l \gg (1-\eta^*)N$, the low-degree polynomial $R_{X}^{(l)}(\eta)$ can be made arbitrarily close to $R_{X}(\eta)$ over a large fraction of $X \sim \mathcal{N}(0,1)_{\mathbb{C}}^{N \times M}$, for certain values of $\eta \in [0,\eta^*]$. 
Furthermore, as the truncation preserves the zeroth-order term of $R_{X}(\eta)$ (i.e., $c_0$ in Eq.~\eqref{eq: series expansion}), $R_{X}^{(l)}(\eta)$ still retains the ideal value $P(X)$ at $\eta = 1$.

\begin{figure}[t]
\includegraphics[width=0.95\linewidth]{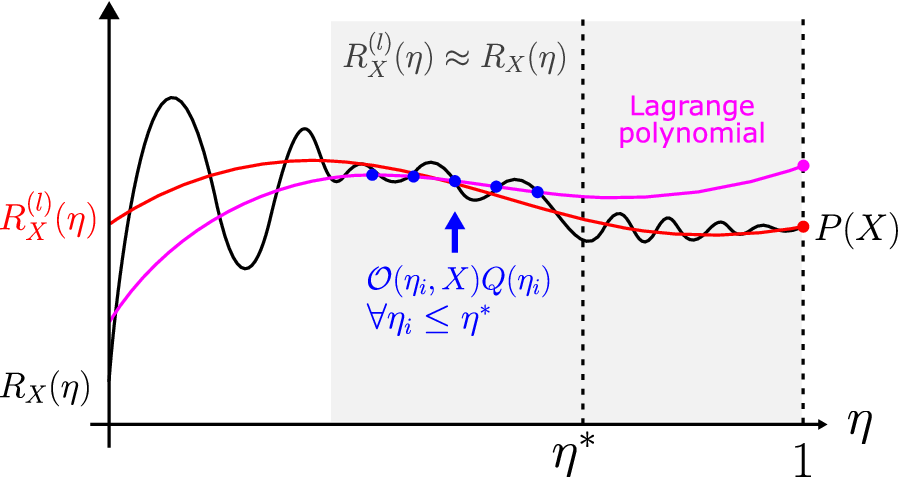}
\caption{Schematics for the proof of Theorem~\ref{thm: main result informal}. 
Given an oracle $\mathcal{O}$ that estimates $P(\eta,X)$ in Eq.~\eqref{qqe} for $\eta \leq \eta^*$ and $X \sim \mathcal{N}(0,1)_{\mathbb{C}}^{N \times M}$, one can estimate the $N$-degree polynomial $R_{X}(\eta)$ (black curve) through $\mathcal{O}(\eta, X)Q(\eta)$ (blue dots) by explicitly computing $Q(\eta)$.
This also allows estimation of the low-degree polynomial $R_X^{(l)}(\eta)$ (red curve) for $\eta$ satisfying $R_{X}^{(l)}(\eta) \approx R_{X}(\eta)$ (grey area) and $\eta \leq \eta^*$.
Using the estimated values of $R_{X}^{(l)}(\eta)$ for different $\eta$ values, one can construct the corresponding Lagrange interpolation polynomial (magenta curve), which in turn allows estimating the value $R_{X}^{(l)}(1) = P(X)$ in Eq.~\eqref{eq: rescaled ideal probability}.
}
\label{fig: proofsketch}
\end{figure}

In short, $P(X)$ can be inferred via performing polynomial interpolation of the low-degree polynomial $R_{X}^{(l)}(\eta)$, incurring an imprecision blowup of $e^{O(l)}$, from successfully estimated values of $P(\eta, X)$ (i.e., succeed over a large fraction of $X$) for certain values of $\eta \in [0, \eta^*]$.
Given that $\eta^* \geq \eta_{\rm{th}}$ for $\eta_{\rm{th}}$ satisfying Eq.~\eqref{eq: noise threshold for hardness}, the degree $l$ can be set to $O(\log \text{poly}(N, \epsilon_0^{-1}, \delta_0^{-1}))$ while satisfying $l \gg (1-\eta^*)N$.
This ensures that the required error bounds $\epsilon,\,\delta$ can be made polynomially related to $N^{-1}$, $\epsilon_0$, and $\delta_0$, thereby enabling the polynomial reduction.
The overall proof process is illustrated in Fig.~\ref{fig: proofsketch}.
For a detailed analysis, see Sec.~\ref{supplement: main result} in the Supplemental Material.





\end{proof}

\emph{Another hardness criterion for lossy GBS.---}
In addition to our main result, we derive a loss-rate threshold for the classical hardness of lossy GBS through an alternative approach.
Although the bound derived below yields a looser loss-rate threshold for the hardness compared with our main result in Theorem~\ref{thm: main result informal}, it still allows for further improvement and thus could offer an alternative path toward stronger hardness results of lossy GBS.

Previously, we obtained the loss-rate threshold by establishing a complexity-theoretic reduction, deriving the average-case hardness of output probability estimation for lossy GBS from that of ideal GBS. 
Instead, one would also consider a direct quantification of the statistical distance between ideal and lossy GBS using information-theoretic bounds.  
In fact, the effect of photon loss in ideal GBS can be characterized by the total variation distance from the ideal distribution.
Accordingly, the total variation distance between the output probability distribution of ideal and lossy GBS can be bounded in terms of transmission rate $\eta$, as formulated in the following lemma.


\begin{lemma}\label{theorem: total variation distance bound}
    The total variation distance between the output probability distribution of ideal and the lossy GBS is bounded by $\beta < 1$ when the transmission rate $\eta$ of the lossy GBS satisfies 
    \begin{align}\label{eq: loss bound for tvd error}
        (1-\eta)M\sinh^2 r \leq \beta^2 .
    \end{align}
\end{lemma}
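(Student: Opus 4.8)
The plan is to bound the total variation distance by the trace distance between the underlying output states and then exploit the Gaussian structure mode by mode. Writing $\rho$ for the ideal output state (the pure Gaussian state obtained by sending the $M$ squeezed vacua through the interferometer $U$) and $\sigma_\eta$ for its lossy counterpart, monotonicity of the trace distance under the photon-number measurement channel gives $\mathrm{TVD} \le \tfrac12\|\rho - \sigma_\eta\|_1$, so it suffices to control the trace distance of the states. The key simplification I would use is that uniform loss commutes with passive linear optics: applying the attenuation channel of transmittance $\eta$ to every mode before or after $U$ yields the same reduced state, since uniform attenuation acts as $\sqrt\eta$ times the identity on the mode vector and the vacuum environment is invariant under the induced passive transformation. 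Because the trace distance is unitarily invariant, this lets me pull the interferometer out and reduce the comparison to the product states $\rho = \bigotimes_{i=1}^M |r\rangle\langle r|$ and $\sigma_\eta = \bigotimes_{i=1}^M \mathcal{N}_\eta(|r\rangle\langle r|)$, where $\mathcal{N}_\eta$ is the single-mode loss channel, and this reduction holds for every fixed $U$.

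For the product states I would invoke the Fuchs--van de Graaf inequality $\tfrac12\|\rho-\sigma_\eta\|_1 \le \sqrt{1 - F(\rho,\sigma_\eta)^2}$. Because $\rho$ is pure and both states factorize over modes, the squared fidelity is simply a product of single-mode overlaps,
\[
F(\rho,\sigma_\eta)^2 = \mathrm{Tr}[\rho\,\sigma_\eta] = \big(\langle r|\mathcal{N}_\eta(|r\rangle\langle r|)|r\rangle\big)^{M}.
\]
I would evaluate the single-mode overlap with the standard Gaussian formula $\mathrm{Tr}[\rho_1\rho_2] = \det\!\big(\tfrac{\sigma_1+\sigma_2}{2}\big)^{-1/2}$ for zero-mean states (in the convention where the vacuum covariance matrix is the identity). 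Using the squeezed-state covariance matrix $\mathrm{diag}(e^{-2r},e^{2r})$ and the attenuator map $\sigma \mapsto \eta\sigma + (1-\eta)I$, a short determinant computation yields the closed form
\[
\langle r|\mathcal{N}_\eta(|r\rangle\langle r|)|r\rangle = \big(1 + (1-\eta^2)\sinh^2 r\big)^{-1/2},
\]
which I would sanity-check at $\eta=1$ (overlap $1$) and $\eta=0$ (overlap $1/\cosh r$, the squared vacuum--squeezed overlap).

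Finally I would close the bound with elementary estimates. Setting $x = (1-\eta^2)\sinh^2 r$, Bernoulli's inequality $(1+x)^{M/2}\ge 1 + \tfrac{M}{2}x$ gives $1 - F^2 = 1 - (1+x)^{-M/2} \le \tfrac{M}{2}x$, and $1-\eta^2 \le 2(1-\eta)$ then yields $1 - F^2 \le M(1-\eta)\sinh^2 r$. Under the hypothesis $(1-\eta)M\sinh^2 r \le \beta^2$ this is at most $\beta^2$, so $\mathrm{TVD} \le \sqrt{1-F^2} \le \beta$, as claimed. I expect the main obstacle to be the care needed in the two Gaussian-formalism steps: justifying rigorously that uniform loss commutes with the interferometer (so the reduction to a product state is exact for every $U$, not merely on average), and fixing the covariance-matrix conventions so that the overlap formula and the attenuator map are mutually consistent. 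Once these are pinned down, the remaining estimates are routine.
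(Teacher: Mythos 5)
Your proposal is correct and follows essentially the same route as the paper's proof: commute the uniform loss through the interferometer to reduce to a comparison of input product states, bound the total variation distance by the Fuchs--van de Graaf infidelity bound, evaluate the Gaussian overlap via the covariance-matrix determinant formula, and finish with elementary inequalities. The only (immaterial) differences are that you compute the per-mode determinant exactly as $1+(1-\eta^2)\sinh^2 r$ where the paper upper-bounds it by $(1+(1-\eta)\sinh^2 r)^2$ via AM--GM, and you close with Bernoulli's inequality rather than $(1+x/M)^M \le (1-x)^{-1}$.
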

\begin{proof}
    See Sec.~\ref{appendix: section: TVD between lossy GBS and ideal GBS} in the Supplemental Material.
\end{proof}

To sketch the proof, for any measurement basis, the total variation distance between two quantum states can be bounded by their quantum infidelity~\cite{fuchs1999cryptographic}.
Moreover, the quantum infidelity between the output states of ideal and lossy GBS can be formulated in terms of their covariance matrices~\cite{marian2012uhlmann, spedalieri2012limit}, with the latter depending on the transmission rate $\eta$.
Combining those two arguments yields Eq.~\eqref{eq: loss bound for tvd error}, which specifies the condition on $\eta$ to attain a desired total variation distance between ideal and lossy GBS.

In accordance with Theorem~\ref{theorem: total variation distance bound}, one can derive a condition for the classical hardness of simulating lossy GBS. 
Specifically, let $\eta$ satisfy Eq.~\eqref{eq: loss bound for tvd error}, such that the total variation distance between the ideal and lossy GBS is bounded by $\beta$. 
Then, if lossy GBS can be classically simulated within total variation distance $\beta_{1}$, the triangle inequality implies that the ideal GBS can also be simulated within distance $\beta + \beta_{1}$. 
Therefore, if classical simulation of the ideal GBS within total variation distance $\beta_{0}$ is hard, then simulating lossy GBS within distance $\beta_{1} \leq \beta_{0}$ is also hard, provided that $\beta \leq \beta_{0}-\beta_{1}$.

Combining Lemma~\ref{theorem: total variation distance bound} and the above argument results in the following theorem.

\begin{theorem}\label{corollary: simulation hardness}
    Suppose that $\beta_1 \leq \beta_{0} < 1$.
    Then, simulating lossy GBS within total variation distance $\beta_1$ is at least as hard as simulating ideal GBS within total variation distance $\beta_{0}$,  
    when the transmission rate $\eta$ of lossy GBS satisfies
    \begin{align}\label{eq: upper bound on number of lost photons}
        (1-\eta)M\sinh^2 r  \leq (\beta_0 - \beta_1)^2 .
    \end{align}
\end{theorem}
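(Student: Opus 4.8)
The plan is to obtain the claimed reduction directly from Lemma~\ref{theorem: total variation distance bound} together with the triangle inequality for total variation distance; the resulting reduction is sample-preserving, in the sense that the very same classical sampler used for lossy GBS is reused verbatim as a sampler for ideal GBS.

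First I would set $\beta \coloneqq \beta_0 - \beta_1$, which is nonnegative since $\beta_1 \leq \beta_0$ and satisfies $\beta < 1$ because $\beta_0 < 1$. Applying Lemma~\ref{theorem: total variation distance bound} with this value of $\beta$, the hypothesis $(1-\eta)M\sinh^2 r \leq (\beta_0 - \beta_1)^2 = \beta^2$ guarantees that the total variation distance between the output distribution $\mathcal{D}_{\mathrm{ideal}}$ of ideal GBS and the output distribution $\mathcal{D}_{\mathrm{lossy}}$ of lossy GBS with transmission rate $\eta$ is at most $\beta$.

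Next, suppose there is an efficient classical algorithm $\mathcal{A}$ that samples from a distribution $\mathcal{D}_{\mathcal{A}}$ with $\mathrm{TVD}(\mathcal{D}_{\mathcal{A}}, \mathcal{D}_{\mathrm{lossy}}) \leq \beta_1$. By the triangle inequality,
$$\mathrm{TVD}(\mathcal{D}_{\mathcal{A}}, \mathcal{D}_{\mathrm{ideal}}) \leq \mathrm{TVD}(\mathcal{D}_{\mathcal{A}}, \mathcal{D}_{\mathrm{lossy}}) + \mathrm{TVD}(\mathcal{D}_{\mathrm{lossy}}, \mathcal{D}_{\mathrm{ideal}}) \leq \beta_1 + \beta = \beta_0.$$
Thus $\mathcal{A}$ itself is an efficient classical simulator of ideal GBS to within total variation distance $\beta_0$. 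Consequently, any efficient algorithm for simulating lossy GBS within $\beta_1$ yields one for simulating ideal GBS within $\beta_0$, which is precisely the statement that the former task is at least as hard as the latter.

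I do not anticipate a substantive obstacle, since the reduction is purely information-theoretic and reuses the lossy sampler without modification; all of the analytic content is already carried out in Lemma~\ref{theorem: total variation distance bound}. The only point requiring care is that the triangle inequality is applied to distributions over a common sample space, so both $\mathcal{D}_{\mathrm{ideal}}$ and $\mathcal{D}_{\mathrm{lossy}}$ must be defined over the same set of photon-number outcomes. This holds in our setup, where under the normalization $N = \eta M \sinh^2 r$ both distributions are compared on the same $N$-photon outcome space and the post-selection is handled consistently, so the inequality applies verbatim.
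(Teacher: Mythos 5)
Your proposal is correct and matches the paper's own argument: the paper likewise sets $\beta = \beta_0 - \beta_1$, invokes Lemma~\ref{theorem: total variation distance bound} to bound the distance between the ideal and lossy distributions, and applies the triangle inequality so that the lossy sampler itself serves as a $\beta_0$-accurate simulator of ideal GBS.
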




By definition, the left-hand side of Eq.~\eqref{eq: upper bound on number of lost photons} indicates the mean number of lost photons for lossy GBS with transmission rate $\eta$. 
Therefore, Theorem~\ref{corollary: simulation hardness} indicates that, when simulating ideal GBS within total variation distance $\beta_0$ is hard, simulating lossy GBS within the distance $\beta_{1}$ is also hard when the mean number of lost photons is smaller than $(\beta_0 - \beta_1)^2$.
Since the classical hardness of ideal GBS requires inverse polynomial total variation distance, the bound in Eq.~\eqref{eq: upper bound on number of lost photons} gives the hardness result of lossy GBS only when the mean number of lost photons is far less than one.
This indeed yields a looser bound on the tolerable number of lost photons compared to Theorem~\ref{thm: main result informal}, which establishes hardness even under logarithmically scaling photon loss.

Nevertheless, we emphasize that the bound in Eq.~\eqref{eq: upper bound on number of lost photons} is not optimal, as the inequalities employed, including the relation between total variation distance and quantum infidelity, and the triangle inequality for total variation distance, are not necessarily tight.
Hence, it still remains open that the bound in Eq.~\eqref{eq: upper bound on number of lost photons} could be further optimized, thus providing an alternative approach toward stronger classical hardness results for lossy GBS.





\emph{Conclusion.---}
In this work, we established complexity-theoretic foundations for the classical hardness of lossy GBS. 
We explicitly characterized the polynomial structure of lossy output probability in terms of transmission rate~$\eta$, and established an extrapolation scheme that allows estimation of the ideal output probability from lossy output probability estimates. 
Consequently, we showed that lossy GBS remains as hard as ideal GBS when the average number of lost photons scales at most logarithmically with the number of input photons.
This scaling of noisy photons for hardness improves upon the previous hardness result for lossy Fock state boson sampling with $O(1)$ number of lost photons~\cite{aaronson2016bosonsampling}, and reproduces the scaling obtained for partially distinguishable Fock state boson sampling~\cite{go2025quantum}.
Since GBS and photon loss respectively represent the central platform and dominant imperfection in current quantum advantage experiments~\cite{zhong2020quantum, zhong2021phase, madsen2022quantum, deng2023gaussian, liu2025robust, qi2020regimes, villalonga2021efficient, bulmer2022boundary, liu2023simulating, oh2024classical, umanskii2024classical}, we believe that our results provide an essential theoretical framework for achieving quantum advantage under realistic experimental conditions.

We remark on an existing hardness argument for lossy GBS and compare it with our main result. 
Specifically, using the threshold theorem, Ref.~\cite{deshpande2022quantum} shows the \#P-hardness of estimating output probabilities of lossy GBS within additive error $2^{-\text{poly}(N)}$ for a sufficiently large $\text{poly}(N)$.
In contrast, we establish the classical intractability of lossy GBS up to ideal GBS, and characterize the intractable regimes in which our hardness result holds.

We also note a key distinction between GBS and Fock state boson sampling under photon loss, underscoring the necessity of our hardness analyses. 
For lossy Fock state boson sampling, one can post-select no-loss events by keeping only outcomes with output photon number equal to the input photon number, which directly yields a hardness result when at most a logarithmic fraction of input photons are lost~\cite{aaronson2016bosonsampling, go2025quantum}.
In contrast, such a post-selection trick is not possible for lossy GBS; hence, the hardness of lossy GBS cannot be obtained in this straightforward manner, necessitating new approaches as presented in this work.

There still remain important open problems. 
First, stronger hardness results beyond the logarithmic-loss regime are required to ultimately ensure the classical intractability of simulating near-term GBS experiments~\cite{zhong2020quantum, zhong2021phase, madsen2022quantum, deng2023gaussian, liu2025robust}.
Also, establishing hardness results for partial distinguishability noise is crucial, which is also a dominant source of imperfections in current experimental setups~\cite{zhong2020quantum, zhong2021phase, madsen2022quantum, deng2023gaussian, liu2025robust, chen2023scalable, young2024atomic}.
This could further lead to improved hardness results for noisy GBS, accounting for both photon loss and partial distinguishability of photons.


\emph{Acknowledgement.---}
B.G. and H.J. were supported by the Korean government (Ministry of Science and ICT~(MSIT)),
the NRF grants funded by the Korea government~(MSIT)~(Nos.~RS-2024-00413957,~RS-2024-00438415, and NRF-2023R1A2C1006115),~and the Institute of Information \& Communications Technology Planning \& Evaluation (IITP) grant funded by the Korea government (MSIT) (IITP-2025-RS-2020-II201606 and  IITP-2025-RS-2024-00437191).
C.O. was supported by the NRF Grants (No. RS-2024-00431768 and No. RS-2025-00515456) funded by the Korean government (MSIT) and IITP grants funded by the Korea government (MSIT) (No. IITP-2025-RS-2025-02283189 and  IITP-2025-RS-2025-02263264).

\let\oldaddcontentsline\addcontentsline
\renewcommand{\addcontentsline}[3]{}
\bibliographystyle{unsrt}
\bibliography{Reference.bib}
\let\addcontentsline\oldaddcontentsline
\onecolumngrid

\clearpage
\widetext
\begin{center}
\textbf{\large Supplemental Material for ``Sufficient conditions for hardness of lossy Gaussian boson sampling''}
\end{center}

\setcounter{page}{1}
\renewcommand{\thesection}{S\arabic{section}}
\renewcommand{\theequation}{S\arabic{equation}}
\renewcommand{\thefigure}{S\arabic{figure}}
\renewcommand{\thetable}{S\arabic{table}}

\addtocontents{toc}{\protect\setcounter{tocdepth}{0}}

\tableofcontents


\clearpage

\section{Related results and our contribution}\label{section: related works}

This section remarks on the previous analyses related to our result and clarifies our contribution.

One of the most relevant results to our work is Ref.~\cite{aaronson2016bosonsampling}, which shows that lossy Fock state boson sampling with a fixed $O(1)$ number of lost photons over $N_0$ input photons maintains the same complexity level as the ideal Fock state boson sampling.
Subsequently, Ref.~\cite{go2025quantum} similarly establishes the hardness result for partial distinguishability noise, claiming that $O(\log N_0)$ distinguishable photons out of $N_0$ input photons still preserve the classical hardness of ideal Fock state boson sampling. 
Both analyses share a common high-level strategy: by exploiting the polynomial structure of noisy output probability, they construct an extrapolation scheme that enables the estimation of ideal output probability from noisy output probability estimates, thereby proving that the noisy sampler remains as hard to simulate as its ideal counterpart within a certain noise regime.

We extend this framework to GBS under photon loss by adopting a similar high-level strategy while overcoming several new technical challenges.
Specifically, we explicitly formulate the lossy output probability, characterize its polynomial dependence on the transmission rate $\eta$, and construct an extrapolation scheme that connects the lossy and ideal output probabilities.
We also establish a lower bound on the post-selection probability that allows us to fix the output photon number, which is crucial for the hardness proof of samplers with varying output photon numbers.
As a consequence, we prove that lossy GBS with an average photon loss of $O(\log N_0)$ out of $N_0$ input photons retains the same complexity-theoretic level as ideal GBS.
This scaling of noisy photons in our work improves upon the previous result for lossy Fock state boson sampling~\cite{aaronson2016bosonsampling}, and matches the scaling for partially distinguishable Fock state boson sampling~\cite{go2025quantum}.
As GBS serves as the primary experimental platform and photon loss remains the dominant imperfection in current quantum advantage experiments~\cite{zhong2020quantum, zhong2021phase, madsen2022quantum, deng2023gaussian, liu2025robust, qi2020regimes, villalonga2021efficient, bulmer2022boundary, liu2023simulating, oh2024classical, umanskii2024classical}, our result complements those previous hardness results~\cite{aaronson2016bosonsampling, go2025quantum} by addressing a more experimentally relevant setting for achieving quantum advantage.

Another related line of research is the classical simulability results under photon loss noise. 
In the asymptotic regime, Refs.~\cite{oszmaniec2018classical, garcia2019simulating} show the classical simulability results for the lossy Fock state boson sampling, claiming that it becomes efficiently simulable when $O(\sqrt{N_0})$ photons survive over $N_0$ input photons. 
Ref.~\cite{qi2020regimes} generalizes these results to lossy GBS, showing the classical simulability of lossy GBS when the average number of surviving photons scales $O(\sqrt{N_0})$ for $N_0$ input photons. 
In the finite-sized regime, by using the effect of photon loss on Gaussian boson sampling, Refs.~\cite{villalonga2021efficient, bulmer2022boundary, oh2024classical} present classical algorithms that can reproduce the recent Gaussian boson sampling experiments~\cite{zhong2020quantum, zhong2021phase, madsen2022quantum, deng2023gaussian}.
These simulability results suggest a noise threshold for the classical simulability of lossy GBS, providing a necessary condition for achieving quantum advantage. 
In contrast, our result provides a noise threshold for the classical \textit{intractability} of lossy GBS.
Definitely, this complements the existing simulability results by providing a sufficient condition for the loss rate to achieve quantum advantage.
Moreover, our work characterizes the complexity gap between the classically simulable noise regime with $O(\sqrt{N_0})$ surviving photons~\cite{qi2020regimes} and the classically intractable noise regime with $O(\log N_0)$ lost photons in lossy GBS.
This provides deeper insights into the computational complexity of lossy GBS and can help identify the precise transition point between the classical and quantum regimes.

\section{Deriving output probability distribution of lossy GBS}\label{Appendix:section:outputprobabilityoflossyGBS}

We first introduce the output distribution of an $M$-mode Gaussian state measured by the photon number basis, i.e., the general output distribution of GBS. 
Here, we only consider the mean vector of the Gaussian state as zero.
We denote $\Sigma$ as a covariance matrix in terms of $\alpha\alpha^*$ basis (as depicted in~\cite{hamilton2017gaussian}), and denote $\sigma$ as a covariance matrix in terms of $xp$ basis.
We denote $S = (s_1,\dots,s_N)$ as an $N$-photon outcome from GBS, such that $1\leq s_i\leq M$ and $s_1\le s_2\le\cdots\le s_N$, where $s_i$ corresponds to a mode number where an $i$th photon is measured.
For an $M$ mode Gaussian state with zero-mean and covariance matrix $\Sigma$, the output probability $p_S$ of obtaining an $N$-photon outcome $S$ can be expressed as~\cite{hamilton2017gaussian, kruse2019detailed} 
\begin{align}\label{appendix:outputprobabilityofgaussianstate}
    q_S = \frac{1}{\mu(S)}\frac{1}{\sqrt{|\Sigma_Q|}}\Haf(A_{S\oplus S}),
\end{align}
where $\mu(S)$ denotes a product of the multiplicity of all possible values in $S$ (i.e., $\mu(S) = 1$ for collision-free outcome $S$), $\Sigma_Q = \Sigma + \frac{1}{2}\mathbb{I}_{2M}$ denotes Q-representation of covariance matrix (where $\mathbb{I}_{2M}$ denotes $2M$ by $2M$ identity matrix), $|\Sigma_Q|$ = $\det(\Sigma_Q)$, and 
\begin{align}
    A = X_{2M}(\mathbb{I}_{2M} - \Sigma_Q^{-1}) \quad\text{where}\quad X_{2M} = \begin{pmatrix}
        0 & \mathbb{I}_{M} \\ 
        \mathbb{I}_{M} & 0
    \end{pmatrix},
\end{align}
and $A_{S\oplus S}$ is a $2N$ by $2N$ matrix defined by taking $j$-th and $(j + M)$-th rows and columns of $A$ for all $j \in S$.

\subsection{Without post-selection}

We first investigate the output probability of lossy GBS without post-selection.
To obtain the analytic form of the output probability, we aim to figure out $|\Sigma_Q|$ and $A$ in Eq.~\eqref{appendix:outputprobabilityofgaussianstate} for the output state of our lossy GBS setup. 
Specifically, we consider our lossy GBS setup as follows.
We first prepare input $M$ squeezed vacuum states ($p$-squeezing) with equal squeezing parameters $r$.
For this input state, the initial covariance matrix $\sigma_{0}$ in $xp$ basis is
\begin{align}\label{eq: input xp covariance matrix before loss}
    \sigma_{0} = \begin{pmatrix}
        \frac{ e^{2r}}{2} \mathbb{I}_{M} & 0 \\
        0 & \frac{ e^{-2r}}{2} \mathbb{I}_{M}
        \end{pmatrix} .
\end{align}
Then, this input state is evolved by an $M$-mode linear optical network (corresponding to $M$ by $M$ unitary matrix $U$). 
For the photon loss model, we consider the conventional beamsplitter loss model in Ref.~\cite{oszmaniec2018classical, garcia2019simulating, qi2020regimes} where the degree of noise is characterized by a uniform transmission rate $\eta$. 
Notice that, this loss channel can commute with any linear optical elements, and thus, we can instead consider all the loss occurring at the input state. 
Therefore, when photon loss with the uniform transmission rate $\eta$ is applied to the input state, the input covariance matrix $\sigma_{0}$ in $xp$ basis transforms to the noisy input covariance matrix $\sigma_{\text{in}}$ defined as~\cite{weedbrook2012gaussian}
\begin{align}\label{eq: input xp covariance matrix after loss}
\sigma_{0} \Rightarrow \sigma_{\text{in}} = \eta \sigma_{0} + (1-\eta)\frac{1}{2}\mathbb{I}_{2M} = \begin{pmatrix}
\frac{\eta e^{2r} + (1-\eta)}{2} \mathbb{I}_{M} & 0 \\
0 & \frac{\eta e^{-2r} + (1-\eta)}{2} \mathbb{I}_{M}
\end{pmatrix} ,
\end{align}
and accordingly, the corresponding Q-representation is
\begin{align}\label{appendix:sigmaQ}
\sigma_{\text{in},Q} = \sigma_{\text{in}} + \frac{1}{2}\mathbb{I}_{2M} = \begin{pmatrix}
\left(1 + \frac{\eta}{2}(e^{2r}-1)\right)\mathbb{I}_{M} & 0 \\
0 & \left(1 + \frac{\eta}{2}(e^{-2r}-1)\right)\mathbb{I}_{M}
\end{pmatrix} .
\end{align}
Note that the unitary transformation of the covariance matrix does not vary its determinant. 
Therefore, the determinant of the output covariance matrix $|\Sigma_Q|$ after unitary evolution by $U$ is equivalent to $|\sigma_{\text{in},Q}|$, and using Eq.~\eqref{appendix:sigmaQ}, one can find that
\begin{align}
    |\Sigma_Q| &= |\sigma_{\text{in},Q}| \\
    &= (1 + \frac{\eta}{2}(e^{2r}-1))^{M}(1 + \frac{\eta}{2}(e^{-2r}-1))^{M} \\
    &= \left((1-\frac{\eta}{2})e^{-r} + \frac{\eta}{2}e^r\right)^M\left((1-\frac{\eta}{2})e^{r} + \frac{\eta}{2}e^{-r}\right)^M \\
    &= \left(\frac{e^r + e^{-r}}{2} - (1 - \eta)\frac{e^r - e^{-r}}{2}\right)^M\left(\frac{e^r + e^{-r}}{2} + (1 - \eta)\frac{e^r - e^{-r}}{2}\right)^M \\
    &= \cosh^{2M}{r}\left(1 - (1-\eta)^2\tanh^2{r} \right)^M .
\end{align}
We now examine $A$ matrix for the output state of our lossy GBS setup.
From Eq.~\eqref{appendix:sigmaQ}, we have
\begin{align}
\mathbb{I}_{2M} - \sigma_{\text{in},Q}^{-1} = \begin{pmatrix}
\frac{e^{2r}-1}{e^{2r} + \frac{2}{\eta}-1}\mathbb{I}_M & 0 \\
0 & \frac{e^{-2r}-1}{e^{-2r} + \frac{2}{\eta} - 1}\mathbb{I}_M
\end{pmatrix} .
\end{align}
By changing the basis from $xp$ to $\alpha\alpha^*$, the above term transforms as
\begin{align}
\mathbb{I}_{2M} - \Sigma_{\text{in},Q}^{-1} &= \frac{1}{(e^{r} + (\frac{2}{\eta}-1)e^{-r}) (e^{-r} + (\frac{2}{\eta}-1)e^{r})}\begin{pmatrix}
(\frac{1}{\eta}-1)(e^r-e^{-r})^2\mathbb{I}_M & \frac{1}{\eta}(e^{2r}-e^{-2r})\mathbb{I}_M \\
\frac{1}{\eta}(e^{2r}-e^{-2r})\mathbb{I}_M  & (\frac{1}{\eta}-1)(e^r-e^{-r})^2\mathbb{I}_M
\end{pmatrix} \\
&=  \frac{e^{2r}-e^{-2r}}{\eta(e^{r} + (\frac{2}{\eta}-1)e^{-r}) (e^{-r} + (\frac{2}{\eta}-1)e^{r})}\begin{pmatrix}
(1-\eta)\tanh{r}\mathbb{I}_M & \mathbb{I}_M \\
\mathbb{I}_M  & (1-\eta)\tanh{r}\mathbb{I}_M
\end{pmatrix} \\
&=  \frac{\eta(e^{r}-e^{-r})(e^{r}+e^{-r})}{(e^{r}+e^{-r} -(1-\eta)(e^{r}-e^{-r})) (e^{r}+e^{-r} + (1-\eta)(e^{r}-e^{-r}))}\begin{pmatrix}
(1-\eta)\tanh{r}\mathbb{I}_M & \mathbb{I}_M \\
\mathbb{I}_M  & (1-\eta)\tanh{r}\mathbb{I}_M
\end{pmatrix} \\
&= \frac{\eta\tanh{r}}{1-(1-\eta)^2\tanh^2{r}}\begin{pmatrix}
(1-\eta)\tanh{r}\mathbb{I}_M & \mathbb{I}_M \\
\mathbb{I}_M  & (1-\eta)\tanh{r}\mathbb{I}_M
\end{pmatrix} .
\end{align}
Finally, by applying unitary transformation in $\alpha\alpha^*$ basis that corresponds to the unitary matrix $U$, we obtain
\begin{align}
\mathbb{I}_{2M} - \Sigma_{Q}^{-1} &= \begin{pmatrix}
U &  0\\
0   & U^{*}
\end{pmatrix}
\left(\mathbb{I}_{2M} - \Sigma_{\text{in},Q}^{-1}\right)\begin{pmatrix}
U^{\dag} &  0\\
0   & U^{T}
\end{pmatrix} \\
&= \frac{\eta\tanh{r}}{1-(1-\eta)^2\tanh^2{r}}\begin{pmatrix}
(1-\eta)\tanh{r}\mathbb{I}_M & UU^T  \\
U^*U^{\dag} & (1-\eta)\tanh{r}\mathbb{I}_M
\end{pmatrix} ,
\end{align}
and thus the matrix $A$ can be represented as
\begin{align}
A = \frac{\eta\tanh{r}}{1-(1-\eta)^2\tanh^2{r}}\begin{pmatrix}
U^*U^{\dag} & (1-\eta)\tanh{r}\mathbb{I}_M \\
(1-\eta)\tanh{r}\mathbb{I}_M  & UU^T
\end{pmatrix} .
\end{align}
Substituting these results in Eq.~\eqref{appendix:outputprobabilityofgaussianstate}, the output probability of lossy GBS to obtain $N$-photon outcome $S$ can be expressed as
\begin{equation}\label{appendix: eq: output probability without postselection}
    q_S(\eta, U) = \frac{1}{\mu(S)}\frac{\eta^{N}\tanh^{N}{r}}{\cosh^{M}{r}(1-(1-\eta)^2\tanh^2{r})^{\frac{M}{2}+N}} \Haf{\begin{pmatrix}
(UU^{T})_{S} & (1-\eta)\tanh{r}\mathbb{I}_{N} \\
(1-\eta)\tanh{r}\mathbb{I}_{N} & (U^{*}U^{\dag})_{S} 
\end{pmatrix}} ,
\end{equation}
where $(UU^{T})_{S}$ is an $N$ by $N$ matrix defined by taking rows and columns of $UU^{T}$ according to $S$ (similarly for $(U^*U^{\dag})_S$).
Here, we interchanged the location of $(UU^{T})_{S}$ and $(U^*U^{\dag})_S$ for convenience, employing the fact that the output hafnian value is invariant under such transformation.

\subsection{With post-selection}

We now derive the output probability of lossy GBS with post-selecting $N$-photon outcomes, which we use throughout this work for the hardness arguments.
To do so, we first figure out the probability $\Pr[N]$ of generating $N$ photons from our lossy GBS setup, which can be obtained by considering all possible input photon configurations.
Note that throughout our analysis, we only consider $N$ as even, but one can easily extend this result to the odd $N$ case.

Considering that only an even number of input photons is allowed for squeezed vacuum inputs, to obtain a total $N$ output photons in the lossy environment, the possible input photon numbers over $M$ equally squeezed vacuum states are $N$, $N+2$, $N+4$, and so on. 
For $M$ equally squeezed vacuum states, the probability to generate $N + 2n$ input photons (where $n \in \{0,1,2,\dots\}$) follows negative binomial distribution as~\cite{hilbe2011negative}
\begin{align}
\frac{\tanh^{N+2n}{r}}{\cosh^{M}{r}}\binom{\frac{M}{2} + \frac{N}{2} + n - 1}{\frac{N}{2} + n}.
\end{align}
Because the final output photon number is given as $N$, among the $N+2n$ input photons, $N$ photons survive with each probability $\eta$ and $2n$ photons are lost with each probability $1 - \eta$.
Hence, for $N + 2n$ input photons, the probability to obtain $N$ output photons is given by the binomial distribution as $\binom{N + 2n}{2n}\eta^N(1-\eta)^{2n}$. 
Combining these arguments, $\Pr[N]$ can be represented as
\begin{align}
   \Pr[N] &= \eta^{N}\frac{\tanh^{N}r}{\cosh^{M}r}\sum_{n=0}^{\infty} \binom{\frac{M}{2} + \frac{N}{2} + n -1 }{\frac{N}{2} + n}\binom{N+2n}{2n}(1-\eta)^{2n}\tanh^{2n}r  \\ 
   &=  \eta^{N}\frac{\tanh^{N}r}{\cosh^{M}r}\sum_{n=0}^{\infty} \frac{\left( \frac{M}{2} + \frac{N}{2} + n -1 \right)! }{\left( \frac{N}{2} + n \right)! \left( \frac{M}{2} - 1\right)!} \frac{(N + 2n)!}{(2n)! N!} (1-\eta)^{2n}\tanh^{2n}r   \\
   &=  \eta^{N}\frac{\tanh^{N}r}{\cosh^{M}r} \frac{\left( \frac{M}{2} + \frac{N}{2} - 1 \right)!}{\left( \frac{N}{2} \right)! \left( \frac{M}{2} - 1\right)!}\sum_{n=0}^{\infty} \frac{\left( \frac{M}{2} + \frac{N}{2} + n -1 \right)! }{\left( \frac{M}{2} + \frac{N}{2} - 1 \right)! } \frac{(N + 2n - 1)!!}{(2n-1)!! (N-1)!!} \frac{(1-\eta)^{2n}\tanh^{2n}r}{n!}  \\ 
   &=  \eta^{N}\frac{\tanh^{N}r}{\cosh^{M}r} \binom{\frac{M}{2} + \frac{N}{2} -1 }{\frac{N}{2}} \sum_{n=0}^{\infty} \frac{ \left( \frac{M + N}{2} \right)_{n} \left(  \frac{N + 1}{2} \right)_n }{ \left( \frac{1}{2} \right)_n } \frac{(1-\eta)^{2n}\tanh^{2n}r}{n!}  , \label{eq: asdeee}
\end{align}
where $(q)_n$ in the right-hand side of Eq.~\eqref{eq: asdeee} is the (rising) Pochhammer symbol defined as 
\begin{align}\label{eq: pochhammer}
    (q)_n = \begin{cases}
1 & \text{when}\;\; n = 0 \\
q(q+1)\cdots(q+n-1) &  \text{when}\;\; n \in \mathbb{Z}^{+} .
\end{cases} 
\end{align}
In fact, the summation term in the right-hand side of Eq.~\eqref{eq: asdeee} is identical to the hypergeometric function ${}_2F_1(a,b;c;z)$~\cite{whittaker1920course}.
Accordingly, we obtain the final analytic expression for $\Pr[N]$ as
\begin{align}
    \Pr[N] = \eta^{N}\frac{\tanh^{N}r}{\cosh^{M}r}\binom{\frac{M}{2} + \frac{N}{2} -1 }{\frac{N}{2}} {}_2F_1\left(\frac{M + N}{2}, \frac{N + 1}{2}; \frac{1}{2};(1-\eta)^2\tanh^2r\right). \label{appendix:postselectionprobability}
\end{align}
We remark that the above expression agrees with the post-selection probability of obtaining even $N$-photon outcomes of lossy GBS described in Ref.~\cite[Supplementary Materials]{deshpande2022quantum}.

Therefore, combining Eq.~\eqref{appendix: eq: output probability without postselection} and Eq.~\eqref{appendix:postselectionprobability}, the output probability for $N$-photon outcome $S$ after post-selecting $N$-photon outcomes can finally be expressed as 
\begin{align}\label{supple: noisy output probability}
p_{S}(\eta,U) = \frac{1}{\mu(S)}\binom{\frac{M}{2} + \frac{N}{2} -1 }{\frac{N}{2}}^{-1} Q(\eta)^{-1} \Haf{\begin{pmatrix}
(UU^{T})_{S} & (1-\eta)\tanh{r}\mathbb{I}_{N} \\
(1-\eta)\tanh{r}\mathbb{I}_{N} & (U^{*}U^{\dag})_{S} 
\end{pmatrix}}    ,
\end{align}
where the multiplicative factor $Q(\eta)$ is defined as 
\begin{align}
Q(\eta) = (1-(1-\eta)^2\tanh^2{r})^{\frac{M}{2}+N}{}_2F_1\left(\frac{M + N}{2}, \frac{N + 1}{2}; \frac{1}{2};(1-\eta)^2\tanh^2r\right),
\end{align}
thus obtaining the expression in the main text.
Note that in the limit $\eta = 1$, which corresponds to ideal GBS, $Q(1) = 1$, and Eq.~\eqref{supple: noisy output probability} recovers the (post-selected) ideal output probability~\cite{hamilton2017gaussian, kruse2019detailed, deshpande2022quantum} such that
\begin{align}\label{supple: ideal output probability}
    p_{S}(U) \coloneqq p_{S}(1,U) =  \frac{1}{\mu(S)}\binom{\frac{M}{2} + \frac{N}{2} -1 }{\frac{N}{2}}^{-1} \Haf{\begin{pmatrix}
(UU^{T})_{S} & 0 \\
0 & (U^{*}U^{\dag})_{S} 
\end{pmatrix}} .
\end{align}

\section{Post-selection probability for mean output photon number}\label{Appendix:section:postselectionprobabilityoflossyGBS}

In this section, we derive the lower bound of the post-selection probability $\Pr[N]$ given in Eq.~\eqref{appendix:postselectionprobability} when $N$ is the mean output photon number $N = \eta M\sinh^{2}{r}$ in our lossy GBS setup.

First, observe that 
\begin{align}
\eta^{N}\frac{\tanh^{N}r}{\cosh^{M}r}\binom{\frac{M}{2} + \frac{N}{2} -1 }{\frac{N}{2}}   &=  \frac{\eta^{\frac{N}{2}}(\frac{N}{M})^{\frac{N}{2}}}{(1+\frac{N}{\eta M})^{\frac{M+N}{2}}}\binom{\frac{M}{2} + \frac{N}{2} -1 }{\frac{N}{2}} \label{tyty} \\
&\ge \sqrt{\frac{\frac{M}{2} + \frac{N}{2} - 1}{\pi N \left( \frac{M}{2} - 1\right)}}e^{\frac{1}{6(M+N-2) + 1} - \frac{1}{6N} - \frac{1}{6(M-2)}} \frac{\eta^{\frac{N}{2}}(\frac{N}{M})^{\frac{N}{2}} (\frac{M+N-2}{N})^{\frac{N}{2}}(\frac{M+N-2}{M-2})^{\frac{M}{2}-1}}{(1+\frac{N}{\eta M})^{\frac{M+N}{2}}} \label{qwer}\\
&\geq \Omega\left(\frac{1}{\sqrt{N}}e^{-\frac{N}{2}(\frac{1}{\eta}-1-\log\eta)}\right) ,\label{qwerty}
\end{align}
where we used $\cosh^2{r} = \frac{N}{\eta M} + 1$ and $\tanh^2{r} = \frac{N}{\eta M + N}$ in Eq.~\eqref{tyty},
used two-sided Stirling's inequality~\cite{robbins1955remark} in Eq.~\eqref{qwer}, and used the following relation in Eq.~\eqref{qwerty}:
\begin{align}
    &\frac{\eta^{\frac{N}{2}}(\frac{N}{M})^{\frac{N}{2}} (\frac{M+N-2}{N})^{\frac{N}{2}}(\frac{M+N-2}{M-2})^{\frac{M}{2}-1}}{(1+\frac{N}{\eta M})^{\frac{M+N}{2}}} \nonumber \\
    &= \exp[\frac{N}{2}\log\eta + \frac{N}{2}\log( 1 + \frac{N-2}{M}) +  \frac{M-2}{2}\log(1 + \frac{N}{M-2}) - \frac{M+N}{2}\log(1 + \frac{N}{\eta M})   ] \\
    &\geq \exp[ \frac{N}{2}\log \eta + \frac{N}{2}\left( \frac{N-2}{M} - \frac{1}{2}\left( \frac{N-2}{M} \right)^2 \right) + \frac{M-2}{2}\left( \frac{N}{M-2} - \frac{1}{2}\left( \frac{N}{M-2} \right)^2 \right) - \frac{M+N}{2} \left( \frac{N}{\eta M} \right)] \\
    & = \exp[ -\frac{N}{2}\left( \frac{1}{\eta} - 1 - \log\eta \right) + o(1)],
\end{align}
where we used the fact that $\frac{N^2}{M} = o(1)$ (following from our setting $M \propto N^{\gamma}$ with $\gamma > 2$) and $x - \frac{x^2}{2} \leq \log (1+x) \leq x$ for $x \geq 0$.

Meanwhile, the hypergeometric function in Eq.~\eqref{appendix:postselectionprobability} can be expressed as 
\begin{align}\label{appendix:hypergeometric}
{}_2F_1\left(\frac{M + N}{2}, \frac{N + 1}{2}; \frac{1}{2};(1-\eta)^2\tanh^2r\right) = \sum_{n=0}^{\infty}\frac{(\frac{M + N}{2})_n (\frac{N + 1}{2})_n}{(\frac{1}{2})_n} \frac{(1-\eta)^{2n}\tanh^{2n}{r}}{n!},
\end{align}
for $(q)_n$ being the Pochhammer symbol defined in Eq.~\eqref{eq: pochhammer}. 
We find that each term in the summation in Eq.~\eqref{appendix:hypergeometric} can be lower bounded as
\begin{align}
\frac{(\frac{M + N}{2})_n (\frac{N + 1}{2})_n}{(\frac{1}{2})_n}  \frac{(1-\eta)^{2n}\tanh^{2n}{r}}{n!} &= \frac{(M+N+2n-2)!!(N+2n-1)!!}{(2n)!(M+N-2)!!(N-1)!!}(1-\eta)^{2n}\tanh^{2n}{r} \\
&\ge \frac{M^nN^n}{(2n)!}(1-\eta)^{2n}\tanh^{2n}{r},
\end{align}
which leads to 
\begin{align}
    {}_2F_1\left(\frac{M + N}{2}, \frac{N + 1}{2}; \frac{1}{2};(1-\eta)^2\tanh^2r\right) &\ge \sum_{n=0}^{\infty}\frac{M^nN^n}{(2n)!}(1-\eta)^{2n}\tanh^{2n}{r} \\
    &= \cosh\left({\sqrt{MN}(1-\eta)\tanh{r}}\right) \\
    &= \Omega(e^{\sqrt{MN}(1-\eta)\tanh{r}}) . \label{aszxcx}
\end{align}
Combining those arguments, the post-selection probability $\Pr[N]$ for mean photon number $N = \eta M\sinh^{2}{r}$ can be bounded from below as 
\begin{align}\label{appendix:lowerboundofpostselectionprobabilitygeneral}
    \Pr[N = \eta M\sinh^{2}{r}] \ge \Omega\left(\frac{1}{\sqrt{N}}e^{-\frac{N}{2}(\frac{1}{\eta}-1-\log\eta) + \sqrt{MN}(1-\eta)\tanh{r}}\right).
\end{align}

To proceed further, we note that we are considering the \textit{low-loss} regime for our main hardness result of lossy GBS (e.g., as stated in Theorem~\ref{maintheorem}), such that $\eta$ satisfies $(1-\eta) \leq O(\frac{1}{\sqrt{N}})$.
Then, observe that
\begin{align}\label{qqwer}
\frac{N}{2}\left(\frac{1}{\eta}-1-\log\eta \right) = \frac{N}{2}\left(\frac{1-\eta}{1-(1-\eta)}-\log(1-(1-\eta)) \right) =   (1-\eta)N + O(1) .  
\end{align} 
Also, using the fact that $\tanh^2{r} = \frac{N}{\eta M + N}$ and $N = o(\sqrt{M})$, given that $(1-\eta) \leq O(\frac{1}{\sqrt{N}})$, we get
\begin{align}
\sqrt{MN}(1-\eta)\tanh{r} = (1 - \eta)N\frac{1}{\sqrt{1 - (1 - \eta) + \frac{N}{M}}} = (1 - \eta)N + O(1) .
\end{align}
Hence, given that $(1-\eta) \leq O(\frac{1}{\sqrt{N}})$, the post-selection probability $\Pr[N]$ to obtain the mean photon number $N = \eta M\sinh^{2}{r}$ can finally be bounded by
\begin{align}\label{appendix:lowerboundofpostselectionprobability}
    \Pr[N = \eta M\sinh^{2}{r}] \ge \Omega\left(\frac{1}{\sqrt{N}}\right) ,
\end{align}
thus obtaining the desired bound.

\section{Hardness analysis of lossy GBS}\label{supplement: main result}

In this section, we give a formalized proof of Theorem~\ref{thm: main result informal} in the main text. 
To this end, we begin by formally defining the problems of average-case output probability estimation in both the ideal and lossy GBS cases.
Based on these definitions, we present a more formal statement of Theorem~\ref{thm: main result informal}, restated as Theorem~\ref{maintheoreminformal}.
Lastly, we provide a step-by-step proof of Theorem~\ref{maintheoreminformal}.

\subsection{Formulation of problems}

In the following, we formalize our main problems and restate Theorem~\ref{thm: main result informal}.
Before proceeding, let us first review the hiding property of GBS.
As depicted in~\cite{jiang2009entries, shou2025proof}, for a sufficiently large $M \gg N$, the distribution of $N$ by $N$ sub-matrices of symmetric product $UU^T$ for $M$ by $M$ Haar-random unitary matrix $U$ has bounded total variation distance (and converges asymptotically) to the distribution of $\frac{1}{M}XX^T$ for an $N$ by $M$ random Gaussian matrix $X \sim \mathcal{N}(0,1)_{\mathbb{C}}^{N\times M}$.
Since we are considering the dilute regime $M \propto N^{\gamma}$ with a sufficiently large $\gamma > 2$ in this work, the ideal output probability $p_{S}(U)$ in Eq.~\eqref{supple: ideal output probability} for a collision-free outcome $S$ and a Haar-random unitary circuit $U$ can also be expressed as
\begin{align}
    p_{S}(U) = \binom{\frac{M}{2} + \frac{N}{2} -1 }{\frac{N}{2}}^{-1} \Haf{\begin{pmatrix}
(UU^{T})_{S} & 0 \\
0 & (U^{*}U^{\dag})_{S} 
\end{pmatrix}} \simeq \binom{\frac{M}{2} + \frac{N}{2} -1 }{\frac{N}{2}}^{-1}M^{-N}\Haf{\begin{pmatrix}
XX^{T} & 0 \\
0 & X^{*}X^{\dag} 
\end{pmatrix}} 
\end{align}
for a random Gaussian matrix $X \sim \mathcal{N}(0,1)_{\mathbb{C}}^{N\times M}$. 
To simplify the analysis, we denote
\begin{align}\label{eq: rescaled ideal output probability}
    P(X) \coloneqq  \Haf{\begin{pmatrix}
XX^{T} & 0 \\
0 & X^{*}X^{\dag} 
\end{pmatrix}}
\end{align}
as a rescaled output probability of ideal GBS, which we will use for our hardness argument.

By this property, one can estimate $p_{S}(U)$ for most of Haar-random unitary matrix $U$, if one can estimate $P(X)$ for most of $X \sim \mathcal{N}(0,1)_{\mathbb{C}}^{N\times M}$, with rescaling factor $M^{-N}$ on the additive imprecision. 
Then, as we describe in the main text, by showing that estimating most of $P(X)$ over $X \sim \mathcal{N}(0,1)_{\mathbb{C}}^{N\times M}$ to within a certain additive imprecision is \#P-hard, the classical hardness of the ideal GBS can be obtained (see Sec.~\ref{supplement: hardness proof of GBS} or Refs.~\cite{aaronson2011computational, hamilton2017gaussian, kruse2019detailed} for details). 
We formally restate this average-case estimation problem of $P(X)$, which is also referred to as the problem of Gaussian Hafnian Estimation~\cite{hamilton2017gaussian, kruse2019detailed}. 

\begin{problem}[$|\text{GHE}|_{\pm}^{2}$~\cite{kruse2019detailed}]\label{def: GHE}
Given as input a matrix $X \sim \mathcal{N}(0,1)_{\mathbb{C}}^{N\times M}$ of i.i.d. Gaussians, together with error bounds $\epsilon_0,\,\delta_0 > 0$, estimate $P(X)$ to within additive error $\pm \epsilon_0\binom{\frac{M}{2} + \frac{N}{2} - 1}{\frac{N}{2}}N!$ with probability at least $1-\delta_0$ over $X$ in $\text{poly}(N,\epsilon_0^{-1},\delta_0^{-1})$ time. 
\end{problem}

We remark that the size of the additive error $\epsilon_0 \binom{\frac{M}{2} + \frac{N}{2} - 1}{\frac{N}{2}}N!$ in Problem~\ref{def: GHE} is different from that of~\cite{kruse2019detailed}.
This comes from the difference of their setup to ours, such that we are considering $M$ squeezed vacuum inputs over $M$ modes (i.e., full-mode input), while Ref.~\cite{kruse2019detailed} considers $N$ squeezed vacuum inputs over $M \propto  N^2$ modes (i.e., sparse input).



Importantly, provided that $|\rm{GHE}|_{\pm}^{2}$ is \#$\rm{P}$-hard, one can show that the ideal GBS cannot be classically simulated within total variation distance error in polynomial time, unless the infinite level of polynomial hierarchy collapses to a finite level; see Sec.~\ref{supplement: hardness proof of GBS} for details. 
In short, \#P-hardness of $|\rm{GHE}|_{\pm}^{2}$ is sufficient to establish the classical intractability of ideal GBS~\cite{hamilton2017gaussian, kruse2019detailed}.


Based on this understanding, one can simply extend this hardness analysis for the lossy GBS by replacing $P(X)$ with its noisy version in the $|\text{GHE}|_{\pm}^{2}$ problem.
More specifically, similarly to the rescaled ideal output probability in Eq.~\eqref{eq: rescaled ideal output probability}, we newly define $P(\eta,X)$ as a rescaled version of the noisy output probability $p_S(\eta,U)$ in Eq.~\eqref{supple: noisy output probability} multiplied by the same multiplicative factor as in Eq.~\eqref{eq: rescaled ideal output probability}, such that
\begin{align}\label{def: rescaled noisy output probability}
\begin{split}
P(\eta,X) \coloneqq Q(\eta)^{-1} \Haf{\begin{pmatrix}
XX^T & (1-\eta)M\tanh{r}\mathbb{I}_{N} \\
(1-\eta)M\tanh{r}\mathbb{I}_{N} & X^{*}X^{\dag}
\end{pmatrix}}    ,
\end{split}
\end{align}
where now $P(\eta,X)$ is a noisy analogue of $P(X)$ in Eq.~\eqref{eq: rescaled ideal output probability} such that $P(\eta,X)$ reduces to $P(X)$ at $\eta=1$.

Using this rescaled noisy output probability, we can generalize $|\text{GHE}|_{\pm}^{2}$ for the lossy GBS case, i.e., 
average-case estimation of $P(\eta, X)$ over $X \sim \mathcal{N}(0,1)_{\mathbb{C}}^{N\times M}$.
We now formally state our main problem of estimating $P(\eta, X)$, which we refer to as Gaussian Hafnian Estimation with Diagonals, motivated by the presence of additional diagonal elements in Eq.~\eqref{def: rescaled noisy output probability}. 

\begin{problem}[${\rm{GHED}}_{\pm}$]\label{def: sumGHE}
Given as input a transmission rate $\eta \in [0, \eta^*]$ for a fixed $\eta^*$ and a matrix $X \sim \mathcal{N}(0,1)_{\mathbb{C}}^{N\times M}$ of i.i.d. Gaussians, together with error bounds $\epsilon,\,\delta > 0$, estimate $P(\eta,X)$ to within additive error $\pm \epsilon\binom{\frac{M}{2} + \frac{N}{2} - 1}{\frac{N}{2}}N!$ with probability at least $1-\delta$ over $X$ in $\text{poly}(N,\epsilon^{-1},\delta^{-1})$ time. 
\end{problem}

Here, the difference from $|\text{GHE}|_{\pm}^{2}$ is that now there is an additional input noise parameter, a transmission rate $\eta$ below a fixed $\eta^*$. 
Then, analogously to the ideal GBS case, \#P-hardness of ${\rm{GHED}}_{\pm}$ leads to the classical hardness of lossy GBS that takes a transmission rate $\eta \in [0, \eta^*]$ as an additional input.
Accordingly, establishing the \#P-hardness of ${\rm{GHED}}_{\pm}$ suffices to prove the classical hardness of lossy GBS that takes a transmission rate $\eta \in [0, \eta^*]$ as an input, and finally, under Conjecture~\ref{conjecture: complexity decrease} in the main text, the classical hardness of lossy GBS with a fixed transmission rate $\eta^*$.

To derive \#P-hardness of ${\rm{GHED}}_{\pm}$ problem, we adopt a similar strategy in the previous hardness arguments~\cite{aaronson2016bosonsampling, go2025quantum}.
More precisely, we identify the minimum value $\eta_{\rm{th}}$ of $\eta^*$ such that ${\rm{GHED}}_{\pm}$ is \textit{polynomially reducible} from $|\text{GHE}|_{\pm}^{2}$.
which implies that solving ${\rm{GHED}}_{\pm}$ is as hard as solving $|\text{GHE}|_{\pm}^{2}$ (note that this is trivial for $\eta^* = 1$).
Then, summarizing all the previous arguments, lossy GBS with its transmission rate $\eta^*$ above the threshold $\eta_{\rm{th}}$ is equivalently hard to classically simulate as the ideal GBS case, since the classical hardness of the ideal GBS is grounded in the \#P-hardness of $|\text{GHE}|_{\pm}^{2}$ (i.e., Conjecture~\ref{conjecture: sharp-p-hard} in the main text).

To further simplify the analysis, we newly introduce a parameter denoted as $k$ and parameterize the transmission rate $\eta$ with $k$ such that
\begin{align}\label{eq: convention for eta}
\eta  = 1 - \frac{k}{N+k},
\end{align}
which equivalently gives $k = (\frac{1}{\eta}-1)N$.
To obtain physical intuition on $k$, note that when $N$ corresponds to the mean number of output photons $\eta M\sinh^2 r$ for general lossy GBS (without post-selection), then $k$ corresponds to the \textit{mean number of lost photons} $(1-\eta)M\sinh^2 r$.

Observe that given $\eta$, $k$ is uniquely determined via Eq.~\eqref{eq: convention for eta} (and vice versa).
Hence, given that the input $\eta$ is non-zero, $k \in [k^*, \infty)$ can be regarded as an alternative input variable to ${\rm{GHED}}_{\pm}$ in place of $\eta \in [0, \eta^*]$, where the minimum $k^*$ is determined by the maximum $\eta^*$ such that
\begin{align}\label{appendix: convention for eta-star}
\eta^* \coloneqq 1 - \frac{k^*}{N+k^*} .
\end{align}


Based on the arguments so far, we now restate Theorem~\ref{thm: main result informal} given in the main text.
We find that there exists a threshold $\eta_{\rm{th}}$ (or $k_{\rm{th}}$) such that solving ${\rm{GHED}}_{\pm}$ is complexity-theoretically equivalent to solving $|\rm{GHE}|_{\pm}^{2}$, when $\eta^* \geq \eta_{\rm{\rm{th}}}$ for $(1-\eta_{\rm{\rm{th}}})N = O(\log N)$, or equivalently, $k^* \leq k_{\rm{th}} = (\frac{1}{\eta_{\rm{th}}} - 1)N = O(\log N)$.

\begin{theorem}[Restatement of Theorem~\ref{thm: main result informal}]\label{maintheoreminformal}
There exists a threshold transmission rate $\eta_{\rm{th}}$ satisfying $(1-\eta_{\rm{\rm{th}}})N = O(\log N)$ such that for $\eta^* \geq \eta_{\rm{\rm{th}}}$, ${\rm{GHED}}_{\pm}$ is polynomially reducible from $|\rm{GHE}|_{\pm}^{2}$. 
In other words, if $\mathcal{O}$ is an oracle that solves ${\rm{GHED}}_{\pm}$ for $k^* \leq k_{\rm{th}} = O(\log N)$, together with error bounds given by $\epsilon, \delta = O({\rm{poly}}(N,\epsilon_0^{-1},\delta_0^{-1})^{-1})$, then $|\rm{GHE}|_{\pm}^{2}$ can be solved in $\rm{BPP}^{\mathcal{O}}$.
\end{theorem}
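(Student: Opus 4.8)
The plan is to exhibit a $\rm{BPP}^{\mathcal{O}}$ algorithm that, on input $X \sim \mathcal{N}(0,1)_{\mathbb{C}}^{N \times M}$ and error bounds $\epsilon_0, \delta_0$, outputs an estimate of the ideal quantity $P(X)$ to within $\pm \epsilon_0 \binom{\frac{M}{2}+\frac{N}{2}-1}{\frac{N}{2}} N!$ over a $1-\delta_0$ fraction of $X$, thereby solving $|\rm{GHE}|_{\pm}^{2}$. The algorithm queries the oracle $\mathcal{O}$ for ${\rm{GHED}}_{\pm}$ at several transmission rates $\eta \in [0,\eta^*]$ and extrapolates the resulting estimates to $\eta = 1$, where the lossy probability reduces to the ideal one. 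Since the hiding property already lets us phrase both problems in terms of Gaussian matrices $X$ rather than Haar-random unitaries, the entire reduction can be carried out at the level of the rescaled quantities $P(X)$ and $P(\eta, X)$.

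First I would pin down the polynomial structure of the hafnian factor $R_X(\eta)$ appearing in Eq.~\eqref{def: rescaled noisy output probability}, where $P(\eta, X) = Q(\eta)^{-1} R_X(\eta)$. Expanding the hafnian of the $2N \times 2N$ matrix along its off-diagonal blocks $(1-\eta)M\tanh r\,\mathbb{I}_N$, each crossing edge of a perfect matching contributes a factor $(1-\eta)M\tanh r$ and must connect an index $i$ to its diagonal partner $i+N$; a matching using $2m$ such crossing edges leaves $N-2m$ indices to be matched within each block, so only even powers of $(1-\eta)$ survive and the degree is at most $N$, recovering Eq.~\eqref{eq: series expansion}. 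The constant term corresponds to no crossing edges and factorizes as $\Haf(XX^T)\Haf(X^*X^\dag) = P(X)$, so $c_0 = P(X)$. I would then note that $Q(\eta)$ is a product of an elementary factor and a hypergeometric series, hence classically computable to any desired additive precision in polynomial time; multiplying the oracle's estimates of $P(\eta,X)$ by the computed $Q(\eta)$ yields estimates of $R_X(\eta)$ with a controlled rescaling of the additive error.

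The heart of the reduction is extrapolation by Lagrange interpolation. I would introduce the truncated polynomial $R_X^{(l)}(\eta)$ obtained by discarding all terms beyond $(1-\eta)^l$; since truncation preserves $c_0$, one still has $R_X^{(l)}(1) = P(X)$. Choosing $l+1$ evaluation points $\eta_1, \dots, \eta_{l+1} \in [0, \eta^*]$ (equivalently, values $k \in [k^*, \infty)$ through Eq.~\eqref{eq: convention for eta}), I would query $\mathcal{O}$ at each point, convert to estimates of $R_X(\eta_j)$, fit the unique degree-$l$ interpolant, and evaluate it at $\eta=1$ to obtain an estimate of $R_X^{(l)}(1) = P(X)$. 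Extrapolating a degree-$l$ polynomial from $[0,\eta^*]$ to $\eta = 1$ amplifies the additive error by a factor $e^{O(l)}$ set by the Lagrange coefficients. Taking $l = \Theta(\log \text{poly}(N, \epsilon_0^{-1}, \delta_0^{-1}))$ keeps this amplification polynomial and lets the required oracle precisions $\epsilon, \delta$ remain inverse-polynomial in $N, \epsilon_0^{-1}, \delta_0^{-1}$; the competing constraint $l \gg (1-\eta^*)N$, i.e.\ $l \gg k^*$, is precisely what forces $k^* \le k_{\rm{th}} = O(\log N)$, equivalently $(1-\eta_{\rm{th}})N = O(\log N)$.

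The main obstacle will be controlling the truncation tail $|R_X(\eta) - R_X^{(l)}(\eta)| = |\sum_{2m > l} c_{2m}(1-\eta)^{2m}|$, which enters the interpolation as extra noise and must be small over a $1-\delta_0/2$ fraction of $X$. Unlike the deterministic interpolation blowup, this is a probabilistic estimate: I would bound the expected magnitude of the tail coefficients $c_{2m}$ --- sums over matchings using many crossing edges --- via moment bounds on products of entries of $XX^T$ and $X^*X^\dag$ for Gaussian $X$, show that for $l \gg (1-\eta^*)N$ the tail contribution at the chosen points is geometrically suppressed, and apply Markov's inequality to exclude a set of $X$ of measure at most $\delta_0/2$. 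The delicate bookkeeping is then to compose the oracle error $\epsilon\binom{\frac{M}{2}+\frac{N}{2}-1}{\frac{N}{2}}N!$, the $Q(\eta)$ computation error, and the truncation tail, each amplified by $e^{O(l)}$, into a final additive error below $\epsilon_0\binom{\frac{M}{2}+\frac{N}{2}-1}{\frac{N}{2}}N!$ with total failure probability below $\delta_0$, while keeping every query polynomial --- which is exactly where the logarithmic loss budget is consumed.
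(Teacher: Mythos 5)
Your proposal is correct and follows essentially the same route as the paper: the perfect-matching expansion of the hafnian in powers of $(1-\eta)$ with $c_0 = P(X)$, efficient computation of $Q(\eta)$, truncation to a degree-$l$ polynomial with the tail controlled by Gaussian moment bounds plus Markov's inequality, and Lagrange extrapolation to $\eta=1$ with $e^{O(l)}$ error amplification forcing $l = \Theta(\log\mathrm{poly})$ and hence $k^* = O(\log N)$. The only detail to make explicit is that the interpolation nodes must be confined to a subinterval $[\eta_{\rm min},\eta^*]$ (the paper takes $k_{\rm max}=3k^*$) rather than all of $[0,\eta^*]$, since the truncation tail blows up as $\eta\to 0$ — but your remark that the tail need only be suppressed ``at the chosen points'' already accommodates this.
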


We provide a detailed proof of Theorem~\ref{maintheoreminformal} in the following section.
Before proceeding, we emphasize that for lossy GBS with mean output photon number $N = \eta^* M \sinh^2 r$, the condition $k^* \leq k_{\rm{th}} = O(\log N)$ in Theorem~\ref{maintheoreminformal} implies that the mean number of lost photons is at most logarithmically related to the mean number of input photons.
Therefore, Theorem~\ref{maintheoreminformal} implies that lossy GBS with at most a logarithmic number of lost photons is as hard as ideal GBS. 
This scaling coincides with that obtained for noisy Fock state boson sampling under partial distinguishability of photons~\cite{go2025quantum}.





\subsection{Establishing complexity-theoretic reduction}\label{Section: reduction}

In this section, we prove Theorem~\ref{maintheoreminformal} by explicitly constructing a complexity-theoretic reduction from $|\rm{GHE}|_{\pm}^{2}$ (Problem~\ref{def: GHE}) to ${\rm{GHED}}_{\pm}$ (Problem~\ref{def: sumGHE}).
We first outline the reduction procedure from $|\rm{GHE}|_{\pm}^{2}$ to ${\rm{GHED}}_{\pm}$, and then present a detailed proof of Theorem~\ref{maintheoreminformal}.

\subsubsection{Proof sketch}

In the following, we outline the reduction process from $|\text{GHE}|_{\pm}^{2}$ to ${\rm{GHED}}_{\pm}$, which follows the similar reduction process in~\cite{go2025quantum}. 
More precisely, we sketch how one can estimate $P(\eta=1,X) = P(X)$ value with high probability, given access to estimated values of $P(\eta,X)$ for $\eta \in [0, \eta^*]$, i.e., given the oracle access to the ${\rm{GHED}}_{\pm}$ problem.

To proceed, we rescale again the noisy output probability $P(\eta, X)$ in Eq.~\eqref{def: rescaled noisy output probability} without the multiplicative factor $Q(\eta)^{-1}$ at the front, which we will denote as $R_{X}(\eta)$.
In other words, we have
\begin{align}
    R_{X}(\eta) &\coloneqq Q(\eta)P(\eta, X) 
     = \Haf{\begin{pmatrix}
XX^T & (1-\eta)M\tanh{r}\mathbb{I}_{N} \\
(1-\eta)M\tanh{r}\mathbb{I}_{N} & X^{*}X^{\dag}
\end{pmatrix}}    .
\end{align}
Notice that now $R_{X}(\eta)$ is an $N$-degree polynomial in $\eta$, and reduces to the ideal output probability at $\eta = 1$ such that $R_{X}(1) = P(X)$.
Moreover, we find in Sec.~\ref{appendix:section:efficient} that the multiplicative factor $Q(\eta)$ in the above is efficiently computable, up to a desired accuracy.
Hence, given an estimation value of $P(\eta,X)$ for any $\eta \in [0, \eta^*]$ (i.e., given an oracle access to ${\rm{GHED}}_{\pm}$), we can also obtain an estimation value of $R_{X}(\eta)$ for any $\eta \in [0, \eta^*]$ by multiplying $Q(\eta)$.

Accordingly, our basic idea is to infer $R_{X}(1)$ value via \textit{polynomial interpolation}, using the estimation values of $R_{X}(\eta)$ for $\eta \in [0, \eta^*]$ obtained by the oracle access to ${\rm{GHED}}_{\pm}$.
That is, we first estimate $R_{X}(\eta)$ values for different values of $\eta \le \eta^*$, draw a polynomial using these values, and finally estimate $R_{X}(\eta)$ at $\eta = 1$.
However, since polynomial interpolation induces exponentially increasing imprecision level with the degree of the polynomial~\cite{paturi1992degree, kondo2022quantum, bouland2022noise}, directly using polynomial interpolation to $R_{X}(\eta)$ induces at least exponentially increasing imprecision level with $N$, which essentially suppress the polynomial reduction from $|\text{GHE}|_{\pm}^{2}$ to ${\rm{GHED}}_{\pm}$.

To mitigate this exponential imprecision blowup induced by the polynomial interpolation, our key idea is to \textit{reduce} the degree of the polynomial $R_{X}(\eta)$.
More specifically, we construct a low-degree polynomial $R_{X}^{(l)}(\eta)$ by truncating the higher-order terms of $R_{X}(\eta)$, which still has the ideal value $P(X)$ at $\eta = 1$.
We find that the low-degree polynomial $R_{X}^{(l)}(\eta)$ can be sufficiently close to the original polynomial $R_{X}(\eta)$, which allows to construct polynomial interpolation for $R_{X}^{(l)}(\eta)$ given estimated values of $R_{X}(\eta)$. 
Here, we emphasize that for the polynomial reduction from $|\text{GHE}|_{\pm}^{2}$ to ${\rm{GHED}}_{\pm}$, the degree of the low-degree polynomial $R_{X}^{(l)}(\eta)$ should be at most $O(\log (\text{poly}(N, \epsilon_0^{-1}, \delta_0^{-1})))$ such that the polynomial interpolation process induces an imprecision blowup that scales at most $e^{O(\log (\text{poly}(N, \epsilon_0^{-1}, \delta_0^{-1})))} = \text{poly}(N, \epsilon_0^{-1}, \delta_0^{-1})$.


\subsubsection{Low-degree polynomial approximation for $R_{X}(\eta)$}\label{subsection: low degree aprroximation}


Before proceeding to the main proof, we derive the low-degree polynomial from $R_{X}(\eta)$ and establish the condition under which it remains sufficiently close to $R_{X}(\eta)$.

To obtain the low-degree polynomial, we conduct a series expansion of $R_{X}(\eta)$ in terms of $1-\eta$. 
More specifically, in Sec.~\ref{appendix:section:series}, we derive that $R_{X}(\eta)$ can be expanded as
\begin{align}\label{seriesexpansionofR}
R_{X}(\eta) = \sum_{n=0}^{N/2 - 1}(1-\eta)^{2n}M^{2n}\tanh^{2n}{r} \sum_{\substack{J \subseteq [N]\\|J|= N -2n}}\left|\Haf\left((XX^{T})_{J}\right)\right|^2 + (1-\eta)^{N}\tanh^{N}{r}M^N ,
\end{align}
where $(XX^T)_{J}$ is a $|J|$ by $|J|$ submatrix of $XX^T$ defined by taking rows and columns of $XX^T$ according to the subset $J \subseteq [N]$.

Given the series expansion in terms of $1-\eta$, we now see that the coefficient of the zeroth-order term $(1-\eta)^0$ is $\left|\Haf(XX^T)\right|^2 = P(X)$, i.e., what we are willing to estimate to solve $|\text{GHE}|_{\pm}^{2}$ problem. 
Accordingly, even if we truncate high-order terms in Eq.~\eqref{seriesexpansionofR}, the truncated polynomial still reduces to the desired value $P(X)$ at $\eta = 1$ as long as the zeroth-order term $(1-\eta)^0$ in Eq.~\eqref{seriesexpansionofR} remains unchanged.

Based on this understanding, we now lower the degree of the polynomial $R_{X}(\eta)$ by truncating high-order terms in Eq.~\eqref{seriesexpansionofR}.
Specifically, we define $R_{X}^{(l)}(\eta)$ as a truncated version of $R_{X}(\eta)$, only leaving up to $(1-\eta)^{l}$ term in the summation such that 
\begin{align}
R_{X}^{(l)}(\eta) &\coloneqq \sum_{n=0}^{l/2}(1-\eta)^{2n}M^{2n}\tanh^{2n}{r} \sum_{\substack{J \subseteq [N]\\|J|= N -2n}} \left| \Haf\left((XX^{T})_{J}\right) \right|^2 .
\end{align}
Indeed, $R_{X}^{(l)}(\eta)$ is a $l$-degree polynomial in $\eta$, and also contains the desired value at $\eta = 1$ such that $R_{X}^{(l)}(1) = P(X)$.

Next, we make sure that $R_{X}^{(l)}(\eta)$ is close enough to $R_{X}(\eta)$ for some $l$, because our strategy is to estimate the low-degree polynomial $R_{X}^{(l)}(\eta)$ instead of $R_{X}(\eta)$, given oracle access to ${\rm{GHED}}_{\pm}$.
However, since the distance between $R_{X}^{(l)}(\eta)$ and $R_{X}(\eta)$ increases with decreasing transmission rate $\eta$, the distance can be arbitrarily enlarged if $\eta$ is not bounded from below.
Hence, given the upper bound of the transmission rate $\eta \leq \eta^*$, we also lower bound the input transmission rate $\eta \geq \eta_{\rm{min}}$ for some yet-to-be-determined minimum input transmission rate $\eta_{\rm{min}} < \eta^*$, such that the input transmission rate $\eta$ is bounded in $[\eta_{\rm{min}},\eta^*]$.
Also, using our convention for the transmission rate $\eta$, the minimum $\eta_{\rm{min}}$ can also be represented as
\begin{align}\label{appendix: convention for eta-max}
\eta_{\rm{min}} \coloneqq 1 - \frac{k_{\rm{max}}}{N+k_{\rm{max}}},
\end{align}
where $k_{\rm{max}} > k^*$ now represents the maximum of the input $k$, such that $k \in [k^*, k_{\rm{max}}]$. 
We emphasize that $\eta^*$ ($k^*$) is the fixed, implicitly given parameter characterizing the noise rate of our lossy GBS, and $\eta_{\rm{min}}$ ($k_{\rm{max}}$) is the yet-to-be-specified parameters, which will be later specified in terms of the fixed parameter $\eta^*$ ($k^*$).

We now bound the distance between $R_{X}^{(l)}(\eta)$ and $R_{X}(\eta)$ from above, in terms of $l$ and $k_{\rm{max}}$.
Note that, the dependence on the maximum average number of lost photons $k_{\rm{max}}$ comes from the fact that the distance between $R_{X}^{(l)}(\eta)$ and $R_{X}(\eta)$ is maximized at $\eta = \eta_{\rm{min}}$. 
Specifically, we find that $R_{X}^{(l)}(\eta)$ and $R_{X}(\eta)$ is close enough over a large portion of $X \sim \mathcal{N}(0,1)_{\mathbb{C}}^{N \times M}$, for sufficiently large $l$ compared to $k_{\text{\rm{max}}}$.

\begin{lemma}\label{lemma:distancebytrunctionissmall}
Let $\sinh^2{r} = \frac{N}{\eta M}$, $\eta \ge \eta_{\rm{min}} = 1 - \frac{k_{\rm{max}}}{N + k_{\rm{max}}}$, and $l > k_{\rm{max}}$.
Then, for a matrix $X \sim \mathcal{N}(0,1)_{\mathbb{C}}^{N \times M}$ of i.i.d. Gaussians, the truncated polynomial $R_{X}^{(l)}(\eta)$ is close to $R_{X}(\eta)$ over a large fraction of $X$.
Specifically, the following inequality holds 
\begin{align}
\Pr_X\left[ |R_{X}(\eta) - R_{X}^{(l)}(\eta)| > \epsilon_1\binom{\frac{M}{2} + \frac{N}{2} - 1}{\frac{N}{2}}N!\right] < \delta,
\end{align}
for a truncation error $\epsilon_1$ given by
\begin{align}\label{epsilon1}
    \epsilon_1 = \frac{Nk_{\rm{max}}^{l}}{2\delta l!}.
\end{align}    
\end{lemma}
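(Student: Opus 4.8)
The plan is to apply Markov's inequality to the non-negative random variable $R_{X}(\eta)-R_{X}^{(l)}(\eta)$ and then control its expectation term by term. First I would note that, by the series expansion in Eq.~\eqref{seriesexpansionofR}, the truncation error is exactly
\begin{align}
R_{X}(\eta)-R_{X}^{(l)}(\eta) = \sum_{n=l/2+1}^{N/2}(1-\eta)^{2n}M^{2n}\tanh^{2n}{r}\sum_{\substack{J\subseteq[N]\\|J|=N-2n}}\big|\Haf((XX^{T})_{J})\big|^{2},
\end{align}
where the $n=N/2$ summand is the constant $(1-\eta)^{N}M^{N}\tanh^{N}{r}$ (empty-hafnian convention $\Haf(\,\cdot\,)=1$, $(-1)!!=1$). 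Since every summand is manifestly non-negative, $|R_{X}(\eta)-R_{X}^{(l)}(\eta)|=R_{X}(\eta)-R_{X}^{(l)}(\eta)\ge 0$, so Markov's inequality reduces the claim to showing $\E_{X}[R_{X}(\eta)-R_{X}^{(l)}(\eta)]\le \delta\,\epsilon_{1}\binom{\frac{M}{2}+\frac{N}{2}-1}{\frac{N}{2}}N!$ with $\epsilon_{1}$ as in Eq.~\eqref{epsilon1}. Moreover, each summand grows monotonically as $\eta$ decreases (both $1-\eta$ and $\tanh{r}$ increase, the latter because $\sinh^{2}{r}=N/(\eta M)$), so the error is maximized over $\eta\in[\eta_{\rm{min}},\eta^{*}]$ at $\eta=\eta_{\rm{min}}$, i.e. $k=k_{\rm{max}}$; it suffices to bound the expectation there.

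Next I would compute the expectation by linearity, using the Gaussian moment $\E_{X}[|\Haf((XX^{T})_{J})|^{2}]=(|J|-1)!!\,M^{|J|/2}(1+o(1))$, whose leading term is the Wick/cycle count (pairs of perfect matchings of $J$ weighted by $M^{\#\text{cycles}}$, the diagonal $\mathcal{M}=\mathcal{M}'$ giving $(|J|-1)!!\,M^{|J|/2}$), together with the count $\binom{N}{2n}$ of subsets $J$ of size $N-2n$. I would then substitute $\tanh^{2}{r}=N/(\eta M+N)$ and the parametrization $\eta_{\rm{min}}=N/(N+k_{\rm{max}})$, $1-\eta_{\rm{min}}=k_{\rm{max}}/(N+k_{\rm{max}})$. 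The decisive cancellation is that $M^{2n}\tanh^{2n}{r}\cdot M^{(N-2n)/2}$ collapses to $M^{N/2}$ independently of $n$, and $(1-\eta_{\rm{min}})^{2n}(N/\eta_{\rm{min}})^{n}=\big(k_{\rm{max}}^{2}/(N+k_{\rm{max}})\big)^{n}\le k_{\rm{max}}^{2n}$.

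Pulling out the common factor $M^{N/2}$ and dividing by the normalization through the identity $\binom{\frac{M}{2}+\frac{N}{2}-1}{\frac{N}{2}}N!=M^{N/2}(N-1)!!\,(1+o(1))$, the combinatorial factor simplifies via
\begin{align}
\frac{\binom{N}{2n}(N-2n-1)!!}{(N-1)!!}=\frac{2^{n}}{(2n)!}\frac{(N/2)!}{(N/2-n)!}\le\frac{N^{n}}{(2n)!},
\end{align}
so the whole quantity is dominated by the exponential-series tail $\sum_{n>l/2}k_{\rm{max}}^{2n}/(2n)!\le\sum_{j>l}k_{\rm{max}}^{j}/j!$. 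Since $l>k_{\rm{max}}$, consecutive terms have ratio at most $k_{\rm{max}}/(l+2)<1$, and bounding the tail by the resulting geometric series gives $\sum_{j>l}k_{\rm{max}}^{j}/j!\le \frac{k_{\rm{max}}^{l+1}}{(l+1)!}\cdot\frac{1}{1-k_{\rm{max}}/(l+2)}\le \frac{Nk_{\rm{max}}^{l}}{2\,l!}$, where the last step uses the generous slack $k_{\rm{max}}=O(\log N)\ll N/2$. Reinstating $M^{N/2}(N-1)!!$ turns this into $\delta\,\epsilon_{1}\binom{\frac{M}{2}+\frac{N}{2}-1}{\frac{N}{2}}N!$, which is precisely what Markov's inequality requires.

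The main obstacle will be the rigorous control of the moment $\E_{X}[|\Haf((XX^{T})_{J})|^{2}]$: while its leading order $(|J|-1)!!\,M^{|J|/2}$ is a clean matching count, one must verify that the sum over pairs of matchings weighted by $M^{\#\text{cycles}}$ is genuinely dominated by the diagonal term, i.e. that the off-diagonal contributions form a series whose successive-order ratio is $O(m^{2}/M)=O(N^{2-\gamma})=o(1)$; this is exactly where the dilute scaling $M\propto N^{\gamma}$ with $\gamma>2$ is essential. A secondary point is that identical $(1+o(1))$ corrections enter both this moment and the normalization $\binom{\frac{M}{2}+\frac{N}{2}-1}{\frac{N}{2}}N!\sim M^{N/2}(N-1)!!$, so I would track them jointly and absorb the net correction into the ample gap between the tail's leading term $\sim k_{\rm{max}}^{l+1}/l!$ and the target $\tfrac{N}{2}k_{\rm{max}}^{l}/l!$, rather than treating these factors cavalierly.
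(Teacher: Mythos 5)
Your proposal is correct and follows essentially the same route as the paper: apply Markov's inequality to the non-negative truncation error, bound the expectation term by term using the moment of $|\Haf((XX^T)_J)|^2$, reduce the $n$-th term to $k_{\rm max}^{2n}/(2n)!$ via $\tanh^2 r \le N/(\eta M)$ and the parametrization of $\eta_{\rm min}$, and control the tail of the exponential series using $l > k_{\rm max}$. The one place where you diverge — and where your self-identified ``main obstacle'' of controlling the $(1+o(1))$ Wick corrections arises — is unnecessary: the first moment is known exactly, $\E_X\bigl[|\Haf((XX^T)_J)|^2\bigr] = \binom{\frac{M}{2}+\frac{|J|}{2}-1}{\frac{|J|}{2}}|J|! = (|J|-1)!!\,\frac{(M+|J|-2)!!}{(M-2)!!}$, and the paper uses this closed form so that every step is an exact inequality with no asymptotic bookkeeping or reliance on the dilute scaling at this stage.
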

\begin{proof}
    See Sec.~\ref{appendix: section: proof of lemma}.
\end{proof}
Here, we emphasize that the truncation error $\epsilon_1$ scales $e^{-l\log l + l\log k_{\rm{max}} + O(l)}$ with $l$, which can be arbitrarily decreased by increasing the degree $l$ sufficiently larger than $k_{\rm{max}}$. 
Therefore, $R_{X}^{(l)}(\eta)$ can be made sufficiently close to $R_{X}(\eta)$ by setting $l$ sufficiently larger than $k_{\rm{max}}$.

To sum up, the reduction process from $|\text{GHE}|_{\pm}^{2}$ to ${\rm{GHED}}_{\pm}$ is as follows. 
By setting $l$ sufficiently larger than $k_{\text{\rm{max}}}$, the $l$-degree polynomial $R_{X}^{(l)}(\eta)$ can be well-estimated for $\eta \in [\eta_{\rm{min}},\eta^*]$ (or equivalently, $k \in [k^*, k_{\rm{max}}]$) via the oracle access to ${\rm{GHED}}_{\pm}$, i.e., by querying $P(\eta,X)$ value and multiplying $Q(\eta)$.
Then, using the well-estimated values of $R_{X}^{(l)}(\eta)$ corresponding to $l+1$ number of $\eta \le \eta^*$ values, we conduct polynomial interpolation to obtain the estimate of the polynomial $R_{X}^{(l)}(\eta)$. 
Using the estimate of the polynomial, we can infer the desired value $R_{X}^{(l)}(1) = P(X)$, which finally solves the $|\text{GHE}|_{\pm}^{2}$ problem.

\subsubsection{Reducing ${\text{GHED}}_{\pm}$ from $|\text{GHE}|_{\pm}^{2}$}

We are now ready to prove Theorem~\ref{maintheoreminformal}, i.e., establish reduction from $|\text{GHE}|_{\pm}^{2}$ to ${\rm{GHED}}_{\pm}$.
To proceed, we present a more general statement of Theorem~\ref{maintheoreminformal} as follows.

\begin{theorem}[A generalized version of Theorem~\ref{maintheoreminformal}]\label{maintheorem}
Let $\eta^*$ satisfy $(1-\eta^*) \leq \frac{1}{12\sqrt{N}}$.
Suppose $\mathcal{O}$ is an oracle that solves ${\rm GHED}_{\pm}$ with error bounds given by
\begin{align}
    \epsilon=O\left( \epsilon_0^3\delta_0^2N^{-3}e^{-\alpha k^*} \right), \qquad \delta = O\left(\frac{\delta_0}{k^* + \log (N\epsilon_0^{-1}\delta_0^{-1})} \right),
\end{align}
for $k^* = (\frac{1}{\eta^*} - 1)N$ and $\alpha =6(1+\log2.1)e^2\chi + 3$ with
\begin{align}
    \chi = \exp[\frac{\log\log(N\epsilon_0^{-1}\delta_0^{-1})}{\log(N\epsilon_0^{-1}\delta_0^{-1})}].
\end{align}
Then $|\rm{GHE}|_{\pm}^{2}$ can be solved in $\rm{BPP}^{\mathcal{O}}$. 
\end{theorem}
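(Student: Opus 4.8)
The plan is to build, given oracle access to $\mathcal{O}$ for ${\rm GHED}_{\pm}$, a $\rm{BPP}^{\mathcal{O}}$ procedure that estimates $P(X)=R_{X}(1)$ for a $1-\delta_0$ fraction of $X\sim\mathcal{N}(0,1)_{\mathbb{C}}^{N\times M}$ to within $\pm\epsilon_0\binom{\frac{M}{2}+\frac{N}{2}-1}{\frac{N}{2}}N!$, which is exactly $|\rm{GHE}|_{\pm}^{2}$. The first step is to convert oracle queries into estimates of the polynomial $R_{X}(\eta)=Q(\eta)P(\eta,X)$. Since $Q(\eta)$ is efficiently computable to arbitrary accuracy (Sec.~\ref{appendix:section:efficient}) and bounded over the transmission-rate window we employ, multiplying the oracle output by the computed $Q(\eta)$ yields an estimate of $R_{X}(\eta)$ with additive error $Q_{\max}\,\epsilon\binom{\frac{M}{2}+\frac{N}{2}-1}{\frac{N}{2}}N!$ on a $1-\delta$ fraction of $X$. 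By the expansion in Eq.~\eqref{seriesexpansionofR}, $R_{X}(\eta)$ is a degree-$N$ polynomial in $(1-\eta)$ whose zeroth-order coefficient equals $P(X)$, so the target is recovered at $\eta=1$.

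The second step is to replace $R_{X}(\eta)$ by its truncation $R_{X}^{(l)}(\eta)$, which retains the same zeroth-order term and hence still satisfies $R_{X}^{(l)}(1)=P(X)$. I would fix a degree $l$ and a lower cutoff $\eta_{\rm min}=1-\frac{k_{\rm max}}{N+k_{\rm max}}$ (equivalently a maximal loss $k_{\rm max}>k^{*}$) with $l>k_{\rm max}$, and invoke Lemma~\ref{lemma:distancebytrunctionissmall}: over a $1-\delta$ fraction of $X$ the truncation error is at most $\epsilon_{1}\binom{\frac{M}{2}+\frac{N}{2}-1}{\frac{N}{2}}N!$ with $\epsilon_{1}=\frac{Nk_{\rm max}^{l}}{2\delta\,l!}$ as in Eq.~\eqref{epsilon1}. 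Consequently, on the intersection of the two good sets of $X$, the estimate of $R_{X}(\eta)$ built above also serves as an estimate of the low-degree polynomial $R_{X}^{(l)}(\eta)$ throughout $[\eta_{\rm min},\eta^{*}]$.

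The third step is the extrapolation. Exploiting that only even powers of $(1-\eta)$ appear in Eq.~\eqref{seriesexpansionofR}, I would regard $R_{X}^{(l)}$ as a degree-$l/2$ polynomial in $v=(1-\eta)^{2}$, choose $l/2+1$ interpolation nodes $v_{i}$ corresponding to rates $\eta_{i}\in[\eta_{\rm min},\eta^{*}]$, query $\mathcal{O}$ at each, form the Lagrange interpolant, and evaluate it at $v=0$ (i.e.\ $\eta=1$) to obtain the estimate of $R_{X}^{(l)}(1)=P(X)$. The error at $v=0$ is the node errors weighted by the Lagrange coefficients $|\ell_{i}(0)|$, so the governing quantity is the extrapolation blowup $\Lambda=\sum_{i}|\ell_{i}(0)|$, which propagates both the oracle error and the truncation error to the final estimate. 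With the nodes spread across $[(1-\eta^{*})^{2},(1-\eta_{\rm min})^{2}]$ and the window width set by $k_{\rm max}$, a careful estimate gives $\Lambda\le e^{O(l)}$, and optimizing the exponent is precisely what produces the constants $\alpha$ and $\chi$ in the statement.

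Finally I would assemble the error budget: the total error on $P(X)$ is $\big(Q_{\max}\,\Lambda\,\epsilon+\Lambda\,\epsilon_{1}\big)\binom{\frac{M}{2}+\frac{N}{2}-1}{\frac{N}{2}}N!$, and the failure probability is at most $(l/2+2)\delta$ by a union bound over the nodes and the truncation lemma. Balancing the two error terms forces the choice $l=\Theta\big(k^{*}+\log(N\epsilon_0^{-1}\delta_0^{-1})\big)$: this is large enough that $\epsilon_{1}=\frac{Nk_{\rm max}^{l}}{2\delta\,l!}$ decays like $e^{-O(l)}$ and kills the truncation contribution, while the blowup stays at $\Lambda\le e^{O(l)}=e^{O(\alpha k^{*})}\,\mathrm{poly}(N,\epsilon_0^{-1},\delta_0^{-1})$, so that the stated $\epsilon=O(\epsilon_0^{3}\delta_0^{2}N^{-3}e^{-\alpha k^{*}})$ and $\delta=O(\delta_0/(k^{*}+\log(N\epsilon_0^{-1}\delta_0^{-1})))$ push the final error below $\epsilon_0$ and the failure probability below $\delta_0$; the hypothesis $(1-\eta^{*})\le\frac{1}{12\sqrt N}$ keeps $\sinh^2 r=N/(\eta M)$ in the regime where Lemma~\ref{lemma:distancebytrunctionissmall} and the bound on $Q$ apply, fixing the admissible window. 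The main obstacle is the third step: extracting a clean $\Lambda\le e^{O(l)}$ bound on the Lagrange extrapolation from $[\eta_{\rm min},\eta^{*}]$ out to $\eta=1$ with an explicit optimized constant, since the tension between taking $l$ large (to suppress $\epsilon_{1}$ through $k_{\rm max}^{l}/l!$) and $l$ small (to suppress $\Lambda$) is exactly what must be balanced to recover the precise dependence on $k^{*}$, and is the origin of $\alpha$ and $\chi$.
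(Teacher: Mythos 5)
Your proposal is correct and follows essentially the same route as the paper's proof: convert oracle outputs into estimates of $R_{X}(\eta)$ via the efficiently computable factor $Q(\eta)$, pass to the truncated polynomial $R_{X}^{(l)}$ using Lemma~\ref{lemma:distancebytrunctionissmall}, Lagrange-extrapolate to $\eta=1$ with an $e^{O(l)}$ imprecision blowup, and choose $l=\Theta\bigl(k^{*}+\log(N\epsilon_0^{-1}\delta_0^{-1})\bigr)$ to balance the truncation error $k_{\rm max}^{l}/l!$ against that blowup. The only cosmetic difference is that you interpolate in the variable $(1-\eta)^{2}$ with $l/2+1$ nodes, whereas the paper linearly rescales $\eta$ to $x\in[-\Delta,\Delta]$ with $\Delta=\tfrac{10}{21}$ (via $k_{\rm max}=3k^{*}$) and invokes the explicit equally-spaced-node extrapolation bound of Kondo et al.\ for $l+1$ nodes, which is precisely where the constants $\alpha$ and $\chi$ in the statement originate.
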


Before the proof, let us remark on how Theorem~\ref{maintheorem} leads to Theorem~\ref{maintheoreminformal}, i.e., the classical hardness result of lossy GBS for the logarithmic number of lost photons. 
More specifically, let $k^*$ be upper bounded by $k^* \leq k_{\rm{th}}$ for some threshold value $k_{\rm{th}} = O(\log N)$.
Then, one can easily find that $\epsilon$ and $\delta$ in Theorem~\ref{maintheorem} are polynomially related to $N^{-1}$, $\epsilon_0$ and $\delta_0$ (i.e., $\epsilon, \delta = O(\text{poly}(N,\epsilon_0^{-1},\delta_0^{-1})^{-1})$).
Therefore, as long as $k_{\rm{th}} = O(\log N)$, ${\rm{GHED}}_{\pm}$ is polynomially reducible from $|\text{GHE}|_{\pm}^{2}$, and accordingly, Theorem~\ref{maintheoreminformal} is straightforward.

\begin{proof}[Proof of Theorem~\ref{maintheorem}]
Let $\mathcal{O}$ be the oracle that solves ${\rm{GHED}}_{\pm}$ problem, i.e., on input $\eta \in [0, \eta^*]$ and $X \sim \mathcal{N}(0,1)_{\mathbb{C}}^{N\times M}$, estimates $P(\eta,X)$ to within $\epsilon \binom{\frac{M}{2} + \frac{N}{2} - 1}{\frac{N}{2}}N!$ over $1-\delta$ of $X$.
More concretely, we have
\begin{equation}
    \Pr_{X}\left[\frac{|\mathcal{O}(\eta,X) - P(\eta,X)|}{ \binom{\frac{M}{2} + \frac{N}{2} - 1}{\frac{N}{2}}N!} > \epsilon \right] < \delta ,
\end{equation}
where we divided both sides in $\Pr[\cdot]$ by the multiplicative factor $\binom{\frac{M}{2} + \frac{N}{2} - 1}{\frac{N}{2}}N!$ for more clarity. 
Because $P(\eta,X) = Q(\eta)^{-1}R_{X}(\eta)$, one can estimate the polynomial $R_{X}(\eta)$ by multiplying $Q(\eta)$ on given estimated value of $P(\eta,X)$.
We note that this process can be done in classical polynomial time; we find in Sec.~\ref{appendix:section:efficient} that $Q(\eta)$ is efficiently computable to within a desired imprecision level.

To bound the error induced by the multiplicative factor $Q(\eta)$, we denote $Q_{\rm{max}}$ as the maximum value of $Q(\eta)$ over $\eta \in [\eta_{\rm{min}}, \eta^*]$ for some yet-to-be-determined free parameter $\eta_{\rm{min}} < \eta^*$, such that
\begin{align}\label{eq: def of Qmax}
    Q_{\rm{max}} \coloneqq \max_{\eta\in [\eta_{\rm{min}},\eta^*]}Q(\eta).
\end{align}
Then, for any input noise rate $\eta \in [\eta_{\rm{min}},\eta^*]$, one can estimate $R_{X}(\eta)$ with high probability over $X$ as
\begin{align}
    \Pr_{X}\left[\frac{|\mathcal{O}(\eta,X)Q(\eta) - R_{X}(\eta)|}{\binom{\frac{M}{2} + \frac{N}{2} - 1}{\frac{N}{2}}N!} > Q_{\rm{max}}\epsilon  \right] &\le \Pr_{X}\left[\frac{|\mathcal{O}(\eta,X)Q(\eta) - R_{X}(\eta)|}{\binom{\frac{M}{2} + \frac{N}{2} - 1}{\frac{N}{2}}N!} > Q(\eta)\epsilon  \right] \\
    &= \Pr_{X}\left[\frac{|\mathcal{O}(\eta,X) - P(\eta,X)|}{\binom{\frac{M}{2} + \frac{N}{2} - 1}{\frac{N}{2}}N!} > \epsilon  \right] \\
    &< \delta  \label{eeee}.   
\end{align}

Next, by Lemma~\ref{lemma:distancebytrunctionissmall}, the difference between $R_{X}(\eta)$ and $R_{X}^{(l)}(\eta)$ is at most $\epsilon_1\binom{\frac{M}{2} + \frac{N}{2} - 1}{\frac{N}{2}}N!$ with high probability over $X \sim \mathcal{N}(0,1)_{\mathbb{C}}^{N \times M}$, where the truncation error $\epsilon_1$ is defined in Eq.~\eqref{epsilon1}.
Using this fact, one can also well-estimate $R_{X}^{(l)}(\eta)$ by estimating $R_{X}(\eta)$ for the average-case $X$. 
To proceed, let us denote $\epsilon' = Q_{\rm{max}}\epsilon + \epsilon_1$. 
Then we have
\begin{align}
\Pr_{X}\left[\frac{|\mathcal{O}(\eta,X)Q(\eta) - R_{X}^{(l)}(\eta)|}{\binom{\frac{M}{2} + \frac{N}{2} - 1}{\frac{N}{2}}N!} > \epsilon' \right] 
&\le \Pr_{X}\left[\frac{|\mathcal{O}(\eta,X)Q(\eta) - R_{X}(\eta)|}{ \binom{\frac{M}{2} + \frac{N}{2} - 1}{\frac{N}{2}}N!} > Q_{\rm{max}}\epsilon \right]  + \Pr_X\left[ \frac{|R_{X}(\eta) - R_{X}^{(l)}(\eta)|}{ \binom{\frac{M}{2} + \frac{N}{2} - 1}{\frac{N}{2}}N!} > \epsilon_1\right] \label{qqqq} \\
&< 2\delta \label{wwww} ,
\end{align}
where we used triangular inequality in Eq.~\eqref{qqqq}, and used Lemma~\ref{lemma:distancebytrunctionissmall} and Eq.~\eqref{eeee} in Eq.~\eqref{wwww}.

Accordingly, one can estimate $R_{X}^{(l)}(\eta)$ to within $\epsilon'\binom{\frac{M}{2} + \frac{N}{2} - 1}{\frac{N}{2}}N!$ with probability at least $1 - 2\delta$ over $X \sim \mathcal{N}(0,1)_{\mathbb{C}}^{N\times M}$, given an oracle access to ${\rm{GHED}}_{\pm}$ and multiplying the computed value $Q(\eta)$. 
Based on this understanding, we infer the desired value $R_{X}^{(l)}(1)$ from the estimated values of $R_{X}^{(l)}(\eta)$ for $\eta \in [\eta_{\rm{min}}, \eta^*]$ via polynomial interpolation.

For convenience of later polynomial interpolation, using the conventions for $\eta^*$ and $\eta_{\rm{min}}$ in terms of $k^*$ and $k_{\rm{max}}$ depicted in Eq.~\eqref{appendix: convention for eta-star} and Eq.~\eqref{appendix: convention for eta-max}, we replace the $l$-degree polynomial to $F(x) = R_{X}^{(l)}(g(x))$ with a linear function 
\begin{align}
    g(x) = \frac{N(k_{\rm{max}} + k^*) + 2k_{\rm{max}}k^*}{2(N+k_{\rm{max}})(N+k^*)}(x-1) + 1 ,
\end{align}
to rescale the input variable $\eta$ to $x\in [-\Delta, \Delta]$ for some parameter $\Delta$ (not specified yet) while maintaining the degree of the polynomial to $l$.
Note that by definition, $g(1) = 1$, and thus we still have the desired value at $x=1$, i.e., $F(1) = R_{X}^{(l)}(1) = P(X)$.

To ensure that $g(x) = \eta$ is bounded in $[\eta_{\rm{min}}, \eta^*]$ for $x \in [-\Delta, \Delta]$, $\Delta$ must satisfies $g(-\Delta) \geq \eta_{\rm{min}}$ and $g(\Delta) \leq \eta^*$ at each boundary (because $g(x)$ monotonically increases with $x$), which leads to the condition
\begin{align}\label{appendix: eq: upper bound of Delta}
    \Delta \leq \frac{1}{1 + \frac{2k_{\rm{max}}k^*}{N}}\frac{k_{\rm{max}} - k^*}{k_{\rm{max}} + k^*}.
\end{align}
Since we have the freedom to choose the value of $k_{\rm{max}}$ for given $k^*$, at this moment we do not specify $k_{\rm{max}}$ except that $k_{\rm{max}}$ identically scales with $k^*$ such that $\frac{k_{\rm{max}}}{k^*} = \Theta(1)$; later, we will specify the value of $k_{\rm{max}}$ in terms of $k^*$ that satisfies this condition. 
To proceed, we further assume that $k_{\rm{max}}$ and $k^*$ satisfy $\frac{k_{\rm{max}}k^*}{N} \leq \frac{1}{40}$, which automatically implies $k_{\rm{max}},\,k^* \leq O(\sqrt{N})$.
Under this assumption, setting $\Delta = \frac{20}{21}\frac{k_{\rm{max}}-k^*}{k_{\rm{max}} + k^*}$ is sufficient to satisfy the condition in Eq.~\eqref{appendix: eq: upper bound of Delta}. 
Note that this also implies $\Delta = \Theta(1)$ with $\Delta \in (0,1)$.


We now prepare the $l + 1$ number of input values $\{x_i\}_{i=1}^{l+1}$, which is a set of equally spaced points in the interval $x_i \in [-\Delta, \Delta]$.
For each $x_i$, we obtain a estimation value $y_i =  \mathcal{O}(g(x_i),X)Q(g(x_i))$ for $F(x_i)$ which satisfies the bound in Eq.~\eqref{wwww} such that
\begin{equation}
    \Pr\left[|y_i - F(x_i)| > \epsilon'\binom{\frac{M}{2} + \frac{N}{2} - 1}{\frac{N}{2}}N! \right] < 2\delta.
\end{equation}
Using those estimation values, we infer the value $F(1) = R_{X}^{(l)}(1)  = P(X)$;
to do so, we use the Lagrange interpolation for $F(x)$, under the condition that all the $(x_i, y_i)$ points satisfy $|y_i - F(x_i)| \le \epsilon'\binom{\frac{M}{2} + \frac{N}{2} - 1}{\frac{N}{2}}N!$.
By using the union bound, the probability that all the $l+1$ number of $y_i$ points are $\epsilon'\binom{\frac{M}{2} + \frac{N}{2} - 1}{\frac{N}{2}}N!$-close to $F(x_i)$ is at least $(1-2\delta)^{l+1}$. 
Given that all the $l+1$ points are successful, the Lagrange interpolation gives an estimation value $F(1)$, where we use the error bound of Lagrange interpolation derived by Ref.~\cite{kondo2022quantum} as follows.  

\begin{lemma}[Kondo et al~\cite{kondo2022quantum}]
Let $h(x)$ be a polynomial of degree at most $d$, Let $\Delta\in(0,1)$. Assume that $|h(x_j)|\le \epsilon$ for all of the $d+1$ equally-spaced points $x_j = -\Delta + \frac{2j}{d}\Delta$ for $j = 0,\dots,d$. Then
\begin{equation}\label{interpolationerror}
    |h(1)| < \epsilon\frac{\exp[d(1+\log\Delta^{-1})]}{\sqrt{2\pi d}}.
\end{equation}
\end{lemma}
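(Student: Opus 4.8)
The plan is to prove this as a pure statement about Lagrange extrapolation, using the hypothesis only through $|h(x_j)|\le\epsilon$. Since $h$ has degree at most $d$ and we are given its values at the $d+1$ distinct nodes $x_0,\dots,x_d$, it coincides with its Lagrange interpolant, so evaluating at the extrapolation point $x=1$ gives
\begin{equation}
h(1) = \sum_{j=0}^{d} h(x_j)\,\ell_j(1), \qquad \ell_j(x) = \prod_{k\ne j}\frac{x-x_k}{x_j-x_k}.
\end{equation}
The triangle inequality together with $|h(x_j)|\le\epsilon$ then reduces everything to bounding the extrapolation Lebesgue number $L \coloneqq \sum_{j=0}^d |\ell_j(1)|$, since $|h(1)| \le \epsilon L$. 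I would note that this step is essentially lossless: aligning the signs of the $h(x_j)$ with those of $\ell_j(1)$ makes it tight, so the entire content of the lemma is the sharp estimate $L \le \exp[d(1+\log\Delta^{-1})]/\sqrt{2\pi d}$.

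Next I would expose the node structure by rescaling. The affine map $x\mapsto \frac{d}{2\Delta}(x+\Delta)$ carries the equally-spaced nodes $x_j$ to the integers $0,1,\dots,d$ and the target $x=1$ to $\beta \coloneqq \frac{d(1+\Delta)}{2\Delta}$. Because Lagrange basis values are invariant under reparametrization of the abscissa, this gives the closed form
\begin{equation}
\ell_j(1) = (-1)^{d-j}\,\frac{\prod_{k\ne j}(\beta-k)}{j!\,(d-j)!}, \qquad L = \sum_{j=0}^{d} \frac{\prod_{k\ne j}(\beta-k)}{j!\,(d-j)!},
\end{equation}
where all factors $\beta-k$ are positive because $\Delta<1$ forces $\beta>d$. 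The problem is thereby reduced to a purely combinatorial estimate of $L$.

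The decisive step is to bound $L$ with the \emph{sharp} constant. The naive estimate $\prod_{k\ne j}(\beta-k)\le\beta^{d}$ together with $\sum_j\binom{d}{j}=2^{d}$ only yields $L\le (2\beta)^{d}/d!$, and since $2\beta=\frac{d(1+\Delta)}{\Delta}$ this produces the weaker exponent $1+\log\frac{1+\Delta}{\Delta}$, losing a spurious factor of order $(1+\Delta)^{d}$. To recover $1+\log\Delta^{-1}$ I would exploit that the original nodes are symmetric about the origin, i.e. $\sum_{k=0}^{d} x_k = 0$, equivalently $\sum_{k=0}^d(\beta-k)=(d+1)(\beta-\tfrac{d}{2})$ with $2(\beta-\tfrac{d}{2})=\frac{d}{\Delta}$. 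The true scale governing the product of factors is therefore $\beta-\tfrac{d}{2}$, not $\beta$. Concretely I would bound $\prod_{k\ne j}(\beta-k)$ by an arithmetic--geometric mean inequality and then control the resulting weighted binomial sum through its concentration near $j=d/2$ (a second-order Taylor, or $\cosh$, estimate), finishing with the Stirling lower bound $d!\ge\sqrt{2\pi d}\,(d/e)^{d}$ to arrive at $L\le e^{d}\Delta^{-d}/\sqrt{2\pi d}=\exp[d(1+\log\Delta^{-1})]/\sqrt{2\pi d}$.

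The main obstacle is precisely this last estimate: matching the constant in the exponent and the $1/\sqrt{2\pi d}$ prefactor exactly, rather than an off-by-a-constant-factor version. The symmetry $\sum_k x_k=0$ is the structural fact that makes the sharp $\Delta^{-1}$ attainable, but the delicate point is verifying that the lower-order corrections — both from replacing $\prod_{k\ne j}(\beta-k)$ by its geometric-mean surrogate and from the spread of the binomial weights about $j=d/2$ — are uniformly absorbed by the Stirling slack for every $d\ge1$ and $\Delta\in(0,1)$, the slack being thinnest at small $d$. I would calibrate the argument against the exactly solvable boundary cases, where $L$ admits the closed forms $\tfrac1\Delta$ for $d=1$ and $\tfrac{2}{\Delta^{2}}-1$ for $d=2$, and then push the concentration estimate through for general $d$.
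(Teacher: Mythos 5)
You should first note that the paper contains no proof of this lemma at all: it is imported verbatim from Ref.~\cite{kondo2022quantum}, so your attempt can only be compared against the cited source's argument. Your skeleton is the standard and correct one (Lagrange extrapolation, reduction to the Lebesgue-type constant $L=\sum_j|\ell_j(1)|$, rescaling to integer nodes — your closed forms $\ell_j(1)=(-1)^{d-j}\prod_{k\ne j}(\beta-k)/(j!(d-j)!)$, $L=1/\Delta$ at $d=1$ and $L=2/\Delta^2-1$ at $d=2$ all check out, as does your diagnosis that the naive bound loses a factor $(1+\Delta)^d$). The genuine gap is exactly where you flagged it: the decisive estimate is only sketched, and the specific route you propose provably cannot reach the stated constant on the whole range $\Delta\in(0,1)$. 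Carrying out your plan, the AM over the $d$ factors is $\frac{d}{2\Delta}+\frac{j-d/2}{d}$, and the concentration/$\cosh$ step gives
\begin{equation}
L \;\le\; \frac{(d/2\Delta)^d}{d!}\sum_{j=0}^{d}\binom{d}{j}\Bigl(1+\tfrac{2\Delta(j-d/2)}{d^2}\Bigr)^{d}
\;\le\; \frac{(d/\Delta)^d}{d!}\cosh^d\!\bigl(\Delta/d\bigr)
\;\le\; \frac{(d/\Delta)^d}{d!}\,e^{\Delta^2/(2d)} ,
\end{equation}
so you need the Stirling slack $d!\ge\sqrt{2\pi d}\,(d/e)^d e^{1/(12d+1)}$ to absorb $e^{\Delta^2/(2d)}$. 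This requires $\Delta^2\le \frac{2d}{12d+1}$, which fails for every $d$ once $\Delta>1/\sqrt{6}\approx 0.41$ — and note the paper actually invokes the lemma at $\Delta=10/21>1/\sqrt6$, so the regime where your plan breaks is the relevant one. Contrary to your guess, the obstruction is not thinnest at small $d$; it persists uniformly in $d$.

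The missing idea is to use the symmetry of the nodes \emph{multiplicatively} rather than only through the mean $\sum_k x_k=0$. Pairing $k\leftrightarrow d-k$ gives $(1-x_k)(1-x_{d-k})=1-x_k^2\le 1$, with the outer pair contributing exactly $1-\Delta^2$, hence $\prod_{k=0}^{d}(1-x_k)\le 1-\Delta^2$ and $\prod_{k\ne j}(1-x_k)\le \frac{1-\Delta^2}{1-x_j}$; pairing $j\leftrightarrow d-j$ in the weighted sum then gives $\sum_j\binom{d}{j}\frac{1}{1-x_j}=\sum_j\binom{d}{j}\frac{1}{1-x_j^2}\le\frac{2^d}{1-\Delta^2}$. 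Combining,
\begin{equation}
L\;\le\;\Bigl(\frac{d}{2\Delta}\Bigr)^{d}\frac{1}{d!}\,(1-\Delta^2)\cdot\frac{2^d}{1-\Delta^2}
\;=\;\frac{(d/\Delta)^d}{d!}\;<\;\frac{e^{d(1+\log\Delta^{-1})}}{\sqrt{2\pi d}} ,
\end{equation}
with the strict final step from $d!>\sqrt{2\pi d}\,(d/e)^d$ — no delicate absorption needed, valid for all $d\ge1$ and $\Delta\in(0,1)$. This is essentially the argument of the cited source; your proposal identifies the right structural fact (symmetry about the origin) but deploys it through an additive/AM--GM surrogate that is quantitatively too weak, so as written the proof does not close.
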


By drawing $l$-degree polynomial $F_e(x)$ with $\{(x_i,y_i)\}_{i=1}^{l+1}$ points, for $h(x) \coloneq F_e(x) - F(x)$, $|h(x_i)| \le \epsilon'\binom{\frac{M}{2} + \frac{N}{2} - 1}{\frac{N}{2}}N!$ for $x_i$ satisfying $|y_i - F(x_i)| \le \epsilon'\binom{\frac{M}{2} + \frac{N}{2} - 1}{\frac{N}{2}}N!$.
Therefore, given the error bound in Eq.~\eqref{interpolationerror}, we can obtain an estimator $F_e(1)$ for $F(1)$ whose error is bounded from above as $|h(1)| < \epsilon'\frac{\exp[l(1+\log\Delta^{-1})]}{\sqrt{2\pi l}}\binom{\frac{M}{2} + \frac{N}{2} - 1}{\frac{N}{2}}N! < \epsilon'e^{l(1+\log\Delta^{-1})}\binom{\frac{M}{2} + \frac{N}{2} - 1}{\frac{N}{2}}N!$ via Lagrange interpolation, with success probability at least $(1-2\delta)^{l+1}$. 
More concretely, we can estimate the desired value $F(1) = P(X)$ satisfying
\begin{align}\label{estimationbyinterpolation}
\begin{split}
    \Pr\left[|F_e(1) - P(X)| > \epsilon'e^{l(1+\log\Delta^{-1})}\binom{\frac{M}{2} + \frac{N}{2} - 1}{\frac{N}{2}}N! \right] 
    < 1 - (1-2\delta)^{l+1}.
\end{split}
\end{align}

Therefore, to solve $|\rm{GHE}|_{\pm}^{2}$ problem, it is sufficient to set the error parameters $\epsilon$ and $\delta$ satisfying $\epsilon_0 \ge \epsilon'e^{l(1+\log\Delta^{-1})}$ and $\delta_0 \ge 1 - (1-2\delta)^{l+1}$.
Accordingly, we set $\delta$ satisfying $\delta_0 = 1 - (1-2\delta)^{l+1}$, which implies that $\delta = O(\delta_0 l^{-1})$.
Also, by combining the results so far, we have
\begin{align}
    \epsilon_0 \ge e^{l(1+\log\Delta^{-1})}(Q_{\rm{max}}\epsilon + \epsilon_1),
\end{align}
which requires $\epsilon$ to satisfy the following condition:
\begin{align}\label{conditionforreduction}
\epsilon \le (\epsilon_0 e^{-l(1+\log\Delta^{-1})} - \epsilon_1)Q_{\rm{max}}^{-1},
\end{align}
where the truncation error $\epsilon_1$ is given in Eq.~\eqref{epsilon1}.

However, since we set the error parameter $\epsilon$ in ${\rm{GHED}}_{\pm}$ problem positive, one needs to make $\epsilon_0 e^{-l(1+\log\Delta^{-1})}$ sufficiently larger than $\epsilon_1$ such that the condition reduces to $\epsilon \leq O(\epsilon_0 e^{-l(1+\log\Delta^{-1})}Q_{\rm{max}}^{-1})$. 
On the one hand, since the truncation error $\epsilon_1$ given in Eq.~\eqref{epsilon1} scales $\sim k_{\rm{max}}^{l}/l! = e^{-l\log l + l\log k_{\rm{max}} + O(l)}$, one can arbitrarily increase $l$ to make $\epsilon$ sufficiently smaller than $\epsilon_0 e^{-l(1+\log\Delta^{-1})}$.
On the other hand, arbitrarily increasing $l$ makes arbitrarily small $\epsilon$ by the condition $\epsilon \leq O(\epsilon_0 e^{-l(1+\log\Delta^{-1})}Q_{\rm{max}}^{-1})$ for the reduction.
Therefore, it is necessary to find an intermediate-sized $l$ to make $\epsilon_1 \ll \epsilon_0 e^{-l(1+\log\Delta^{-1})}$ and also prevent arbitrarily decreasing $\epsilon_0 e^{-l(1+\log\Delta^{-1})}Q_{\rm{max}}^{-1}$.

We derive in Sec.~\ref{appendix: section: investigation} that by setting $l$ in terms of $k_{\rm{max}}$ as
\begin{align}
    l = \frac{e^2}{\Delta}\chi k_{\rm{max}}  + \log(N \epsilon_0^{-1}\delta_0^{-1}),
\end{align}
for $\chi$ given by
\begin{align}\label{appendix: eq: chi}
    \chi = \exp[\frac{\log\log(N\epsilon_0^{-1}\delta_0^{-1})}{\log(N\epsilon_0^{-1}\delta_0^{-1})}],
\end{align}
it is sufficient to make $\epsilon_1 = o(\epsilon_0 e^{-l(1+\log\Delta^{-1})})$, such that the condition in Eq.~\eqref{conditionforreduction} reduces to 
\begin{align}
    \epsilon \leq O\left(\epsilon_0 e^{-(1+\log\Delta^{-1}) \left[\frac{e^2}{\Delta}\chi k_{\rm{max}}  + \log(N \epsilon_0^{-1}\delta_0^{-1}) \right]}Q_{\rm{max}}^{-1} \right).
\end{align}
Also, we find in Sec.~\ref{appendix: section: investigation} that $Q_{\rm{max}}$ satisfies $Q_{\rm{max}}^{-1} \geq \Omega(\frac{1}{\sqrt{N}}e^{-k_{\rm{max}}})$ provided that $k_{\rm{max}} \leq O(\sqrt{N})$.
Lastly, to set the error bound in terms of given $k^*$, we now specify the maximum value of $k$ as $k_{\rm{max}} = 3 k^*$, which makes $\Delta = \frac{20}{21}\frac{k_{\rm{max}} - k^*}{k_{\rm{max}} + k^*} = \frac{10}{21}$.
Combining all these arguments, we finally obtain the condition for $\epsilon$ given by
\begin{align}
\epsilon \le O\left( \frac{\epsilon_0^{2+\log 2.1}\delta_0^{1+\log 2.1}}{N^{\frac{3}{2} +\log2.1 }} e^{-\left(6(1+\log2.1)e^2\chi +3 \right)k^*  }\right).
\end{align}
Additionally, since we assumed that $\frac{k_{\rm{max}}k^*}{N} \leq \frac{1}{40}$, our setting $k_{\rm{max}} = 3 k^*$ yields the condition $\frac{(k^*)^2}{N} \leq \frac{1}{120}$, which is satisfied given that $1 - \eta^* \leq \frac{1}{12\sqrt{N}}$ as stated in Theorem~\ref{maintheorem}.

To sum up, by setting $\delta$ and $\epsilon$ as
\begin{align}
    \delta &= O\left(\delta_0l^{-1}\right) \\
    &= O\left(\frac{\delta_0}{k^* + \log (N\epsilon_0^{-1}\delta_0^{-1})} \right),
\end{align}
and
\begin{align}
\epsilon = O\left( \epsilon_0^{3}\delta_0^{2}N^{-3} e^{-\alpha k^*  }\right) ,
\end{align}
for $\alpha \coloneqq 6(1+\log2.1)e^2\chi + 3$ with $\chi$ given in Eq.~\eqref{appendix: eq: chi}, one can estimate $P(X)$ to within $\epsilon_0 \binom{\frac{M}{2} + \frac{N}{2} - 1}{\frac{N}{2}}N!$ with probability at least $1 - \delta_0$ over $X$, given access to the oracle $\mathcal{O}$ for the ${\rm{GHED}}_{\pm}$ problem.
This completes the proof.
\end{proof}

\subsubsection{Specifying parameters for the reduction}\label{appendix: section: investigation}

In the remainder, we specify the parameters $k_{\text{\rm{max}}}$ and $l$, first to make the truncation error $\epsilon_1 = \frac{Nk_{\rm{max}}^{l}}{2\delta l!}$ scale smaller than $ \epsilon_0 e^{-l(1+\log\Delta^{-1})}$ such that $\epsilon_1 = o(\epsilon_0 e^{-l(1+\log\Delta^{-1})}) $, while avoiding $\epsilon_0 e^{-l(1+\log\Delta^{-1})}Q_{\rm{max}}^{-1}$ becomes exceedingly small. 
To do so, we aim to find a minimum $l$ that satisfies the first condition.

To satisfy the first condition, we first figure out $l$ that makes the following quantity $o(1)$:
\begin{align}
\frac{\epsilon_1}{\epsilon_0 e^{-l(1+\log\Delta^{-1})}} 
&=  \frac{\epsilon_0^{-1}\delta^{-1}}{2}\frac{Nk_{\rm{max}}^{l}}{l!}e^{l(1+\log\Delta^{-1})} \\
&\leq \frac{N\epsilon_0^{-1}\delta^{-1}}{2\sqrt{2\pi l}}e^{l\log k_{\rm{max}} - l\log l+2l+l\log\Delta^{-1}} \label{asdfqwerzx}\\
&= O\left( e^{\log\sqrt{l} + l\log\left(\frac{k_{\rm{max}}e^2}{\Delta l}\right) + \log (N\epsilon_0^{-1}\delta_0^{-1}) }\right) , \label{qwaszx}
\end{align}
where we used Stirling's inequality for $l!$ in Eq.~\eqref{asdfqwerzx}, and used the relation $\delta_0 = 1 - (1-2\delta)^{l+1}$ in Eq.~\eqref{qwaszx} which implies that $\delta^{-1} = O(l\delta_0^{-1})$.
Note that as we increase $l$, the right-hand side of Eq.~\eqref{qwaszx} gets smaller, and it will become $o(1)$ at some point.

Our approach is first to parameterize $l$ in terms of $k_{\rm{max}}$ as $l = \Gamma k_{\rm{max}}$ for a yet-to-be-determined constant $\Gamma$.
We further parameterize $\log(N\epsilon_0^{-1}\delta_0^{-1})$ in terms of $k_{\rm{max}}$ as $\log(N\epsilon_0^{-1}\delta_0^{-1}) = \Lambda k_{\rm{max}}$ for $\Lambda \in [0,\infty)$, such that $\Lambda$ is determined by the relation between $\log(N\epsilon_0^{-1}\delta_0^{-1})$ and $k_{\rm{max}}$. For example, if $k_{\rm{max}} = o(\log(N\epsilon_0^{-1}\delta_0^{-1}))$, then $\Lambda$ will diverge to $\infty$ as system size scales.
On the other hand, if $k_{\rm{max}} = \omega(\log(N\epsilon_0^{-1}\delta_0^{-1}))$, then $\Lambda$ will converge to $0$.

Using these conventions, in the exponent on the right-hand side of Eq.~\eqref{qwaszx}, we have
\begin{align}\label{appendix: eq: exponent}
    \log\sqrt{\Gamma k_{\rm{max}}} + \Gamma k_{\rm{max}}\log(\frac{e^2}{\Delta\Gamma}) + \Lambda k_{\rm{max}} = \log\sqrt{\Gamma k_{\rm{max}}} + \left[\Gamma\left( \log(\frac{e^2}{\Delta}) - \log\Gamma \right) + \Lambda \right]k_{\rm{max}} .
\end{align}
To make the right-hand side of Eq.~\eqref{qwaszx} scale $o(1)$, the right-hand side term in Eq.~\eqref{appendix: eq: exponent} should be at least negative, and this requires at least $\Gamma > \frac{e^2}{\Delta}$.
Accordingly, we set $\Gamma$ as 
\begin{align}
    \Gamma = \frac{e^2}{\Delta}\chi + \Lambda ,
\end{align}
for some parameter $\chi \geq 1$ not specified at this moment. 
Then, from the right-hand side of Eq.~\eqref{appendix: eq: exponent}, we have 
\begin{align}
   \Gamma\left( \log(\frac{e^2}{\Delta}) - \log\Gamma \right) + \Lambda &= -\left( \frac{e^2}{\Delta}\chi + \Lambda\right) \left( \log\chi + \log(1 + \frac{\Delta}{e^2\chi}\Lambda) \right) + \Lambda \\
    &= -\left( \frac{e^2}{\Delta}\chi + \Lambda\right) \log\chi -  \left( \frac{e^2}{\Delta}\chi + \Lambda\right)  \log(1 + \frac{\Delta}{e^2\chi}\Lambda)  + \Lambda \label{appendix: eq: ss} \\
    &\leq -\left( \frac{e^2}{\Delta}\chi + \Lambda \right)\log\chi, \label{appendix: eq: sa}
\end{align}
where the inequality in Eq.~\eqref{appendix: eq: sa} holds because the function $-(a+x)\log(1+x/a) + x \leq 0$ for all $x\geq 0$ given that $a > 0$.
Hence, we obtain 
\begin{align}
    \log\sqrt{\Gamma k_{\rm{max}}} + \Gamma k_{\rm{max}}\log(\frac{e^2}{\Delta\Gamma}) + \Lambda k_{\rm{max}} &\leq \frac{1}{2}\log(\left( \frac{e^2}{\Delta}\chi + \Lambda\right) k_{\rm{max}})-\left( \frac{e^2}{\Delta}\chi + \Lambda \right)(\log\chi) k_{\rm{max}} \\
    &\leq -\frac{1}{2}\log(\left( \frac{e^2}{\Delta}\chi + \Lambda\right) k_{\rm{max}}) \\
    & = -\frac{1}{2}\log( \frac{e^2}{\Delta}\chi k_{\rm{max}} + \log(N\epsilon_0^{-1}\delta_0^{-1})) ,
\end{align}
where the second inequality in the above holds as long as $\chi$ satisfies 
\begin{align}\label{appendix: eq: condition for chi}
    \log \chi \geq  \frac{\log( \frac{e^2}{\Delta}\chi k_{\rm{max}} + \log(N\epsilon_0^{-1}\delta_0^{-1}))}{\frac{e^2}{\Delta}\chi k_{\rm{max}} + \log(N\epsilon_0^{-1}\delta_0^{-1})} .
\end{align}
Note that $\log(x)/x$ is maximized at $x = e$ and then monotonically decreases to 0 as $x \rightarrow \infty$.
Accordingly, we define $\chi$ as
\begin{align}\label{appendix: eq: def of chi}
    \chi = \exp[\frac{\log\log(N\epsilon_0^{-1}\delta_0^{-1})}{\log(N\epsilon_0^{-1}\delta_0^{-1})}],
\end{align}
which satisfies the condition in Eq.~\eqref{appendix: eq: condition for chi} as long as the system size is not too small, i.e., $\log(N\epsilon_0^{-1}\delta_0^{-1}) \ge e$.
Combining all the above arguments, starting from Eq.~\eqref{qwaszx}, we finally obtain the bound
\begin{align}
    \frac{\epsilon_1}{\epsilon_0 e^{-l(1+\log\Delta^{-1})}} 
     \leq O\left( \frac{1}{\sqrt{\frac{e^2}{\Delta}\chi k_{\rm{max}} + \log(N\epsilon_0^{-1}\delta_0^{-1})}} \right),
\end{align}
where the right-hand side remains $o(1)$ for any $k_{\rm{max}}$, thus ensuring that the first condition is satisfied.

To sum up, by setting $l = \Gamma k_{\rm{max}} = \frac{e^2}{\Delta}\chi k_{\rm{max}} + \log(N\epsilon_0^{-1}\delta_0^{-1})$ for $\chi$ given in Eq.~\eqref{appendix: eq: def of chi}, we can make $\epsilon_1 = o(\epsilon_0 e^{-l(1+\log\Delta^{-1})})$.
Now, the condition for $\epsilon$ reduces to 
\begin{align}\label{appendix: eq: condition for epsilon}
    \epsilon &\leq O\left(\epsilon_0 e^{-l(1+\log\Delta^{-1})}Q_{\rm{max}}^{-1} \right)  \\
    &= O\left( \epsilon_0 e^{-(1+\log\Delta^{-1})\frac{e^2}{\Delta}\chi k_{\rm{max}}   -(1+\log\Delta^{-1}) \log(N\epsilon_0^{-1}\delta_0^{-1})}Q_{\rm{max}}^{-1}\right).
\end{align}
To proceed, we find in Sec.~\ref{appendix:section:efficient} that $Q(\eta) \le O(\sqrt{N}e^{(1-\eta)N})$ provided that $(1-\eta) \leq O(\frac{1}{\sqrt{N}})$.
Since we assume that $(1-\eta_{\rm{min}})$ scales identically to $(1-\eta^*) \leq \frac{1}{12\sqrt{N}}$, it follows that $(1-\eta_{\rm{min}}) \leq O(\frac{1}{\sqrt{N}})$.
Accordingly, $Q_{\rm{max}} \le O(\sqrt{N}e^{(1-\eta_{\rm{min}})N})$ and thus $Q_{\rm{max}}^{-1} \ge \Omega(\frac{1}{\sqrt{N}}e^{-(1-\eta_{\rm{min}})N})$.
To describe this lower bound in terms of $k_{\rm{max}}$, we have
\begin{align}
    e^{-(1-\eta_{\rm{min}})N} = e^{-\left(\frac{1}{\eta_{\rm{min}}} - 1\right)N + \frac{(1-\eta_{\rm{min}})^2}{\eta_{\rm{min}}}N} = \Theta(e^{-k_{\rm{max}}}),
\end{align}
provided that $(1-\eta_{\rm{min}}) \leq O(\frac{1}{\sqrt{N}})$.
Hence, given the lower bound of $Q_{\rm{max}}^{-1} \ge \Omega(\frac{1}{\sqrt{N}}e^{-k_{\rm{max}}})$, we have the condition of $\epsilon$ from Eq.~\eqref{appendix: eq: condition for epsilon} as
\begin{align}\label{modifiedconditionforepsilon}
\epsilon \le O\left(\frac{\epsilon_0 }{\sqrt{N}} e^{-\left((1+\log\Delta^{-1})\frac{e^2}{\Delta}\chi + 1 \right)k_{\rm{max}}   -(1+\log\Delta^{-1}) \log(N\epsilon_0^{-1}\delta_0^{-1})}\right).
\end{align}

Finally, as we have the freedom to choose the value of $\eta_{\rm{min}}$, now we specify the value of $k_{\rm{max}}$ as $k_{\rm{max}} = 3k^*$, which leads to $\Delta = \frac{20}{21}\frac{k_{\rm{max}} - k^*}{k_{\rm{max}} + k^*} = \frac{10}{21}$.
Then, the condition in Eq.~\eqref{modifiedconditionforepsilon} reduce to
\begin{align}
\epsilon &\le O\left(\frac{\epsilon_0 }{\sqrt{N}} e^{-3\left(2(1+\log2.1)e^2\chi + 1 \right)k^*   -(1+\log 2.1) \log(N\epsilon_0^{-1}\delta_0^{-1})}\right) \\
& = O\left( \epsilon_0^{2+\log 2.1}\delta_0^{1+\log 2.1}N^{-\frac{3}{2}-\log2.1 } e^{-3\left(2(1+\log2.1)e^2\chi + 1 \right)k^*  }\right),
\end{align}
thus obtaining the desired bound.

\section{Revisiting hardness proof of GBS}\label{supplement: hardness proof of GBS}

This section recaps the current hardness argument for ideal GBS and discusses how to extend this result for the lossy GBS case.

The current proof technique for the hardness of sampling problems, including GBS, essentially builds upon Stockmeyer's algorithm about approximate counting~\cite{stockmeyer1985approximation}.
Specifically, given a randomized classical sampler that outputs a sample from a given output distribution, Stockmeyer's algorithm enables one to multiplicatively estimate a fixed output probability of the sampler, within complexity class $\rm{BPP}^{\rm{NP}}$. 
Hence, if there exists an efficient classical sampler that mimics GBS, then one can approximate the output probability of GBS within complexity class $\rm{BPP}^{\rm{NP}}$, where the approximation error is determined by how well such a classical sampler can mimic GBS.

More formally, suppose there exists a randomized classical sampler $\mathcal{S}_0$ that, on input a circuit unitary matrix $U$, samples the $N$-photon outcome according to the output distribution of (post-selected) ideal GBS (i.e., $p_{S}(U)$ in Eq.~\eqref{supple: ideal output probability}), within total variation distance $\beta$. 
Then, for $U$ given as random unitary matrix drawn from Haar measure, one can prove that given access to the sampler $\mathcal{S}_0$, $|\text{GHE}|_{\pm}^{2}$ defined in Problem~\ref{def: GHE} can be solved for $\beta = O(\epsilon_0\delta_0)$ using Stockmeyer's algorithm~\cite{stockmeyer1985approximation}, such that $|\text{GHE}|_{\pm}^{2} \in \rm{BPP}^{NP^{\mathcal{S}_0}}$; we leave a formal proof of this at the end of this section. 
Hence, if an efficient classical algorithm for $\mathcal{S}_0$ with $\beta = O(\epsilon_0\delta_0)$ exists, $|\text{GHE}|_{\pm}^{2}$ can be solved in $\text{BPP}^{\text{NP}}$, and thus \#P-hardness of $|\text{GHE}|_{\pm}^{2}$ implies the collapsion of polynomial hierarchy to the finite level $\text{BPP}^{\text{NP}}$.
In other words, proving \#P-hardness of $|\text{GHE}|_{\pm}^{2}$ implies the non-existence of the classical algorithm that can simulate the ideal GBS within the total variation distance $\beta$ in $\text{poly}(N,\beta^{-1})$ time, given that polynomial hierarchy does not collapse to the finite level.
More detailed analysis can be checked in Ref.~\cite{aaronson2011computational, kruse2019detailed}.

We can naturally extend this argument to the lossy GBS case by replacing the problem $|\text{GHE}|_{\pm}^{2}$ to its noisy version ${\rm{GHED}}_{\pm}$ defined in Problem~\ref{def: sumGHE}.
Specifically, consider there exists a randomized classical sampler $\mathcal{S}$ that, on input a circuit unitary matrix $U$ and a transmission rate $\eta \in [0, \eta^*]$, samples the $N$-photon outcome according to the output distribution of (post-selected) lossy GBS (i.e., $p_{S}(\eta, U)$ in Eq.~\eqref{supple: noisy output probability}), within total variation distance error $\beta$.
Then, for Haar-random unitary matrix $U$, exactly following the analysis for the ideal case, one can find that ${\rm{GHED}}_{\pm} \in \rm{BPP}^{NP^{\mathcal{S}}}$.
Hence, if an efficient classical algorithm for the lossy GBS sampler $\mathcal{S}$ with $\eta \in [0, \eta^*]$ and $\beta = O(\epsilon\delta)$ exists, ${\rm{GHED}}_{\pm}$ can be solved in $\text{BPP}^{\text{NP}}$.
Accordingly, similarly as before, proving \#P-hardness of ${\rm{GHED}}_{\pm}$ implies that no classical algorithm can simulate the lossy GBS within the total variation distance $\beta$ in $\text{poly}(N,\beta^{-1})$ time, unless polynomial hierarchy does not collapse to the finite level.

For a more self-contained analysis, we give a proof of $|\text{GHE}|_{\pm}^{2} \in \rm{BPP}^{NP^{\mathcal{S}_0}}$ in the following.
This essentially follows the analysis given in Ref.~\cite{aaronson2011computational, kruse2019detailed} but using slightly different scaling factors.
We emphasize that the proof of ${\rm{GHED}}_{\pm} \in \rm{BPP}^{NP^{\mathcal{S}}}$ is straightforward from the proof of $|\text{GHE}|_{\pm}^{2} \in \rm{BPP}^{NP^{\mathcal{S}_0}}$;
we just need to replace the ideal output probability with the noisy output probability in the proof.

\begin{proof}[Proof of $|\rm{GHE}|_{\pm}^{2} \in \rm{BPP}^{NP^{\mathcal{S}_0}}$]

To solve $|\text{GHE}|_{\pm}^{2}$ problem, given a Gaussian random matrix $X \sim \mathcal{N}(0,1)_{\mathbb{C}}^{N\times M}$ and error parameters $\epsilon_0$, $\delta >0$, one needs to estimate $P(X)$ to within additive error $\epsilon_0\binom{\frac{M}{2} + \frac{N}{2} - 1}{\frac{N}{2}}N!$ with probability at least $1-\delta$, in $\text{poly}(N, \epsilon_0^{-1},\delta_0^{-1})$ time. 
We now show that this task can be solved in complexity class $\text{BPP}^{\text{NP}}$, given access to the randomized classical sampler $\mathcal{S}_0$.

Before proceeding, we first denote $\mathcal{H}^{M}$ as a distribution of the $M$ by $M$ unitary matrix drawn from the Haar measure on U(M). 
Because we are considering $M \propto N^{\gamma}$ with large enough $\gamma > 2$, the distribution of $XX^{T}$ has bounded total variation distance to the distribution of $M(UU^T)_{S}$ for $M$ by $M$ Haar-random unitary matrix $U \sim \mathcal{H}^{M}$ and any collision-free outcome $S$~\cite{jiang2009entries, shou2025proof}.
Then, as argued in~\cite{aaronson2011computational, kruse2019detailed}, there exists $\text{BPP}^{\text{NP}}$ procedure that samples $M$ by $M$ Haar-random unitary matrix $U \sim \mathcal{H}^{M}$ with high probability such that given $X \sim \mathcal{N}(0,1)_{\mathbb{C}}^{N\times M}$, $\frac{1}{M}(XX^T)$ is exactly an $N$ by $N$ random submatrix of $UU^T$.
Moreover, as the distribution similarity between $XX^{T}$ and $M(UU^T)_{S}$ increases with $M$~\cite{shou2025proof}, the success probability can be bounded to at least $1 - \frac{\delta_0}{4}$ for a sufficiently large $M$ compared to $N$ and $\delta_0$.

After successfully sample $U \sim \mathcal{H}^{M}$, we can find a specific collision-free outcome $S^{\ast}$ such that $\frac{1}{M}(XX^T) = (UU^T)_{S^{\ast}}$ for our given Gaussian matrix $X$. 
Then, $P(X)$ can also be represented as 
\begin{align}
    P(X) = |\Haf(XX^T)|^2 = M^{N}|\Haf((UU^T)_{S^{\ast}})|^2 = M^{N}\binom{\frac{M}{2} + \frac{N}{2} - 1}{\frac{N}{2}}p_{S^{\ast}}(U),
\end{align}
for the output probability $p_{S^{\ast}}(U)$ of GBS.
Hence, the problem reduces to estimate $p_{S^{\ast}}(U)$ to within additive error $\epsilon_0M^{-N}N!$ with high probability (at least $1 - \frac{3}{4}\delta_0$, as a failure probability to sample $U$ is $\frac{\delta_0}{4}$). 
Here, observe that
\begin{align}
    \frac{N!}{M^N}\geq \frac{1}{2}\frac{N! (M-N)!}{M!} = \frac{1}{2}\binom{M}{N}^{-1} ,
\end{align}
where the inequality holds as long as $M = \omega(N^2)$. 
Therefore, we can further reduce the problem to estimate $p_{S^{\ast}}(U)$ to within additive error $\frac{\epsilon_0}{2}\binom{M}{N}^{-1}$ with probability at least $1 - \frac{3}{4}\delta_0$.

Now, for general $N$-photon collision-free outcome $S$, given $p_{S}(U)$ as the ideal output probability, let $\bar{p}_{S}(U)$ be the output probability distribution of the approximate sampler $\mathcal{S}_0$ with the given unitary circuit $U$. 
Also, let $\mathcal{C}_{M,N}$ be a set of ``collision-free" $N$-photon outcomes of GBS over $M$ modes, and let $\mathcal{G}_{M,N}$ be a uniform distribution over $\mathcal{C}_{M,N}$.
Note that by definition, $\sum_{S \in \mathcal{C}_{M,N}} |\bar{p}_{S}(U) - p_{S}(U)| \le 2\beta$, because total variation distance considers all possible $N$-photon outcomes including collision outcomes.
Then, $\bar{p}_{S}(C)$ satisfies 
\begin{align}\label{averageofdifference}
    \E_{S\sim\mathcal{G}_{M,N}} \left[|\bar{p}_{S}(U) - p_{S}(U)|\right] = \frac{1}{\binom{M}{N}} \sum_{S\in\mathcal{C}_{M,N}}|\bar{p}_{S}(U) - p_{S}(U)|
    \le \frac{2\beta}{\binom{M}{N}}.
\end{align}
Using Eq.~\eqref{averageofdifference} and Markov's inequality, $\bar{p}_{S}(U)$ satisfies
		\begin{align}
			\Pr_{S\sim\mathcal{G}_{M,N}} \left[|\bar{p}_{S}(U) - p_{S}(U)| \geq  \frac{\beta \kappa}{\binom{M}{N}}\right] \leq \frac{\binom{M}{N}}{\beta \kappa} \E_{S\sim\mathcal{G}_{M,N}} \left[|\bar{p}_{S}(U) - p_{S}(U)|\right] \leq  \frac{2}{\kappa},
		\end{align}
for all $\kappa > 2$.
Here, note that the Haar-random unitary matrix $U\sim\mathcal{H}^{M}$ has a symmetric property, such that the distribution of $p_{S}(U)$ is equivalent to the distribution of $p_{S^*}(PU)$ for a fixed outcome $S^*$ and a uniformly random permutation $P$, which is also equivalent to $p_{S^*}(U)$ by the translation invariance of Haar-random unitary $U$. 
Therefore, we equivalently have
\begin{align}
    \Pr_{U\sim\mathcal{H}^{M}} \left[|\bar{p}_{S^{\ast}}(U) - p_{S^*}(U)| \geq  \frac{\beta \kappa}{\binom{M}{N}}\right]  \leq  \frac{2}{\kappa}.
\end{align}

Here, given access to $\mathcal{S}_0$, using Stockmeyer's algorithm~\cite{stockmeyer1985approximation} whose complexity is in $\rm{BPP}^{\rm{NP}}$, obtaining the estimate $\tilde{p}_{S^*}(U)$ of $\bar{p}_{S^*}(U)$ satisfying 
		\begin{equation}
			\Pr\left[|\tilde{p}_{S^*}(U) - \bar{p}_{S^*}(U)| \ge \alpha \bar{p}_{S^*}(U)\right] \le \frac{1}{2^N},
		\end{equation}
in polynomial time in $N$ and $\alpha^{-1}$ is in $\rm{BPP}^{\rm{NP}^{\mathcal{S}_0}}$. 
Here, using $\E_{S \sim \mathcal{G}_{M,N}}[\bar{p}_{S}(U)] = \binom{M}{N}^{-1} \sum_{S\in\mathcal{C}_{M,N}}\bar{p}_{S}(U) \le \binom{M}{N}^{-1}$ and Markov inequality, for all $\lambda > 1$, we have
\begin{align}
    \Pr_{S \sim \mathcal{G}_{M,N}} \left[ \bar{p}_{S}(U) \ge \frac{\lambda}{\binom{M}{N}} \right] \le \frac{1}{\lambda} \quad \Rightarrow \quad \Pr_{U\sim\mathcal{H}^{M}} \left[ \bar{p}_{S^*}(U) \ge \frac{\lambda}{\binom{M}{N}} \right] \le \frac{1}{\lambda},
\end{align}
where the right-hand side comes from the same symmetry used previously.
Then, we arrive at
\begin{align}
    \Pr \left[|\tilde{p}_{S^*}(U) - \bar{p}_{S^*}(U)| \ge \frac{\alpha \lambda}{\binom{M}{N}} \right] &\le \Pr_{U\sim\mathcal{H}^{M}} \left[ \bar{p}_{S^*}(U) \ge \frac{\lambda}{\binom{M}{N}} \right] + \Pr \left[|\tilde{p}_{S^*}(U) - \bar{p}_{S^*}(U)| \ge \alpha \bar{p}_{S^*}(U)\right] \\
    &\le \frac{1}{\lambda} + \frac{1}{2^N}.
\end{align} 
Putting all together, by triangular inequality and union bound, one can probabilistically approximate $p_{S^*}(U)$ by $\tilde{p}_{S^*}(U)$ over $U$ such that
\begin{align}
    \Pr \left[|\tilde{p}_{S^*}(U) - p_{S^*}(U)| \ge \frac{\beta \kappa + \alpha \lambda}{\binom{M}{N}} \right] 
    &\le \Pr_{U\sim\mathcal{H}^{M}} \left[|\bar{p}_{S^{\ast}}(U) - p_{S^*}(U)| \geq  \frac{\beta \kappa}{\binom{M}{N}}\right]  + \Pr \left[|\tilde{p}_{S^*}(U) - \bar{p}_{S^*}(U)| \ge \frac{\alpha\lambda}{\binom{M}{N}} \right] \\
    &\le \frac{2}{\kappa} + \frac{1}{\lambda} + \frac{1}{2^N} ,
\end{align}
within complexity class $\rm{BPP}^{\rm{NP}^{\mathcal{S}_0}}$.
Now, given $\epsilon_0$ and $\delta_0$, we set $\kappa/2 = \lambda = 4/\delta_0$ and $\beta = \alpha = \epsilon_0\delta_0/{24}$. 
Then, $\beta \kappa + \alpha \lambda = 12\beta/\delta_0 = \epsilon_0/2$ and $\frac{2}{\kappa} + \frac{1}{\lambda} + \frac{1}{2^N} = \frac{1}{2}\delta_0 + \frac{1}{2^N} \le \frac{3}{4}\delta_0$.
Therefore, one can estimate $p_{S^{\ast}}(U)$ to within additive error $\frac{\epsilon_0}{2}\binom{M}{N}^{-1}$ with probability at least $1 - \frac{3}{4}\delta_0$ in complexity class $\rm{BPP}^{\rm{NP}^{\mathcal{S}_0}}$.	
This completes the proof. 

\end{proof}

\section{Series expansion of the polynomial $R_{X}(\eta)$}\label{appendix:section:series}

In this section, we establish a series expansion of the polynomial $R_{X}(\eta)$, which we have previously defined as
\begin{align}
    R_{X}(\eta) = \Haf{\begin{pmatrix}
XX^T & (1-\eta)M\tanh{r}\mathbb{I}_{N} \\
(1-\eta)M\tanh{r}\mathbb{I}_{N} & X^{*}X^{\dag}
\end{pmatrix}}    .
\end{align}
Note that, considering the input matrix to hafnian as an adjacency matrix of a graph, the hafnian of the input matrix can be interpreted as the summation of the weights of perfect matchings in the graph. 
Accordingly, we now consider a graph $G = (V,E)$ with $|V| = 2N$ vertices, corresponding to the adjacency matrix $\mathcal{A}$ given by
\begin{align}\label{adjacencymatrix}
   \mathcal{A} = \begin{pmatrix}
XX^T & (1-\eta)M\tanh{r}\mathbb{I}_{N} \\
(1-\eta)M\tanh{r}\mathbb{I}_{N} & X^{*}X^{\dag}
\end{pmatrix}  .
\end{align}
Note that the graph $G$ corresponding to the adjacency matrix $\mathcal{A}$ has a simple bipartite structure, as illustrated in Fig.~\ref{fig:graphrepresentation}. 
Specifically, the vertex set $V$ can be divided into the vertex set $V_1$ and $V_2$, where $i$th vertex in $V_1$ corresponds to the matrix index $i$ for $i \in [N]$ (similarly for $V_2$ with matrix index $N + i$ for $i \in [N]$). 
We can also divide the edge set $E$ with $E_1$, $E_2$, and $E_I$. 
Here, vertices in $V_1$ are connected to each other by edges in $E_1$, such that the adjacency matrix of the subgraph $(V_1,E_1)$ corresponds to $XX^T$ (similarly for the subgraph $(V_2, E_2)$, where now the adjacency matrix is $X^*X^{\dag}$). 
Additionally, the $i$th vertex of $V_1$ is connected with the $i$th vertex of $V_2$ by an edge with its weight $(1-\eta)M\tanh r$, for all $i \in [N]$.
We will denote the set of those $N$ number of edges as $E_I$.

\begin{figure*}[bt]
\includegraphics[width=0.95\linewidth]{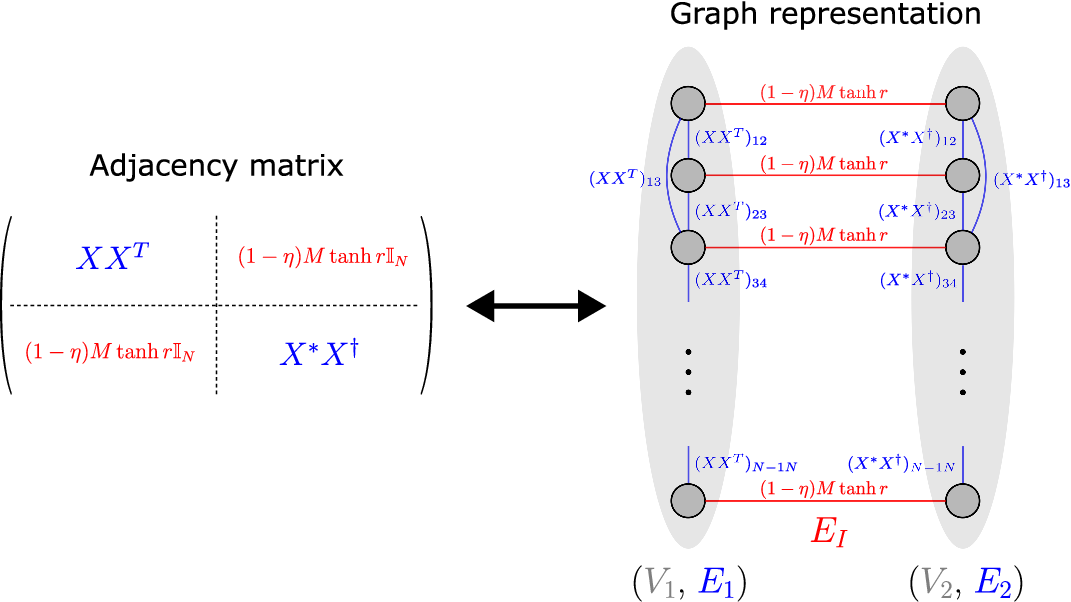}
\caption{Graph representation of the $2N$ by $2N$ adjacency matrix $\mathcal{A}$ described in Eq.~\eqref{adjacencymatrix}.
The corresponding graph $G = (V,E)$ can be represented as a bipartite graph, whose vertices $V$ can be divided into the vertex sets $V_1$ and $V_2$.
Here, for each $i \in [N]$, the $i$th vertex in $V_1$ corresponds to the matrix index $i$ of $\mathcal{A}$, and the $i$th vertex in $V_2$ corresponds to the matrix index $N + i$ of $\mathcal{A}$. 
Also, the edge set $E$ can be divided into the edge sets $E_1$, $E_2$, and $E_I$. 
Here, the edges in $E_1$ and $E_2$, denoted as the blue edges in the figure, connect the vertices inside $V_1$ and $V_2$, respectively. 
Accordingly, the adjacency matrix of the subgraph $(V_1,E_1)$ and $(V_2, E_2)$ is given by $XX^T$ and $X^*X^{\dag}$, respectively. 
Finally, the edges in $E_I$, denoted as the red edges in the figure, connect the vertices between $V_1$ and $V_2$.
Specifically, the $i$th edge in $E_I$ connects the $i$th vertex of $V_1$ with $i$th vertex of $V_2$ with its weight $(1-\eta)M\tanh r$, for all $i \in [N]$.
}
\label{fig:graphrepresentation}
\end{figure*}

Using these conventions, we can divide the case of perfect matchings according to the \textit{number of edges} that belong to $E_I$. 
Since $N$ is even in our case, only an even number of edges in perfect matchings can belong to $E_I$, so let us denote this number as $2n$ for $n \in [N/2]$. 
Also, for each $n$, we can further divide the perfect matchings based on which $2n$ edges are selected from $E_I$. 
For this purpose, let us also denote a set $J \subseteq [N]$ with $|J| = N - 2n$, such that we select the edges in $E_I$ according to the elements of $[N]\setminus J$.
Then, given a specific $J$, the possible perfect matching is determined by selecting perfect matching along the $N-2n$ number of vertices corresponding to $J$ in $V_1$ (and also for $V_2$), and selecting $2n$ number of edges in $E_I$ corresponding to $[N]\setminus J$.

Let $\mu$ and $\nu$ be possible perfect matching permutations along $N-2n$ modes.
Then, given a specific $n$ and $J$, the summation of the weights of all possible perfect matchings can be expressed as  
\begin{align}
\sum_{\mu} \sum_{\nu} (1-\eta)^{2n} M^{2n} \tanh^{2n}{r} \prod_{j=1}^{N/2 - n}[(XX^T)_J]_{\mu(2j-1),\mu(2j)} [(X^*X^{\dag})_J]_{\nu(2j-1),\nu(2j)} \\
= (1-\eta)^{2n} M^{2n} \tanh^{2n}{r} |\Haf((XX^T)_{J})|^2,
\end{align}
where $(XX^T)_{J}$ is a $|J|$ by $|J|$ matrix defined by taking rows and columns of $XX^T$ according to $J$.
Exceptionally, for $n = N/2$ that makes $J$ an empty set, the weight of perfect matching is $(1-\eta)^{N}M^{N}\tanh^{N}{r}$.

Therefore, as $R_{X}(\eta)$ is a summation of the above term for all possible $n$ and $J$, $R_{X}(\eta)$ can finally be represented as
\begin{align}
R_{X}(\eta) = \sum_{n=0}^{N/2 - 1} \sum_{\substack{J \subseteq [N]\\|J|= N -2n}} (1-\eta)^{2n}M^{2n}\tanh^{2n}{r} \left|\Haf\left((XX^{T})_{J}\right)\right|^2
+ (1-\eta)^{N}\tanh^{N}{r}M^N ,
\end{align}
which now becomes a series expansion of $(1-\eta)$. 

\section{Proof of Lemma~\ref{lemma:distancebytrunctionissmall}}\label{appendix: section: proof of lemma}

In the section, we provide a proof of Lemma~\ref{lemma:distancebytrunctionissmall}.
More specifically, we find that for $X \sim \mathcal{N}(0,1)_{\mathbb{C}}^{N \times M}$, $\sinh^2{r} = \frac{N}{\eta M}$, $\eta \ge \eta_{\rm{min}} = 1 - \frac{k_{\rm{max}}}{N + k_{\rm{max}}}$, and $l > k_{\rm{max}}$, $R_{X}^{(l)}(\eta)$ is close to $R_{X}(\eta)$ over a large portion of $X$ such that 
\begin{align}
\Pr_X\left[ |R_{X}(\eta) - R_{X}^{(l)}(\eta)| > \epsilon_1\binom{\frac{M}{2} + \frac{N}{2} - 1}{\frac{N}{2}}N!\right] < \delta,
\end{align}
for $\epsilon_1$ given by
\begin{align}\label{appendix:epsilon1}
    \epsilon_1 = \frac{Nk_{\rm{max}}^{l}}{2\delta l!}.
\end{align}

\begin{proof}[Proof of Lemma~\ref{lemma:distancebytrunctionissmall}]

We begin by introducing the average value of the squared hafnian $|\Haf(XX^T)|^2$ over the random Gaussian matrix $X \sim \mathcal{N}(0,1)_{\mathbb{C}}^{N \times M}$.
Specifically, by the previous analyses in~\cite{ehrenberg2025transition, ehrenberg2025second}, the first moment of the squared hafnian of the Gaussian matrix product is given by
\begin{align}
    \E_{X \sim \mathcal{N}(0,1)_{\mathbb{C}}^{N \times M}}\left[ |\Haf(XX^T)|^2 \right] &=  \frac{(N-1)!!(M+N-2)!!}{(M-2)!!} \\
    &= \frac{N!}{2^{\frac{N}{2}}\left(\frac{N}{2}\right)!}\frac{2^{\frac{M}{2}+\frac{N}{2}-1}\left(\frac{M}{2} + \frac{N}{2} - 1\right)!}{2^{\frac{M}{2}-1}\left(\frac{M}{2} - 1\right)!} \\
    &= \binom{\frac{M}{2} + \frac{N}{2} - 1}{\frac{N}{2}}N! .
\end{align}
Here, we can generalize this result to $|\Haf((XX^T)_{J})|^2$ case for any $J \subseteq [N]$, since $(XX^T)_{J}$ is equivalent to $YY^{T}$ for $Y$ being submatrix of $X$ (by taking rows of $X$ according to $J$), which can also be interpreted as a random Gaussian matrix $Y \sim \mathcal{N}(0,1)_{\mathbb{C}}^{|J| \times M}$ with its size reduced. 
Hence, for any $J \subseteq [N]$ with $|J| = N - 2n$, we have 
\begin{align}
\E_{X}\left[ \left|\Haf\left[(XX^{T})_{J}\right]\right|^2 \right] =\binom{\frac{M}{2} + \frac{N}{2} - n - 1}{\frac{N}{2} - n}(N - 2n)! .
\end{align}
Using the above results, we now have the following inequalities:  
\begin{align}
\frac{\E_X\left[ (1-\eta)^{2n}\tanh^{2n}{r}M^{2n}\sum_{\substack{J \subseteq [N]\\|J|= N -2n}}\left|\Haf\left[(XX^{T})_{J}\right]\right|^2 \right]}{\binom{\frac{M}{2} + \frac{N}{2} - 1}{\frac{N}{2}}N!} 
&\le  \frac{(1-\eta)^{2n}M^{n}N^{n}}{\eta^n}\binom{N}{2n}\frac{\binom{\frac{M}{2} + \frac{N}{2} - n - 1}{\frac{N}{2} - n}(N - 2n)!}{\binom{\frac{M}{2} + \frac{N}{2} - 1}{\frac{N}{2}}N!} \\ 
&= \frac{(1-\eta)^{2n}M^{n}N^{n}}{\eta^n}\frac{\left(\frac{N}{2}\right)!\left(\frac{M}{2} + \frac{N}{2} - n - 1\right)!}{(2n)!\left(\frac{N}{2}-n\right)!\left(\frac{M}{2} + \frac{N}{2} - 1\right)!} \\
&\le \frac{(1-\eta)^{2n}M^{n}N^{n}}{\eta^n} \frac{1}{(2n)!}\frac{\left(\frac{N}{2}\right)^n}{\left( \frac{M}{2} + \frac{N}{2} - n \right)^{n}}   \\
&\le \frac{(1-\eta)^{2n}M^{n}N^{n}}{\eta^n} \frac{\left(\frac{N}{2}\right)^n}{(2n)!\left(\frac{M}{2}\right)^n}\\
&= \frac{(1-\eta)^{2n}N^{2n}}{(2n)!\eta^n} \\
&\le \frac{(1-\eta_{\rm{min}})^{2n}N^{2n}}{{(2n)!\eta_{\rm{min}}}^n} \\
&\leq \frac{k_{\rm{max}}^{2n}}{(2n)!}, \label{asdf}
\end{align}
where we used $\tanh^2{r} = \frac{N}{\eta M + N} \le \frac{N}{\eta M}$ in the first inequality.
Here, note that the right-hand side term in Eq.~\eqref{asdf} monotonically decreases with increasing $n$ given that $n > k_{\rm{max}}/2$. 
Similarly, we also find that 
\begin{align}
\frac{(1-\eta)^{N}\tanh^{N}{r}M^N}{\binom{\frac{M}{2} + \frac{N}{2} - 1}{\frac{N}{2}}N!}  
&\le  \frac{(1-\eta)^{N}M^{\frac{M}{2}}N^{\frac{N}{2}}\left(\frac{M}{2}-1\right)!\left(\frac{N}{2}\right)!}{\eta^{\frac{N}{2}}N!\left(\frac{M}{2} + \frac{N}{2} - 1\right)!} 
\le  \frac{(1-\eta)^{N}M^{\frac{M}{2}}N^{\frac{N}{2}}\left(\frac{N}{2}\right)^{\frac{N}{2}}}{\eta^{\frac{N}{2}}N!\left(\frac{M}{2}\right)^{\frac{N}{2}}} \\
&\le \frac{(1-\eta_{\rm{min}})^{N}N^N}{{\eta_{\rm{min}}}^{\frac{N}{2}}N!} 
\leq \frac{k_{\rm{max}}^{N}}{N!} .
\end{align}

Now, we find the average value of the summation of truncated terms (i.e.,  $R_{X}(\eta) - R_{X}^{(l)}(\eta)$) over $X \sim \mathcal{N}(0,1)_{\mathbb{C}}^{N \times M}$.
Specifically, given $l > k_{\rm{max}}$, the summation terms over $n \geq l/2$ monotonically decreases with increasing $n$, and accordingly, we have
\begin{align}
&\E_X\left[ R_{X}(\eta) - R_{X}^{(l)}(\eta)\right] \nonumber \\
&= \sum_{n=l/2+1}^{N/2 - 1} \E_{X}\left[ (1-\eta)^{2n}M^{2n}\tanh^{2n}{r}  \sum_{\substack{J \subseteq [N]\\|J|= N -2n}} \left|\Haf\left((XX^{T})_{J}\right)\right|^2 \right] + (1-\eta)^{N}\tanh^{N}{r}M^N \\
&\le \sum_{n=l/2+1}^{N/2} \frac{k_{\rm{max}}^{2n}}{(2n)!}{\binom{\frac{M}{2} + \frac{N}{2} - 1}{\frac{N}{2}}N!} \\
&\le  \frac{N}{2}\frac{k_{\rm{max}}^{l}}{l!}{\binom{\frac{M}{2} + \frac{N}{2} - 1}{\frac{N}{2}}N!} .
\end{align}
Finally, combining the above results and Markov's inequality, given the size of $\epsilon_1$ as in Eq.~\eqref{appendix:epsilon1}, we have
\begin{align}
\Pr_X\left[ |R_{X}(\eta) - R_{X}^{(l)}(\eta)| > \epsilon_1\binom{\frac{M}{2} + \frac{N}{2} - 1}{\frac{N}{2}}N!\right] &< \frac{1}{\epsilon_1}\frac{\E_X\left[ R_{X}(\eta) - R_{X}^{(l)}(\eta)\right] }{\binom{\frac{M}{2} + \frac{N}{2} - 1}{\frac{N}{2}}N!} \\
&\le \frac{1}{\epsilon_1} \frac{Nk_{\rm{max}}^{l}}{2l!}\\
&= \delta ,
\end{align}
concluding the proof. 
    
\end{proof}

\section{Examination on the $Q(\eta)$ factor}\label{appendix:section:efficient}

In this section, we deal with the multiplicative factor $Q(\eta)$ that frequently appears throughout our work, which we have previously defined in the main text as
\begin{align}
Q(\eta) = (1-(1-\eta)^2\tanh^2{r})^{\frac{M}{2} + N} {}_2F_1\left(\frac{M + N}{2}, \frac{N + 1}{2}; \frac{1}{2};(1-\eta)^2\tanh^2r\right).
\end{align}
Specifically, in the following, we examine whether $Q(\eta)$ is efficiently computable up to the desired imprecision level, and identify the upper bound of $Q(\eta)$ in terms of $N$ and $\eta$.

\subsection{Efficient computability of $Q(\eta)$}

We first discuss whether $Q(\eta)$ is efficiently computable to within a desired imprecision level, because one might wonder that the infinite sum in the hypergeometric function ${}_2F_1$ requires extensive computational costs to compute to within a desired imprecision level. 
Here, we only consider the hypergeometric function ${}_2F_1$, because the multiplicative prefactor $(1-(1-\eta)^2\tanh^2{r})^{\frac{M}{2} + N}$ at the front is efficiently computable. 
In short, we find that even a brute-force computation of the polynomial number of summation terms in ${}_2F_1$ is sufficient for the efficient computation of ${}_2F_1$ to within the desired imprecision level, i.e., $e^{-\text{poly}(N)}$ for any desired $\text{poly}(N)$.

To examine whether the hypergeometric function is efficiently computable, we investigate the size of the summation term on the right-hand side of Eq.~\eqref{appendix:hypergeometric} when $n$ is sufficiently larger than $M$. 
Specifically, when $n = \text{poly}(N) \gg M$, observe that 
\begin{align}
\frac{(\frac{M + N}{2})_n (\frac{N + 1}{2})_n}{(\frac{1}{2})_n n!}  &= \frac{(M+N+2n-2)!!(N+2n-1)!!}{(2n)!(M+N-2)!!(N-1)!!} \\
&\le \frac{2^{2n}(\frac{N}{2}+n)!(\frac{M+N}{2}+n)!}{(2n)!(\frac{N}{2})!(\frac{M+N}{2})!} \\
&< n\binom{n+\frac{N}{2}}{n}\binom{n+\frac{M+N}{2}}{n} \label{inequality:centralbinomialcoefficient} \\
&\le n\frac{(n+\frac{N}{2})^{\frac{N}{2}}}{(\frac{N}{2})!}\frac{(n+\frac{M+N}{2})^{\frac{M+N}{2}}}{(\frac{M+N}{2})!} \\
&< 2^n ,
\end{align}
where we used the lower bound of the central binomial coefficient at Eq.~\eqref{inequality:centralbinomialcoefficient}. 

Based on this understanding, we can estimate the hypergeometric function by computing the polynomial number of the summation terms on the right-hand side of Eq.~\eqref{appendix:hypergeometric}, from $n = 0$ to $n = m = \text{poly}(N)$ for sufficiently large $m$.
Then, the additive imprecision is upper bounded by 
\begin{align}
    \sum_{n = m+1}^{\infty}\frac{(\frac{M + N}{2})_n (\frac{N + 1}{2})_n}{(\frac{1}{2})_n n!} (1-\eta)^{2n}\tanh^{2n}{r} &<  \sum_{n = m+1}^{\infty}\left( \frac{2(1-\eta)^2N}{\eta M} \right)^n \\
    &< \left( \frac{2(1-\eta)^2N}{\eta M} \right)^{m+1}\left( 1 - \frac{2(1-\eta)^2N}{\eta M} \right)^{-1} ,\label{appendix:upperboundoferrorforhypergeometric}
\end{align}
where we used the fact that $\tanh^2{r} = \frac{N}{\eta M + N} < \frac{N}{\eta M}$. 
Note that the right-hand side of Eq.~\eqref{appendix:upperboundoferrorforhypergeometric} scales $e^{-\text{poly}(N)}$ which can be made arbitrarily small by setting $m$ sufficiently large polynomial. 
Therefore, $Q(\eta)$ is efficiently computable to within additive imprecision $e^{-\text{poly}(N)}$ for arbitrarily large $\text{poly}(N)$.

\subsection{Upper bound of $Q(\eta)$}

Next, we present the upper bound of $Q(\eta)$ to obtain the maximal value $Q_{\rm{max}}$ defined in Eq.~\eqref{eq: def of Qmax}.
One can easily find that $(1-(1-\eta)^2\tanh^2{r})^{\frac{M}{2} + N} = O\left(e^{-\frac{N}{2}\frac{(1-\eta)^2}{\eta}} \right)$.
Also, since the post-selection probability given in Eq.~\eqref{appendix:postselectionprobability} is smaller than unity, one simple way to find the upper bound of the hypergeometric function is
\begin{align}
    {}_2F_1 \left(\frac{M + N}{2}, \frac{N + 1}{2}; \frac{1}{2};(1-\eta)^2\tanh^2r\right) &\le \frac{\cosh^{M}r}{\eta^{N}\tanh^{N}r}\binom{\frac{M}{2} + \frac{N}{2} -1 }{\frac{N}{2}}^{-1} \\
    &\le O\left(\sqrt{N}e^{\frac{N}{2}\left(\frac{1}{\eta}-1-\log\eta\right)}\right), \label{qdf}
\end{align}
where in Eq.~\eqref{qdf} we directly used the results from Eq.~\eqref{tyty} to Eq.~\eqref{qwerty}. 
Hence, $Q(\eta)$ can be upper bounded as 
\begin{align}\label{dfasdfqw}
    Q(\eta) \le O\left(\sqrt{N}e^{\frac{N}{2}\left(\frac{1}{\eta}-1-\log\eta - \frac{(1-\eta)^2}{\eta}\right)}\right) .
\end{align}
Especially, in the low-loss regime such that $(1-\eta) \leq O(\frac{1}{\sqrt{N}})$, we have $\frac{N}{2}\left(\frac{1}{\eta}-1-\log\eta \right) = (1-\eta)N + O(1)$ from Eq.~\eqref{qqwer}, and finally the right-hand side term in Eq.~\eqref{dfasdfqw} takes the simplified form as
\begin{align}\label{appendix: bound of Q}
    Q(\eta) \le O\left(\sqrt{N}e^{(1-\eta)N}\right) .
\end{align}

We remark that there would exist a tighter bound than Eq.~\eqref{appendix: bound of Q} by an multiplicative factor $\frac{1}{\sqrt{N}}$, because we expect that the post-selection probability $\Pr[N]$ to obtain the mean photon number $N = \eta M \sinh^2{r}$ scales $\Theta(\frac{1}{\sqrt{N}})$ similarly to the ideal case without loss.
More precisely, even though we have $\Pr[N = \eta M \sinh^2{r}] \ge \Omega(\frac{1}{\sqrt{N}})$ as given in Eq.~\eqref{appendix:lowerboundofpostselectionprobability}, we also expect that $\Pr[N = \eta M \sinh^2{r}] \le O(\frac{1}{\sqrt{N}})$ for $\eta < 1$, which would lead to a tighter bound $Q(\eta) \le O\left(e^{(1-\eta)N}\right)$. 
Still, the bound in Eq.~\eqref{appendix: bound of Q} is sufficient for our hardness analysis.

\section{Total variation distance bound between the ideal and lossy GBS}\label{appendix: section: TVD between lossy GBS and ideal GBS}

Let $\rho_{\text{id}}$ be the output state of the ideal GBS (i.e., before measuring on photon number basis), and let $\rho_{\text{n}}$ be the output state of the lossy GBS after the beam splitter loss characterized by $\eta$. 
Recall that, for any measurement basis, the total variation distance between two quantum states $\rho_{\text{id}}$ and $\rho_{\text{n}}$ can be upper bounded by using the quantum fidelity~\cite{fuchs1999cryptographic} such that
\begin{align}
    \mathcal{D}(\rho_{\text{n}}, \rho_{\text{id}}) \le \sqrt{1 - F(\rho_{\text{n}}, \rho_{\text{id}})} = \sqrt{1 - F(\rho_{\text{n, in}}, \rho_{\text{id, in}})} ,
\end{align}
where we denote $\rho_{\text{id, in}}$ as the input state of the ideal GBS before going through the unitary circuit of GBS, which is a product of $M$ squeezed vacuum states in our setup (and accordingly a pure state).
Also, we denote $\rho_{\text{n, in}}$ as the input state of the lossy GBS where all photon loss is pushed onto the input (note that beam splitter loss channel commutes with linear optical networks), i.e., a product of $M$ squeezed vacuum states followed by the beam splitter loss. 
Note that, the covariance matrices (in $xp$ basis) of $\rho_{\text{id, in}}$ and $\rho_{\text{n, in}}$ are given by $\sigma_{0}$ and $\sigma_{in}$, given in Eq.~\eqref{eq: input xp covariance matrix before loss} and Eq.~\eqref{eq: input xp covariance matrix after loss}, respectively.

To proceed, quantum fidelity between two Gaussian states $\rho_{\text{n, in}}$ and $\rho_{\text{id, in}}$ with covariance matrices $\sigma_{0}$ and $\sigma_{in}$ and zero means, one of which is pure, can be written as~\cite{spedalieri2012limit}
\begin{align}\label{eq: fidelity between fin and id}
   F(\rho_{\text{n, in}}, \rho_{\text{id, in}}) = \frac{1}{\sqrt{| \sigma_{in} +\sigma_{0}|}} .
\end{align}
Here, using Eq.~\eqref{eq: input xp covariance matrix before loss} and Eq.~\eqref{eq: input xp covariance matrix after loss}, observe that
\begin{align}
    | \sigma_{in} +\sigma_{0}| &= \left\vert \begin{pmatrix}
        \frac{e^{2r}}{2} & 0 \\ 0 & \frac{e^{-2r}}{2}
    \end{pmatrix}  +  \begin{pmatrix}
\frac{\eta e^{2r} + (1-\eta)}{2}  & 0 \\
0 & \frac{\eta e^{-2r} + (1-\eta)}{2}
\end{pmatrix}\right\vert^{M} \\
    &= \left\vert\begin{pmatrix}
        e^{2r} & 0 \\ 0 & e^{-2r}
    \end{pmatrix}  +  \frac{1-\eta}{2} \begin{pmatrix}
    1 - e^{2r}  & 0 \\
    0 & 1 - e^{-2r}
\end{pmatrix}\right\vert^{M} \\
    &= \left\vert\begin{pmatrix}
        1 & 0 \\ 0 & 1
    \end{pmatrix}  +  \frac{1-\eta}{2} \begin{pmatrix}
    e^{-2r} - 1  & 0 \\
    0 & e^{2r} - 1
\end{pmatrix}\right\vert^{M} \\
    &\leq \left( 1 + \frac{1 - \eta}{4} \Tr\left[  \begin{pmatrix}
    e^{-2r} - 1  & 0 \\
    0 & e^{2r} - 1
\end{pmatrix} \right]\right)^{2M} \label{eq: AM-GM} \\
    &= \left( 1 + (1-\eta)\sinh^2 r \right)^{2M}, 
\end{align}
where we used the AM–GM inequality in Eq.~\eqref{eq: AM-GM}.
Therefore, we have a lower bound for the fidelity by
\begin{align}
    F(\rho_{\text{n, in}}, \rho_{\text{id, in}}) \geq \left( 1 + (1-\eta)\sinh^2 r \right)^{-M} \geq 1 - (1-\eta)M\sinh^2 r , 
\end{align}
where we used the fact that $(1+x/M)^{M} \le (1-x)^{-1}$ for $0\le x < 1$, and also assumed that $(1-\eta)M\sinh^2 r < 1$.

To sum up, the total variation distance is bound by $\beta < 1$ when $\eta$ satisfies
\begin{align}
    \mathcal{D}(\rho_{\text{n}}, \rho_{\text{id}}) \le \sqrt{1 - F(\rho_{\text{n, in}}, \rho_{\text{id, in}})}  \leq \sqrt{(1-\eta)M\sinh^2 r} \leq \beta ,
\end{align}
which gives the condition for $\eta$ that
\begin{align}
     (1-\eta)M\sinh^2 r \leq \beta^2 .
\end{align}
This bound satisfies our assumption $(1-\eta)M\sinh^2 r < 1$ given that $\beta < 1$, thereby concluding the proof.

\end{document}